\newcommand{\tAppA}{0}
\newcommand{\tAppB}{0}

\newif\ifnotes

\documentclass[table,dvipsnames,colorlinks=true]{lmcs}
\pdfoutput=1

\usepackage{lastpage}
\lmcsdoi{20}{2}{18}
\lmcsheading{}{\pageref{LastPage}}{}{}%
{Jul.~28,~2023}{Jun.~26,~2024}{}

\keywords{
  Continuity,
  Stateful computations,
  Intuitionism,
  Extensional Type Theory,
  Constructive Type Theory,
  Realizability,
  Theorem proving,
  Agda
} %

\usepackage[utf8]{inputenc}
\usepackage[T1]{fontenc}
\usepackage{proof}

\ifnotes
  \usepackage[nomargin,inline,draft]{fixme}
\else
  \usepackage[nomargin,inline,final]{fixme}
\fi

\usepackage{xcolor}
\usepackage{amssymb}
\usepackage{amsmath}
\usepackage{url}
\usepackage{float}

\usepackage[
 colorlinks=true,
 citecolor=MidnightBlue,
 pdfauthor={Liron Cohen and Vincent Rahli},
 pdftitle={BoxTT},
 pdfsubject={BoxTT},
 pdfkeywords={Continuity, BoxTT, Agda},
]{hyperref}
\usepackage{cleveref}

\usepackage[anythingbreaks]{breakurl}

\theoremstyle{plain}
\newtheorem{remark}[thm]{Remark}
\newtheorem{example}[thm]{Example}

\usepackage[obeyFinal]{todonotes}

\usepackage{xspace}
\usepackage{stmaryrd}

\usepackage{fontawesome5}

\Crefname{equation}{Eq.}{Eqs.}
\Crefname{figure}{Fig.}{Figs.}
\Crefname{tabular}{Tab.}{Tabs.}
\Crefname{section}{Sec.}{Secs.}
\Crefname{definition}{Def.}{Defs.}
\Crefname{defi}{Def.}{Defs.}
\Crefname{lemma}{Lem.}{Lems.}
\Crefname{lem}{Lem.}{Lems.}
\Crefname{theorem}{Thm.}{Thms.}
\Crefname{thm}{Thm.}{Thms.}
\Crefname{paragraph}{Sec.}{Secs.}
\Crefname{appendix}{Appx.}{Appxs.}
\Crefname{corollary}{Cor.}{Cors.}
\Crefname{example}{Ex.}{Exs.}
\Crefname{proposition}{Prop.}{Props.}

\definecolor{darkraspberry}{rgb}{0.53, 0.15, 0.34}
\definecolor{pansypurple}{rgb}{0.47, 0.09, 0.29}
\definecolor{palatinatepurple}{rgb}{0.41, 0.16, 0.38}
\definecolor{ppp}{rgb}{0.5, 0.0, 0.5}

\newcommand{\formalised}{{\color{NavyBlue!75!White}{\raisebox{-0.5pt}{\scalebox{0.8}{\faCog}}}}}
\newcommand{\VGIT}{https://github.com/vrahli/opentt/blob/lmcs24}
\newcommand{\flink}[1]{\href{\VGIT/#1}{\formalised}}
\newcommand{\llink}[1]{\href{\VGIT/lmcs24.lagda\#L#1}{\formalised}}

\newcommand{\intitleb}[1]{{\medskip\noindent\textbf{\textsf{#1.}}}}

\newcommand{\Hlink}[2]{\hyperref[#1]{{\color{\Hcol}{#2}}}}

\newcommand{\hide}[1]{}

\newcommand{\BM}{}%
\newcommand{\EM}{}%

\newcommand{\nat}{\mathbb{N}}

\newcommand{\META}[1]{\mathit{#1}}

\newcommand{\MEM}[1]{\mathsf{#1}}

\newcommand{\intitle}[1]{{\smallskip\noindent\textbf{\textsf{#1}}}}
\newcommand{\mytupleLEFT}{\langle}
\newcommand{\mytupleRIGHT}{\rangle}
\newcommand{\mytuple}[1]{\mytupleLEFT #1 \mytupleRIGHT}

\newcommand{\Bl}{\begin{tabular}[t]{@{}l@{}}}
\newcommand{\El}{\end{tabular}}
\newcommand{\Bi}{\begin{tabular}[t]{@{\quad}l}}
\newcommand{\Ei}{\end{tabular}}

\newcommand{\hidden}[1]{}

\newcommand{\NUPRL}[1]{\mathit{#1}}
\newcommand{\NUPRLC}[1]{\mathtt{#1}}
\newcommand{\NUPRLCS}[1]{{\boldsymbol{\mathtt{#1}}}}

\newcommand{\NproductSYMB}{\boldsymbol{\Pi}}
\newcommand{\Nproduct}[3]{\NproductSYMB#1\NUPRLC{:}#2.#3}

\newcommand{\NsumSYMB}{\boldsymbol{\Sigma}}
\newcommand{\Nsum}[3]{\NsumSYMB#1\NUPRLC{:}#2.#3}

\newcommand{\Nimplies}[2]{#1\rightarrow#2}

\newcommand{\Nfun}[2]{#2^{#1}}

\newcommand{\NUPRLarrow}[2]{#1\rightarrow#2}

\newcommand{\NUPRLfun}[2]{\NUPRLarrow{#1}{#2}}

\newcommand{\NUPRLsuba}[3]{#1[#2\backslash#3]}
\newcommand{\NUPRLsubb}[5]{#1[#2\backslash#3;#4\backslash#5]}

\newcommand{\NUPRLval}{\NUPRL{v}}
\newcommand{\NUPRLtval}{\NUPRL{vt}}

\newcommand{\NUPRLterm}{\NUPRL{t}}

\newcommand{\NUPRLnat}{{\mathbb N}}

\newcommand{\NUPRLuniverse}[1]{\mathbb{U}_{#1}}

\newcommand{\NUPRLlam}[2]{\lambda#1.#2}

\newcommand{\NUPRLfixSYMB}{\NUPRLC{fix}}
\newcommand{\NUPRLfix}[1]{\NUPRLfixSYMB(#1)}

\newcommand{\NUPRLspread}[4]{\NUPRLC{let}\ #2,#3=#1\ \NUPRLC{in}\ #4}

\newcommand{\NUPRLdecide}[5]{\NUPRLC{case}\ #1\ \NUPRLC{of}\ \NUPRLC{inl(}#2\NUPRLC{)}\Rightarrow#3\ \textbf{\texttt{|}}\ \NUPRLC{inr(}#4\NUPRLC{)}\Rightarrow#5}

\newcommand{\NUPRLnotSYMB}{\neg}
\newcommand{\NUPRLnot}[1]{\NUPRLnotSYMB#1}

\newcommand{\NUPRLapp}[2]{#1\ #2}
\newcommand{\NUPRLappb}[3]{#1\ #2\ #3}

\newcommand{\NUPRLapppar}[2]{#1(#2)}

\newcommand{\NUPRLequality}[3]{#2=#3\in#1}
\newcommand{\NUPRLequalityb}[3]{#2{=}#3\in#1}

\newcommand{\NUPRLinlSYMB}{\NUPRLC{inl}}
\newcommand{\NUPRLinrSYMB}{\NUPRLC{inr}}
\newcommand{\NUPRLinl}[1]{\NUPRLinlSYMB(#1)}
\newcommand{\NUPRLinr}[1]{\NUPRLinrSYMB(#1)}

\newcommand{\NUPRLsquashSYMB}{\NUPRLC{\downarrow}}
\newcommand{\NUPRLsquash}[1]{\NUPRLsquashSYMB#1}

\newcommand{\NUPRLunionSYMB}{\NUPRLC{+}}
\newcommand{\NUPRLunion}[2]{#1\NUPRLunionSYMB#2}

\newcommand{\NUPRLtrue}{\NUPRLC{Unit}}
\newcommand{\NUPRLfalse}{\NUPRLC{Void}}

\newcommand{\NUPRLpair}[2]{\langle#1,#2\rangle}

\newcommand{\NUPRLvoid}{\NUPRLC{Void}}

\newcommand{\NUPRLmember}[2]{#2\in#1}

\newcommand{\NUPRLset}[3]{\{#1:#2\mid#3\}}

\newcommand{\NUPRLhyp}[2]{#1:#2}

\newcommand{\NUPRLhhyp}[2]{[#1:#2]}

\newcommand{\NUPRLmetaand}[2]{#1\wedge#2}
\newcommand{\NUPRLmetaandc}[3]{#1\wedge#2\wedge#3}

\newcommand{\NUPRLcomputestoSYMB}{\Downarrow}

\newcommand{\NUPRLcomputestop}[3]{#2\Downarrow_{#1}#3}

\newcommand{\NUPRLmetainsymb}{\NUPRLCS{\in}}

\newcommand{\Ncantor}{{\cal C}}
\newcommand{\Nbaire}{{\cal B}}

\newcommand{\Nqsquash}[1]{\NqsquashSYMB#1}

\newcommand{\NstepSYMB}{\mapsto}
\newcommand{\NstepwwSYMB}[2]{\NstepSYMB^{#1}_{#2}}

\newcommand{\Nstepwwa}[4]{#1\NstepwwSYMB{#3}{#4}#2}

\newcommand{\Naxiom}{\NUPRLC{\star}}

\newcommand{\NfreshSYMB}{\NUPRLCS{\nu}}

\newcommand{\Nfresh}[2]{\NfreshSYMB#1.#2}

\definecolor{glink}{gray}{0.2}
\newcommand{\myurl}[1]{{\small\textcolor{RoyalBlue}{\url{#1}}}}

\newcommand{\NreadSYMB}{!}
\newcommand{\Nread}[1]{\NreadSYMB#1}

\newcommand{\REF}{\Hlink{ex:ref}{\MEM{Ref}}}

\newcommand{\bool}{\mathbb{B}}
\newcommand{\Mtt}{\MEM{true}}
\newcommand{\Mff}{\MEM{false}}
\newcommand{\contp}{\MEM{CONT_p}}

\renewcommand{\NUPRLtrue}{\MEM{Unit}}%
\renewcommand{\NUPRLfalse}{\MEM{Void}}%
\newcommand{\Nbool}{\MEM{Bool}}%

\renewcommand{\NUPRLnat}{\MEM{Nat}}

\newcommand{\wcp}{\texttt{WCP}}

\newcommand{\NsquashSYMB}{\downarrow}
\newcommand{\Nsquash}[1]{{\NsquashSYMB}(#1)}
\newcommand{\NassertSYMB}{\uparrow}
\newcommand{\Nassert}[1]{{\NassertSYMB}(#1)}
\newcommand{\Nassertb}[1]{{\NassertSYMB}#1}
\newcommand{\Nbtrue}{\MEM{tt}}%
\newcommand{\Nbfalse}{\MEM{ff}}%

\newcommand{\MfamSYMB}{\Hlink{sec:forcing}{\MEM{Fam}}}
\newcommand{\Mfam}[5]{\MfamSYMB_{#1}(#2,#3,#4,#5)}

\renewcommand{\myurl}[1]{{\footnotesize\textcolor{RoyalBlue}{\url{#1}}}}

\newcommand{\NpureproductSYMB}{\boldsymbol{\Pi_{p}}}
\newcommand{\Npureproduct}[3]{\NpureproductSYMB#1\NUPRLC{:}#2.#3}

\newcommand{\NpureSYMB}{\NUPRLC{Pure}}
\newcommand{\Npure}{\NpureSYMB}

\newcommand{\NisectSYMB}{\cap}
\newcommand{\Nisect}[2]{#1\NisectSYMB#2}

\newcommand{\NnonamesSYMB}{\NUPRLC{namefree}}
\newcommand{\Nnonames}[1]{\NnonamesSYMB(#1)}

\newcommand{\NnoreadsSYMB}{\NUPRLC{readfree}}
\newcommand{\Nnoreads}[1]{\NnoreadsSYMB(#1)}

\newcommand{\NnowritesSYMB}{\NUPRLC{writefree}}
\newcommand{\Nnowrites}[1]{\NnowritesSYMB(#1)}

\newcommand{\NnoreadSYMB}{\NUPRLC{NoRead}}
\newcommand{\NnowriteSYMB}{\NUPRLC{NoWrite}}
\newcommand{\Nnoreadmod}[1]{|#1|_{{\mathtt{r}}}}
\newcommand{\Nnowritemod}[1]{|#1|_{{\mathtt{w}}}}
\newcommand{\Npuremod}[1]{|#1|_{{\mathtt{c}}}}
\newcommand{\Nnoreadwritemod}[1]{|#1|_{\mathtt{r}\mathtt{w}}}

\newcommand{\NUPRLnatnrw}{\Nnoreadwritemod{\NUPRLnat}}
\newcommand{\Nboolnrw}{\Nnoreadwritemod{\Nbool}}
\newcommand{\NUPRLnatnr}{\Nnoreadmod{\NUPRLnat}}
\newcommand{\Nboolnr}{\Nnoreadmod{\Nbool}}
\newcommand{\NUPRLnatr}{\MEM{N}}

\renewcommand{\Nbaire}{{\mathfrak B}}
\renewcommand{\Ncantor}{{\mathfrak C}}

\newcommand{\NTval}{\Hlink{fig:syntax}{\MEM{Value}}}
\newcommand{\NTtype}{\Hlink{fig:syntax}{\MEM{Type}}}
\newcommand{\NTterm}{\Hlink{fig:syntax}{\MEM{Term}}}
\newcommand{\NTvar}{\MEM{Var}}

\newcommand{\NUPRLcomputestoallSYMB}{\Rrightarrow}%
\newcommand{\NUPRLcomputestoall}[3]{#2\NUPRLcomputestoallSYMB_{#1}#3}
\newcommand{\NUPRLcomputestoallbSYMB}[1]{\NUPRLcomputestoallSYMB_{!#1}}
\newcommand{\NUPRLcomputestoallb}[3]{#2\NUPRLcomputestoallbSYMB{#1}#3}

\newcommand{\metanat}[1]{\underline{#1}}

\newcommand{\Nqbaire}[1]{\Nbaire^{r}_{#1}}
\newcommand{\Nqlt}[2]{#1<^{r}#2}

\newcommand{\metatrue}{\NUPRLC{True}}

\newcommand{\myfbox}[1]{\fcolorbox{gray}{gray!20}{\mbox{#1}}}%
\newcommand{\hidebox}[1]{\fcolorbox{blue}{blue!8}{\color{blue}#1}}
\newcommand{\shidebox}[1]{{\tiny\hidebox{#1}}}

\renewcommand{\NUPRLhhyp}[2]{\fbox{$#1{:}#2$}}
\renewcommand{\NUPRLhyp}[2]{#1{:}#2}

\newcommand{\DEF}{:\mkern-2mu\equiv}

\newcommand{\Mparam}[2]{(#1:#2)}

\newcommand{\Mall}[3]{\forall\Mparam{#1}{#2}.#3}
\newcommand{\Mallp}[2]{\forall#1.#2}

\newcommand{\Mallb}[5]{\forall(#1:#2)(#3:#4).#5}

\newcommand{\Mexp}[2]{\exists#1.#2}

\newcommand{\Mfun}[2]{#1\rightarrow#2}

\newcommand{\Mor}[2]{#1\vee#2}

\newcommand{\Mand}[2]{#1\wedge#2}

\newcommand{\Mandc}[3]{#1\wedge#2\wedge#3}

\newcommand{\Mandd}[4]{#1\wedge#2\wedge#3\wedge#4}

\newcommand{\Mworld}{\Hlink{def:worlds}{\mathcal{W}}}
\newcommand{\Mwpred}[1]{{\mathcal{P}}_{#1}}

\newcommand{\Mres}{\Hlink{def:restrictions}{\MEM{Res}}}
\newcommand{\MdresSYMB}{\Hlink{def:restrictions}{\cdot\NUPRLC{d}}}
\newcommand{\Mdres}[1]{#1_{\MdresSYMB}}
\newcommand{\Mr}{\META{r}}
\newcommand{\Mrc}{\NUPRLC{r}}

\newcommand{\Mlam}[2]{\lambda#1.#2}
\newcommand{\Mexw}[2]{\exists^{\sqsubseteq}_{#1}(#2)}
\newcommand{\Mexww}[3]{\exists^{\sqsubseteq}_{#1}(#2.#3)}
\newcommand{\Mallw}[2]{\forall^{\sqsubseteq}_{#1}(#2)}
\newcommand{\Mallww}[3]{\forall^{\sqsubseteq}_{#1}(#2.#3)}
\newcommand{\Mallwwp}[3]{\forall^{\sqsubseteq}_{#1}\left(#2.#3\right)}

\newcommand{\MextSYMB}{\Hlink{def:worlds}{\sqsubseteq}}
\newcommand{\MextpSYMB}{\Hlink{def:worlds}{\sqsupseteq}}
\newcommand{\Mext}[2]{#1\mathrel{\MextSYMB}#2}
\newcommand{\Mextp}[2]{#1\mathrel{\MextpSYMB}#2}
\newcommand{\Mprop}{{\mathbb P}}

\newcommand{\Mw}{\META{w}}
\newcommand{\MbarmodSYMB}{\Box}
\newcommand{\MbarmodM}[1]{\MbarmodSYMB_{#1}}
\newcommand{\Mbarmod}[2]{\MbarmodM{#1}#2}
\newcommand{\Mbarmodw}[3]{\MbarmodM{#1}(#2.#3)}

\newcommand{\Mwplift}[2]{#2} %
\newcommand{\MchainSYMB}{\Hlink{def:chains}{\MEM{chain}}}
\newcommand{\Mchain}[1]{\MchainSYMB(#1)}
\renewcommand{\Mc}{\META{c}}

\newcommand{\Meqtypes}[3]{#1\vDash#2\NUPRLCS{\scriptstyle\equiv}#3}
\newcommand{\Mistype}[2]{#1\vDash\NUPRLC{type}(#2)}
\newcommand{\Mintype}[4]{#1\vDash#2\NUPRLCS{\scriptstyle\equiv}#3\NUPRLmetainsymb#4}
\newcommand{\Mmemtype}[3]{#1\vDash#2\NUPRLmetainsymb#3}
\newcommand{\Minhtype}[2]{#1\vDash#2}

\newcommand{\McSYMB}{{\mathcal C}}

\newcommand{\MnewChoiceSYMB}{\Hlink{def:new-choices}{\MEM{\nu}\McSYMB}}%
\newcommand{\MnewChoice}[1]{\MnewChoiceSYMB(#1)}

\newcommand{\MstartNewChoiceSYMB}{\Hlink{def:new-choices}{\MEM{start\nu}\McSYMB}}
\newcommand{\MstartNewChoice}[2]{\MstartNewChoiceSYMB(#1,#2)}

\newcommand{\set}{set\xspace}

\newcommand{\Mca}{\Hlink{def:choices}{\boldsymbol{\kappa_0}}}
\newcommand{\Mcb}{\Hlink{def:choices}{\boldsymbol{\kappa_1}}}
\newcommand{\Mcn}{\delta}%
\newcommand{\Mcname}{\Hlink{def:choice-name}{{\mathcal N}}}
\newcommand{\Mctype}{\Hlink{def:choices}{\McSYMB}}

\newcommand{\Mctot}[1]{#1}%
\newcommand{\MgetchoiceSYMB}{\Hlink{def:choices}{\MEM{read}}}
\newcommand{\Mgetchoice}[2]{\MgetchoiceSYMB(#1,#2)}
\newcommand{\Mgetchoiceb}[3]{\MgetchoiceSYMB(#1,#2,#3)}
\newcommand{\MchooseSYMB}{\Hlink{def:choosing}{\MEM{write}}}
\newcommand{\Mchoose}[3]{\MchooseSYMB(#1,#2,#3)}

\newcommand{\Mapp}[2]{#1\ #2}
\newcommand{\Mapppar}[2]{#1(#2)}
\newcommand{\McompatibleSYMB}{\Hlink{def:compatibility}{\MEM{comp}}}
\newcommand{\Mcompatible}[3]{\McompatibleSYMB(#1,#2,#3)}
\newcommand{\Mch}{c}
\newcommand{\modaa}{\Hlink{def:mod-props}{\MbarmodSYMB_{\boldsymbol{1}}}}
\newcommand{\modab}{\Hlink{def:mod-props}{\MbarmodSYMB_{\boldsymbol{2}}}}
\newcommand{\modad}{\Hlink{def:mod-props}{\MbarmodSYMB_{\boldsymbol{3}}}}
\newcommand{\modae}{\Hlink{def:mod-props}{\MbarmodSYMB_{\boldsymbol{4}}}}
\newcommand{\modaf}{\Hlink{def:mod-props}{\MbarmodSYMB_{\boldsymbol{5}}}}

\renewcommand{\NUPRLequality}[3]{#2\NUPRLC{=}#3\NUPRLC{\in}#1}
\renewcommand{\NUPRLequalityb}[3]{(\NUPRLequality{#1}{#2}{#3})}
\renewcommand{\NUPRLmember}[2]{#2\NUPRLC{\in}#1}

\newcommand{\NchooseSYMB}{\coloneqq}%
\newcommand{\Nchoose}[2]{#1\mathrel{\NchooseSYMB}#2}%
\newcommand{\Nsq}[2]{#1\NUPRLC{;}#2}
\newcommand{\NupdSYMB}{\Hlink{def:continuity+realizer}{\NUPRLC{upd}}}
\newcommand{\Nupd}[2]{\NupdSYMB(#1,#2)}
\newcommand{\NforceSYMB}{\Hlink{def:force-cbv}{\NUPRLC{cbv}}}
\newcommand{\Nforce}[1]{\NforceSYMB(#1)}
\newcommand{\Nlet}[3]{\NUPRLC{let}\ #1\,\,{=}\,\,#2\ \NUPRLC{in}\ #3}
\newcommand{\Nlt}[2]{#1\ <\ #2}
\newcommand{\Nite}[3]{\NUPRLC{if}\ #1\ \NUPRLC{then}\ #2\ \NUPRLC{else}\ #3}
\newcommand{\Niflt}[4]{\Nite{\Nlt{#1}{#2}}{#3}{#4}}

\newcommand{\Nsub}[2]{#1-#2}
\newcommand{\NpredSYMB}{\MEM{pred}}
\newcommand{\Npred}[1]{\NpredSYMB(#1)}
\newcommand{\NsuccSYMB}{\MEM{succ}}
\newcommand{\Nsucc}[1]{\NsuccSYMB(#1)}
\newcommand{\NnegSYMB}{\MEM{neg}}
\newcommand{\Nneg}[1]{\NnegSYMB(#1)}
\newcommand{\NiszeroSYMB}{\MEM{iszero}}
\newcommand{\Niszero}[1]{\NiszeroSYMB(#1)}
\newcommand{\NmodSYMB}{\Hlink{def:continuity+realizer}{\NUPRLC{mod}}}
\newcommand{\Nmod}[2]{\NmodSYMB(#1,#2)}

\newcommand{\Mba}{\alpha}
\newcommand{\Mbb}{\beta}

\newcommand{\bctt}[2]{\text{TT}^{#1}_{#2}\xspace}
\newcommand{\Btt}{$\bctt{\MbarmodSYMB}{\McSYMB}$\xspace}

\newcommand{\Nsequentext}[3]{#1\vdash#3:#2}
\newcommand{\Nsequent}[2]{#1\vdash#2}

\newcommand{\Pextendable}{\Hlink{def:new-choices}{extendable}\xspace}

\newcommand{\Pextendability}{\Hlink{def:new-choices}{extendability}\xspace}

\newcommand{\PExtendability}{Extendability\xspace}

\newcommand{\PMutability}{Mutability\xspace}
\newcommand{\Pmutable}{\Hlink{def:choosing}{mutable}\xspace}

\newcommand{\Pmutability}{\Hlink{def:choosing}{mutability}\xspace}

\newcommand{\PCompatibility}{Compatibility\xspace}
\newcommand{\Pcompatible}{\Hlink{def:compatibility}{compatible}\xspace}

\newcommand{\Pnontrivial}{\Hlink{def:non-trivial-choices}{non-trivial}\xspace}

\renewcommand{\NUPRLcomputestoSYMB}{\Mapsto}
\renewcommand{\NUPRLcomputestop}[3]{#2\NUPRLcomputestoSYMB_{#1}#3}
\newcommand{\NUPRLcomputestopp}[4]{#3\NUPRLcomputestoSYMB^{#1}_{#2}#4}

\newcommand{\MsimdiffSYMB}{\Hlink{def:sim1}{\sim}}
\newcommand{\Msimdiff}[5]{#1\mathrel{\MsimdiffSYMB_{#3,#4,#5}}#2}
\newcommand{\MsimforceSYMB}{\Hlink{def:sim2}{\approx}}
\newcommand{\Msimforce}[5]{#1\mathrel{\MsimforceSYMB_{#3,#4,#5}}#2}
\newcommand{\MupdtermSYMB}{\Hlink{def:updterm}{\NUPRLC{Upd}}}
\newcommand{\Mupdterm}[3]{\MupdtermSYMB_{#2,#3}(#1)}

\newcommand{\assa}{\Hlink{sec:assumptions}{\text{Asm}_1}}
\newcommand{\assb}{\Hlink{sec:assumptions}{\text{Asm}_2}}
\newcommand{\assc}{\Hlink{sec:assumptions}{\text{Asm}_3}}

\renewcommand{\Nqsquash}[1]{\|#1\|}

\newcommand{\NnatrecSYMB}{\MEM{natrec}}
\newcommand{\Nnatrec}[3]{\NnatrecSYMB(#1,#2,#3)}

\newcommand{\McovSYMB}{\triangleleft}
\newcommand{\Mcov}[2]{#1\McovSYMB#2}
\newcommand{\Mcovn}[3]{#2\McovSYMB_{#1}#3}
\newcommand{\Mopen}{o}

\fxusetheme{color}
\fxuseenvlayout{color}

\definecolor{vrpink}{RGB}{255,0,127}
\definecolor{vrblue}{RGB}{30,144,255}
\definecolor{vrolive}{RGB}{85,107,47}
\definecolor{vrroyalblue}{RGB}{65,105,225}
\definecolor{vrlpink}{RGB}{255,192,203}

\definecolor{dgreen}{RGB}{0,100,0}
\definecolor{dred}{RGB}{139,0,0}

\FXRegisterAuthor{vr}{avr}{\color{vrpink}[vince]}
\FXRegisterAuthor{lc}{alc}{\color{vrroyalblue}[liron]}

\setlength{\fboxsep}{1pt}

\floatstyle{ruled}
\restylefloat{figure}

\setlength{\tabcolsep}{1.5pt}
\setlength{\arraycolsep}{1.5pt}

\begin{document}

\title[\Btt: a Family of Effectful, Extensional Type Theories]{\Btt: a Family of Extensional Type Theories with Effectful Realizers of Continuity}

\thanks{This research was partially supported by Grant No. 2020145
  from the United States-Israel Binational Science Foundation (BSF)}

\author[L.~Cohen]{Liron Cohen\lmcsorcid{0000-0002-6608-3000}}[a]
\author[V.~Rahli]{Vincent Rahli\lmcsorcid{0000-0002-5914-8224}}[b]

\address{Ben-Gurion University, Israel}
\email{cliron@cs.bgu.ac.il}

\address{University of Birmingham, UK}
\email{V.Rahli@bham.ac.uk}

\begin{abstract}
\Btt is a generic family of effectful, extensional type theories with a forcing interpretation parameterized by modalities. 
  This paper identifies a subclass of \Btt theories that internally realizes continuity principles through stateful computations, such as reference cells.
  The principle of continuity is a seminal property that holds for a
  number of intuitionistic theories such as System T. 
  Roughly speaking, it states that functions on real numbers only need
  approximations of these numbers to compute. 
  Generally, continuity
  principles have been justified using semantical arguments, but it is
  known that the modulus of continuity of functions can be computed
  using effectful computations such as exceptions or reference cells.
In this paper, the modulus of
  continuity of the functionals on the Baire space is directly
  computed using the stateful computations enabled internally in the theory.
\end{abstract}

\maketitle

\hide{
}

\section{Introduction}
\label{sec:introduction}

The framework \Btt~\cite{Cohen+Rahli:fscd:2022} is a generic family of
effectful, extensional type theories with a forcing interpretation
parameterized by modalities. More concretely, \Btt uses a general
possible-worlds forcing interpretation parameterized by an abstract
modality $\MbarmodSYMB$, which, in turn, can be instantiated with
simple covering relations, leading to a general sheaf model. In
addition, \Btt is parameterized by a type of time-progressing choice
operators ~$\Mctype$, enabling support for internal effectful
computations.
\Btt is particularly suitable for studying effectful theories, and
indeed, \Btt was called an ``unprejudiced'' type theory since these
parameters can be instantiated to obtain theories that are either
``agnostic'', i.e., compatible with classical reasoning (in the sense
that classical axioms, such as the Law of Excluded Middle, can be
validated), or that are ``intuitionistic'', i.e., incompatible with
classical reasoning (in the sense that classical axioms can be proven
false).

This paper uses the \Btt framework to reason about the continuity
principle which is a seminal property in intuitionistic theories which
contradicts classical mathematics but is generally accepted by
constructivists.
Roughly speaking, the principle states that functions on real numbers
only need approximations of these numbers to compute.
Brouwer, in particular, assumed his so-called \emph{continuity
principle for numbers} to derive that all real-valued functions on the
unit interval are uniformly
continuous~\cite{Kleene+Vesley:1965,Dummett:1977,Beeson:1985,Bridges+Richman:1987,Troelstra+VanDalen:1988}.
The continuity principle for numbers, sometimes referred to as the
weak continuity principle, states that all functions on the Baire
space (i.e., $\Nbaire\DEF\NUPRLfun{\NUPRLnat}{\NUPRLnat}$, the set of
infinite sequences of numbers) have a modulus of continuity. More
concretely, given a function $F$ of type
$\NUPRLfun{\Nbaire}{\NUPRLnat}$ and a function $\alpha$ of
type~$\Nbaire$, the principle states that $\NUPRLapppar{F}{\alpha}$
can only depend on an initial segment of $\alpha$, and the length of
the smallest such segment is the modulus of continuity of $F$ at
$\alpha$. This is standardly formalized as follows, where
$\Nbaire_{n}\DEF\NUPRLfun{\NUPRLset{x}{\NUPRLnat}{x<n}}{\NUPRLnat}$ is
the set of finite sequences of numbers of length $n$:
$$\wcp
=\Nproduct
{F}
{\NUPRLfun{\Nbaire}{\NUPRLnat}}
{\Nproduct
  {\alpha}
  {\Nbaire}
  {\Nqsquash
    {\Nsum
      {n}
      {\NUPRLnat}
      {\Nproduct
        {\beta}
        {\Nbaire}
        {\Nimplies
          {(\NUPRLequality{\Nbaire_n}{\alpha}{\beta})}
          {(\NUPRLequality{\NUPRLnat}{\NUPRLapppar{F}{\alpha}}{\NUPRLapppar{F}{\beta}})}
        }
      }
    }
  }
}$$

A number of theories have been shown to satisfy Brouwer's continuity
principle, or uniform variants, such as N-HA$^{\omega}$ by
Troelstra~\cite[p.158]{Troelstra:1973}, MLTT by Coquand and
Jaber~\cite{Coquand+Jaber:2010,Coquand+Jaber:2012}, System~T by
Escard\'o~\cite{Escardo:2013}, CTT by Rahli and
Bickford~\cite{Rahli+Bickford:cpp:2016}, BTT by Baillon, Mahboubi and
Pedrot~\cite{Baillon+Mahboubi+Pedrot:csl:2022}, to cite only a few
(see \Cref{sec:related-work} for further details).
These proofs often rely on a semantical forcing-based
approach~\cite{Coquand+Jaber:2010,Coquand+Jaber:2012}, where the
forcing conditions capture the amount of information needed when
applying a function to a sequence in the Baire space, or through
suitable models that internalize (C-Spaces in~\cite{Xu+Escardo:2013})
or exhibit continuous behavior (e.g., dialogue trees
in~\cite{Escardo:2013,Baillon+Mahboubi+Pedrot:csl:2022}).

Not only can functions on the Baire space be proved to be continuous,
but using effectful computations, as for example described
in~\cite{Longley:1999}, one can \emph{compute} the modulus of continuity of
such a function.
However, as shown for example by Kreisel~\cite[p.154]{Kreisel:1962},
Troelstra~\cite[Thm.IIA]{Troelstra:1977b}, and Escard\'o and
Xu~\cite{Escardo+Xu:2015,Xu:phd:2015}, continuity is not an
extensional property in the sense that two equal functions might have
different moduli of continuity.
Therefore, to realize continuity, the existence of a modulus of continuity has to be
truncated as explained, e.g.,
in~\cite{Escardo+Xu:2015,Xu:phd:2015,Rahli+Bickford:cpp:2016,Rahli+Bickford:mscs:2017},
which is what the $\Nqsquash{\_}$ operator achieves in $\wcp$.
Following the effectful approach, continuity was shown to be realizable
in~\cite{Rahli+Bickford:cpp:2016,Rahli+Bickford:mscs:2017} using
exceptions.

Instead of using exceptions, a more straightforward way to compute
the modulus of continuity of a function on the Baire space is to use
reference cells. This was explained, e.g., in~\cite{Longley:1999},
where the use of references can be seen as the programming
counterparts of the more logical forcing conditions.
The computation using references is more efficient than when
  using exceptions as it allows computing the modulus of continuity of
  a function $F$ at a point $\alpha$ simply by executing $F$ on
  $\alpha$, while recording the highest argument that $\alpha$ is
  applied to, while using exceptions requires
  repeatedly searching for the modulus of continuity.

Following this line of work, in this paper we show how to use stateful
computations to realize a continuity principle. This allows deriving
constructive type theories that include continuity axioms where the
modulus of continuity is internalized in the sense that it is computed
by an expression of the underlying programming language.
Concretely, we do so for \Btt, which is
presented in more details in \Cref{sec:background}.
More precisely, we prove in this paper that all \Btt functions are
continuous for some instances of~$\MbarmodSYMB$ and~$\Mctype$: namely
for ``non-empty'' equality modalities, and reference-like stateful
choice operators.
Our proof is for a variant of the weak continuity principle (see
\Cref{def:continuity+realizer}), which we show to be inhabited by a
program that relies on a choice operator to keep track of the
modulus of continuity of a given function, following Longley's
method~\cite{Longley:1999}.
This variant is restricted to ``pure'' (i.e., without side effects)
functions $F$, $\alpha$, and $\beta$, and \Cref{sec:purity} discusses
issues arising with impure functions.%

\intitle{Roadmap.}
After presenting in \Cref{sec:background} the main aspects of \Btt that
are relevant to the results presented in this paper,
\Cref{sec:continuity-proof} validates a continuity principle using
stateful computations.
One key contribution of this paper, discussed in~\Cref{sec:btt-inst},
is the fact that \Btt allows computations to modify the current world,
which is accounted for in its forcing interpretation.
Some consequences of this fact are further discussed in
\Cref{sec:principles}.
Another key contribution, discussed in \Cref{sec:continuity-proof}, is
the internalization of the modulus of continuity of functions, in the
sense that it can be computed by a \Btt expression and used to
validate the continuity principle.
Finally, \Cref{sec:related-work} concludes and discusses the related work on continuity.

\section{\texorpdfstring{\Btt}{BoxTT} : Syntax \& Semantics}
\label{sec:background}
\label{sec:btt-inst}

This section presents \Btt, a family of type theories introduced
in~\cite{Cohen+Rahli:fscd:2022}, which is parameterized by a choice
operator~$\Mctype$, and a metatheoretical modality $\MbarmodSYMB$,
which allows typing the choice operator.
The choice operators are time-progressing elements that we will in
particular instantiate with references.
\Cref{sec:btt-inst} carves out a sub-family for which we can validate
computationally relevant continuity rules as shown in
\Cref{sec:continuity-proof}.

The version presented here extends the one introduced
in~\cite{Cohen+Rahli:fscd:2022} in particular with the following
components, which are formally defined next.%
\begin{itemize}
\item An operator that allows making a choice
  ($\Nchoose{t_1}{t_2}$). Computations are performed against worlds
  (see \Cref{sec:worlds}), and~\cite{Cohen+Rahli:fscd:2022} already
  provided computations to ``read'' choices from a world. However,
  even though~\cite{Cohen+Rahli:fscd:2022} included metatheoretical
  computations to update a world in the form of the \Pmutability
  requirement presented in \cref{def:choosing}, it did not include
  corresponding object computations. Doing so has far-reaching
  consequences. We generalize \Btt's semantics accordingly, in effect
  internalizing the \Pmutability requirement.
\item An operator to generate a ``fresh'' choice name
  ($\Nfresh{x}{t}$). While the version of \Btt presented
  in~\cite{Cohen+Rahli:fscd:2022} provided metatheoretical
  computations to generate new choice names in the form of the
  \Pextendability requirement presented in \Cref{def:new-choices}, it
  did not provide corresponding object computations. We remedy this
  here by extending \Btt with a corresponding computation, which
  internalizes the \Pextendability requirement.%
\item A type that states the ``purity'' of an expression, i.e., that
  the expression has no side effects. This will allow us to formalize
  the variant of the continuity principle we validate.
  \Cref{sec:purity} provides further details.
\end{itemize}
Moreover, the version presented here differs from the one presented
in~\cite{Cohen+Rahli:csl:2023} as follows:%
\begin{itemize}
\item \Cref{sec:background} contains further details regarding \Btt,
  such as examples illustrating how effectful programs behave and are
  given meaning through \Btt types, as well as a discussion of \Btt's
  inference rules.
\item The way \Btt captures effects is simpler and more uniform. The
  type theory is more uniform in the sense that types are now impure
  by default and \Btt provides modalities to capture different levels
  of purity, as opposed to~\cite{Cohen+Rahli:csl:2023} where types
  were what is characterized as ``write-only'' in
  \Cref{sec:different-kinds-of-effects}, and the theory provided
  modalities to both make types more pure and more impure (through a
  complex ``time truncation'' type operator).
  Furthermore, the semantics of these new modalities (see
  \Cref{sec:forcing}) is simpler compared to the semantics of the
  ``time truncation'' operator used in~\cite{Cohen+Rahli:csl:2023},
  which was used to turn a ``write-only'' type into a ``read \&
  write'' type (see \Cref{sec:different-kinds-of-effects}).
\item \Cref{sec:principles} discusses how extending \Btt with
  computations to update the current world impacts validity results
  presented in~\cite{Cohen+Rahli:fscd:2022} of standard axioms such as
  Markov's Principle.
\end{itemize}

\subsection{Metatheory}
Our metatheory is Agda's type theory~\cite{Agda}.
The results presented in this paper have been formalized in Agda, and
the formalization is available
here:~\myurl{\VGIT}. We will use the
symbol \formalised{} to link to the corresponding definition or result
in the formalization.
We use $\forall,\exists,\wedge,\vee,\rightarrow,\neg$ in
place of Agda's logical connectives in this paper.
Agda provides an hierarchy of types annotated with universe labels
which we omit for simplicity.
Following Agda's terminology, we refer to an Agda type as a
\emph{set}, and reserve the term \emph{type} for \Btt's
types. We use $\Mprop$ as the type of sets that denote propositions;
$\nat$ for the \set of natural numbers; and $\bool$ for the
\set of Booleans $\Mtt$ and $\Mff$.
Induction-recursion is used to define the forcing interpretation
in~\Cref{sec:forcing}.
We do not discuss this further here and the interested reader is
referred to the Agda formalization of this forcing
interpretation~(\flink{forcing.lagda}) for further details.

\subsection{Worlds}
\label{sec:worlds}

To capture the time progression notion which underlines choice operators, \Btt is parameterized by a
Kripke frame~\cite{Kripke:1963a,Kripke:1965} defined as follows:
\begin{defi}[\llink{98}~Kripke Frame]
\label{def:worlds}
A Kripke frame consists of a \set of \emph{worlds} $\Mworld$ equipped
with a reflexive and transitive binary relation $\MextSYMB$.
\end{defi}

Let $\Mw$ range over $\Mworld$.
We sometimes write $\Mextp{\Mw'}{\Mw}$ for $\Mext{\Mw}{\Mw'}$.
Let $\Mwpred{\Mw}$ be the collection of predicates on world extensions, i.e., functions in
$\Mallp{\Mextp{\Mw'}{\Mw}}{\Mprop}$.
Note that due to $\MextSYMB$'s transitivity, if
$P\in\Mwpred{\Mw}$ then for every $\Mextp{\Mw'}{\Mw}$ it naturally
extends to a predicate in $\Mwpred{\Mw'}$.
We further define the following notations for quantifiers.
$\Mallw{\Mw}{P}$ states that $P\in\Mwpred{\Mw}$ is true for all
extensions of $\Mw$, i.e., $\Mapp{P}{\Mw'}$ holds in all worlds
$\Mextp{\Mw'}{\Mw}$.
$\Mexw{\Mw}{P}$ states that $P\in\Mwpred{\Mw}$ is true at an extension
of $\Mw$, i.e., $\Mapp{P}{\Mw'}$ holds for some world
$\Mextp{\Mw'}{\Mw}$.
For readability, we sometime write $\Mallww{\Mw}{\Mw'}{P}$ instead of
$\Mallw{\Mw}{\Mlam{\Mw'}{P}}$ and $\Mexww{\Mw}{\Mw'}{P}$ instead of
$\Mexw{\Mw}{\Mlam{\Mw'}{P}}$.

\subsection{\texorpdfstring{\Btt}{BoxTT}'s Syntax and Operational Semantics}
\label{sec:syntax}
\label{sec:operational-semantics}
\label{sec:choice}
\label{sec:syn-sem-fresh}

\begin{figure*}[!t]
  \begin{small}
    \begin{center}
\begin{tabular}[t]{l}
$\begin{array}{lrl@{\hspace{0.15in}}l@{\hspace{0.4in}}ll@{\hspace{0.15in}}l@{\hspace{0.4in}}ll@{\hspace{0.15in}}l}
    \NUPRLval\in\NTval
            & ::=  & \NUPRLtval & \mbox{(type)}
            & \mid & \NUPRLlam{\NUPRL{x}}{\NUPRLterm} & \mbox{(lambda)}
            & \mid & \Naxiom & \mbox{(constant)}\\
	    & \mid & \metanat{n} & \mbox{(number)}
            & \mid & \NUPRLinl{\NUPRLterm} & \mbox{(left injection)}
            & \multicolumn{3}{l}{\hspace*{-0.05in}\hidebox{$\mid$\hspace{0.1in}$\Mcn$\hspace{0.1in}\mbox{(choice name)}}}\\
            & \mid & \NUPRLpair{\NUPRLterm_1}{\NUPRLterm_2} & \mbox{(pair)}
            & \mid & \NUPRLinr{\NUPRLterm} & \mbox{(right injection)}
\end{array}$\vspace{0.05in}
\\
$\begin{array}{lrl@{\hspace{0.12in}}l@{\hspace{0.18in}}ll@{\hspace{0.12in}}l@{\hspace{0.18in}}ll@{\hspace{0.12in}}l}
  \NUPRLtval\in\NTtype
            & ::=  & \Nproduct{x}{t_1}{t_2} & \mbox{(product)}
	    & \mid & \NUPRLset{x}{t_1}{t_2} & \mbox{(set)}
            & \mid & \NUPRLunion{t_1}{t_2} & \mbox{(disjoint union)}\\
            & \mid & \Nsum{x}{t_1}{t_2} & \mbox{(sum)}
           & \mid & \NUPRLequality{t}{t_1}{t_2} & \mbox{(equality)}
            & \multicolumn{3}{l}{\hspace*{-0.04in}\hidebox{$\mid$\hspace{0.1in}$\NnoreadSYMB$\hspace{0.2in}\mbox{(no read)}}} \\
            & \mid & \NUPRLuniverse{i} & \mbox{(universe)}
            & \mid & \NUPRLnat & \mbox{(numbers)}
            & \multicolumn{3}{l}{\hspace*{-0.04in}\hidebox{$\mid$\hspace{0.1in}$\NnowriteSYMB$\hspace{0.2in}\mbox{(no write)}}} \\
            & \mid & \Nisect{t_1}{t_2} & \mbox{(intersection)}
            & \mid & \Nqsquash{t} & \mbox{(sub-singleton)}
            & \multicolumn{3}{l}{\hspace*{-0.04in}\hidebox{$\mid$\hspace{0.1in}$\Npure$\hspace{0.2in}\mbox{(pure)}}} \\

\end{array}$\vspace{0.05in}
\\
$\begin{array}{lrl@{\hspace{0.11in}}l@{\hspace{0.22in}}ll@{\hspace{0.11in}}l@{\hspace{0.22in}}ll@{\hspace{0.11in}}l}
  \NUPRLterm\in\NTterm
             & ::=  & \NUPRL{x} & \mbox{(variable)}
             & \mid & \NUPRLval & \mbox{(value)}
             & \multicolumn{3}{l}{\hspace*{-0.05in}\hidebox{$\mid$\hspace{0.1in}$\Nread{\myfbox{$\NUPRLterm$}}$\hspace{0.1in}\mbox{(read)}}}
             \\
             & \mid & \NUPRLapp{\myfbox{$\NUPRLterm_1$}}{\NUPRLterm_2} & \mbox{(application)}
             & \mid & \NUPRLfix{\myfbox{$\NUPRLterm$}} & \mbox{(fixpoint)}
             & \multicolumn{3}{l}{\hspace*{-0.05in}\hidebox{$\mid$\hspace{0.1in}$\Nchoose{\myfbox{$t_1$}}{t_2}$\hspace{0.1in}\mbox{(choose)}}} \\
             & \mid & \Nlet{x}{\myfbox{$t_1$}}{t_2} & \mbox{(eager)}
             & \mid & \Nsucc{\myfbox{$t$}} & \mbox{(successor)}
             & \multicolumn{3}{l}{\hspace*{-0.05in}\hidebox{$\mid$\hspace{0.1in}$\Nfresh{x}{t}$\hspace{0.1in}\mbox{(fresh)}}} \\
             & \mid &
             \multicolumn{8}{l}{\Nnatrec{\myfbox{$t_1$}}{t_2}{t_3} \hspace{0.22in} \mbox{(number recursor)}}
             \\
             & \mid & \multicolumn{8}{l}{\NUPRLspread{\myfbox{$\NUPRLterm_1$}}{\NUPRL{x}}{\NUPRL{y}}{\NUPRLterm_2} \hspace{0.22in} \mbox{(pair destructor)}}
             \\
             & \mid & \multicolumn{8}{l}{\NUPRLdecide{\myfbox{$\NUPRLterm$}}{\NUPRL{x}}{\NUPRLterm_1}{\NUPRL{y}}{\NUPRLterm_2}
             \hspace{0.22in}\mbox{(injection destructor)}}
             \\

\end{array}$
\end{tabular}

             \hrulefill

  $\begin{array}{l@{\hspace{0.4in}}l}
    \begin{array}{l@{\hspace{0.1in}}l@{\hspace{0.1in}}l}
      \NUPRLapp
          {(\NUPRLlam{\NUPRL{x}}{\NUPRL{t}})}
          {\NUPRL{u}}
     &
     \NstepwwSYMB{\shidebox{$\Mw$}}{\shidebox{$\Mw$}}
&
\NUPRLsuba{\NUPRL{t}}{\NUPRL{x}}{\NUPRL{u}}
\\

\NUPRLfix{\NUPRLval}
&
\NstepwwSYMB{\shidebox{$\Mw$}}{\shidebox{$\Mw$}}
&
\NUPRLapp{\NUPRLval}{\NUPRLfix{\NUPRLval}}
\\

\Nlet{x}{v}{t_2}
&
\NstepwwSYMB{\shidebox{$\Mw$}}{\shidebox{$\Mw$}}
&
\NUPRLsuba{t_2}{x}{v}
\\

\Nsucc{\metanat{n}}
&
\NstepwwSYMB{\shidebox{$\Mw$}}{\shidebox{$\Mw$}}
&
\metanat{1+n}
\\

\Nnatrec
    {\metanat{0}}
    {t_1}
    {t_2}
&
\NstepwwSYMB{\shidebox{$\Mw$}}{\shidebox{$\Mw$}}
&
t_1
\\

\Nnatrec
    {\metanat{1+n}}
    {t_1}
    {t_2}
&
\NstepwwSYMB{\shidebox{$\Mw$}}{\shidebox{$\Mw$}}
&
\NUPRLappb{t_2}{\metanat{n}}{\Nnatrec{\metanat{n}}{t_1}{t_2}}
\end{array}\hspace{0.35in}

&

\hspace*{-0.7in}\begin{array}{l@{\hspace{0.05in}}l@{\hspace{0.05in}}l@{\hspace{0.2in}}l}

\multicolumn{3}{l}{\NUPRLspread
    {\NUPRLpair{\NUPRLterm_1}{\NUPRLterm_2}}
    {\NUPRL{x}}
    {\NUPRL{y}}
    {\NUPRL{t}}
\NstepwwSYMB{\shidebox{$\Mw$}}{\shidebox{$\Mw$}}
\NUPRLsubb{\NUPRL{t}}{\NUPRL{x}}{\NUPRLterm_1}{\NUPRL{y}}{\NUPRLterm_2}}

\\

\NUPRLdecide
    {\NUPRLinl{\NUPRLterm}}
    {\NUPRL{x}}
    {t_1}
    {\NUPRL{y}}
    {t_2}
&
\NstepwwSYMB{\shidebox{$\Mw$}}{\shidebox{$\Mw$}}
&
\NUPRLsuba{t_1}{\NUPRL{x}}{\NUPRLterm}
\\
\NUPRLdecide
    {\NUPRLinr{\NUPRLterm}}
    {\NUPRL{x}}
    {t_1}
    {\NUPRL{y}}
    {t_2}
&
\NstepwwSYMB{\shidebox{$\Mw$}}{\shidebox{$\Mw$}}
&
\NUPRLsuba{t_2}{\NUPRL{y}}{\NUPRLterm}
\\

\multicolumn{3}{c}{\hidebox{$
  {\Nread{\Mcn}}
  \NstepwwSYMB{\Mw}{\Mw}
  \Mctot{\Mgetchoice{\Mw}{\Mcn}}$}}
\\

\multicolumn{3}{c}{\hidebox{$
{\Nchoose{\Mcn}{t}}
\NstepwwSYMB{\Mw}{\Mchoose{\Mw}{\Mcn}{t}}
\Naxiom$}}
\\

\multicolumn{3}{c}{\hidebox{$
\Nfresh{x}{t}
\NstepwwSYMB{\Mw}{\MstartNewChoice{\Mw}{\Mrc}}
\NUPRLsuba{t}{x}{\MnewChoice{\Mw}}$}}

\end{array}

\end{array}$
\end{center}
  \end{small}
\caption{Core syntax (above) and small-step operational semantics (below)}%
\label{fig:syntax}
\label{fig:operational-semantics}
\end{figure*}

\Cref{fig:syntax} presents \Btt's syntax and call-by-name operational
semantics,
where the blue boxes highlight the time-related components,
and where $x$ belongs to a \set of variables $\NTvar$.
For simplicity, numbers are considered to be primitive.
The constant $\Naxiom$ is there for convenience, and is
used in place of a term, when the particular term used is irrelevant.
The term $\Nlet{x}{t_1}{t_2}$ is a call-by-value operator that allows
evaluating $t_1$ to a value before proceeding with $t_2$.
Terms are evaluated according to the operational
semantics presented in \Cref{fig:syntax}'s lower part, which is
further discussed below.
In what follows, we use all letters as metavariables for terms.
Let $\NUPRLsuba{t}{x}{u}$ stand for the capture-avoiding substitution
of all the free occurrences of $x$ in $t$ by $u$.
In what follows, we use the following definitions, where $x$ does not
occur free in $t_2$ or $t_3$:
$$\begin{array}{l}
\begin{array}{lll}
\Nite{t_1}{t_2}{t_3} & \DEF & \NUPRLdecide{t_1}{x}{t_2}{x}{t_3}
\end{array}\vspace{0.05in}\\
\begin{array}{lll}
\Nsq{t_1}{t_2} & \DEF & \Nlet{x}{t_1}{t_2}
\\
\Nbtrue & \DEF & \NUPRLinl{\Naxiom}
\\
\Nbfalse & \DEF & \NUPRLinr{\Naxiom}
\\
\Nneg{t} & \DEF & \Nite{t}{\Nbfalse}{\Nbtrue}
\end{array}\hspace{0.6in}
\begin{array}{lll}
\Niszero{t} & \DEF & \Nnatrec{t}{\Nbtrue}{\NUPRLlam{m}{\NUPRLlam{r}{\Nbfalse}}}
\\
\Npred{t} & \DEF & \Nnatrec{t}{\metanat{0}}{\NUPRLlam{m}{\NUPRLlam{r}{m}}}
\\
\Nsub{t_1}{t_2} & \DEF & \Nnatrec{t_2}{t_1}{\NUPRLlam{m}{\NUPRLlam{r}{\Npred{r}}}}
\\
\Nlt{t_1}{t_2} & \DEF & \Nneg{\Niszero{\Nsub{t_2}{t_1}}}
\end{array}
\end{array}$$

Types are syntactic forms that are given semantics
in~\Cref{sec:forcing} via a forcing interpretation.
The type system contains standard types such as dependent products of
the form $\Nproduct{x}{t_1}{t_2}$ and dependent sums of the form
$\Nsum{x}{t_1}{t_2}$.
It also includes \emph{subsingleton} types of the form $\Nqsquash{t}$,
which turns a type $t$ into a subsingleton type that equates all
elements of $t$; and \emph{intersection} types of the form
$\Nisect{t_1}{t_2}$, which is inhabited by the inhabitants of both
$t_1$ and $t_2$.
For convenience we introduce the following definitions, where $x$ does
not occur free in $t_2$:
$$\begin{array}{lll}
\NUPRLfalse & \DEF & \NUPRLequality{\NUPRLnat}{\metanat{0}}{\metanat{1}}
\\
\NUPRLtrue & \DEF & \NUPRLequality{\NUPRLnat}{\metanat{0}}{\metanat{0}}
\\
\Nbool & \DEF & \NUPRLunion{\NUPRLtrue}{\NUPRLtrue} 
\end{array}
\qquad
\begin{array}{lll}
\NUPRLfun{t_1}{t_2} & \DEF & \Nproduct{x}{t_1}{t_2}
\\
\NUPRLnot{T} & \DEF & \NUPRLfun{T}{\NUPRLfalse} 
\end{array}
\qquad
\begin{array}{lll}
\Nassertb{T} & \DEF & \NUPRLequality{\Nbool}{T}{\Nbtrue}
\\
\NUPRLsquash{T} & \DEF & \NUPRLset{x}{\NUPRLtrue}{T}
\end{array}$$

\Cref{fig:operational-semantics}'s lower part presents \Btt's
small-step operational semantics,
where $\Nstepwwa{t_1}{t_2}{\Mw_1}{\Mw_2}$ expresses that $t_1$ reduces
to $t_2$ in one step of computation starting from the world $\Mw_1$
and possibly updating it so that the resulting world is $\Mw_2$ at the
end of the computation step.
Most computations do not modify the current world except
$\Nchoose{t_1}{t_2}$.
We omit the congruence rules that allow computing within terms such
as: if $\Nstepwwa{t_1}{t_2}{\Mw_1}{\Mw_2}$ then
$\Nstepwwa{\NUPRLapppar{t_1}{u}}{\NUPRLapppar{t_2}{u}}{\Mw_1}{\Mw_2}$,
and the boxed terms of the form $\myfbox{$t$}$ in
\Cref{fig:operational-semantics} indicate the arguments that have to
be reduced before the outer operators can be reduced.
We denote by $\NUPRLcomputestoSYMB$ the reflexive transitive closure
of~$\NstepSYMB$, i.e.,~$\NUPRLcomputestopp{\Mw_1}{\Mw_2}{a}{b}$ states
that~$a$ computes to~$b$ in~$0$ or more steps, starting from the world
$\Mw_1$ and updating it so that the resulting world is $\Mw_2$ at the
end of the computation.
We write $\NUPRLcomputestop{\Mw}{a}{b}$ for
$\Mexww{\Mw}{\Mw'}{\NUPRLcomputestopp{\Mw}{\Mw'}{a}{b}}$.
We also write $\NUPRLcomputestoall{\Mw}{a}{b}$ if~$a$ computes to~$b$
in all extensions of $\Mw$, i.e., if
$\Mallww{\Mw}{\Mw'}{\NUPRLcomputestop{\Mw'}{a}{b}}$.

\Btt includes time-progressing notions that rely on worlds to record
choices and provides operators to manipulate the choices stored in a
world, which we now recall. Choices are referred to through their
names. A concrete example of such choices are reference cells in
programming languages, where a variable name pointing to a reference
cell is the name of the corresponding reference cell.
\label{def:choice-name}%
To this end, \Btt's
computation system is parameterized by a \set $\Mcname$ of
\emph{choice names}, that is equipped with a decidable equality, and
an operator that given a list of names, returns a name not in the
list.
This can be given by, e.g., nominal sets~\cite{pitts2013nominal}. In
what follows we let
$\Mcn$ range over $\Mcname$, and take $\Mcname$ to be $\nat$ for
simplicity.
\Btt is further parameterized over abstract operators and properties
recalled in
\Cref{def:choices,def:new-choices,def:compatibility,def:choosing},
which we show how to instantiate in \Cref{ex:ref}.
Definitions such as \Cref{def:choices} provide axiomatizations
of operators, and in addition informally indicate their intended use.
Choices are defined abstractly as follows:

\begin{defi}[\llink{101}~Choices] %
\label{def:choices}
\label{def:non-trivial-choices}
Let $\Mctype\subseteq\NTterm$ be a \set of \emph{choices},\footnote{To guarantee that
  $\Mctype\subseteq\NTterm$, one can for example extend the syntax to
  include a designated constructor for choices, or require a coercion
  $\Mfun{\Mctype}{\NTterm}$. We opted for the latter in our
  formalization.} and let
$\Mc$ range over $\Mctype$.
We say that a computation system contains $\langle
\Mcname,\Mctype\rangle$-choices if there exists
a partial function
$\MgetchoiceSYMB\in\Mfun{\Mworld}{\Mfun{\Mcname}{\Mctype}}$ (\llink{104}).
Given $\Mw\in\Mworld$ and $\Mcn\in\Mcname$,  the returned
choice, if it exists, is meant to be the %
last choice made for
$\Mcn$ according to $\Mw$.
$\Mctype$ is said to be \emph{\Pnontrivial} if it
contains two values $\Mca$ and $\Mcb$, which are syntactically
different (\llink{107}).
\end{defi}

A choice name $\Mcn$ can be used in a computation to access (or
``read'') choices from a world as follows:
${\Nread{\Mcn}}
\NstepwwSYMB{\Mw}{\Mw}
\Mctot{\Mgetchoice{\Mw}{\Mcn}}$ (as shown in~\Cref{fig:operational-semantics}).
This allows getting the last $\Mcn$-choice from the current world $\Mw$.
Datatypes are by default inhabited by impure computations that can for
example read choices using $\Nread{t}$.
For example, the $\NUPRLnat$ type is the type of potentially impure
natural numbers that includes expressions of the form $\Nread{\Mcn}$,
when $\Mcn$'s choices are natural numbers, in addition to expressions
of the form $\metanat{0}$, $\metanat{1}$, etc.

Note that the above definition of $\MgetchoiceSYMB$ is a slight
simplification of the more general notion of choices presented
in~\cite{Cohen+Rahli:fscd:2022}. There, the $\MgetchoiceSYMB$ function
was of type $\Mfun{\Mworld}{\Mfun{\Mcname}{\Mfun{\nat}{\Mctype}}}$.
The additional $\nat$ component enables a more general notion of
choice operators, encompassing both references and choice
sequences~\cite{Kleene+Vesley:1965,Atten+Dalen:2002,troelstra1985choice,troelstra1977choice,Kreisel+Troelstra:1970,Veldman:2001,moschovakis1993intuitionistic}, which stem
from Brouwer's intuitionistic logic, and can be seen as reference
cells that record the history of all the values ever stored in the
cells.
In references, which is the notion of choices we especially focus on in this
paper, one only maintains the latest update and so the $\nat$
component becomes moot. Thus, for simplicity of presentation, we elide
the $\nat$ component in this paper, but full details are available in
the Agda implementation.

The $\NnoreadSYMB$ type is inhabited by expressions that when
computing to a value in a world $\Mw_1$, also compute to that value in
all extensions $\Mw_2$ of $\Mw_1$.
Intuitively, this captures expressions that ``do not read'' in the
sense that they can only make limited use of the $\NreadSYMB$ operator
as the values they return should be the same irrespective of the world
they start computing against.
For example, given a choice name $\Mcn$, whose choices are natural
numbers, $\Nread{\Mcn}$ does not inhabit $\NnoreadSYMB$ as it could
return different values in successive extensions, while
$\Nlet{x}{\Nread{\Mcn}}{\metanat{0}}$ inhabits $\NnoreadSYMB$ as it
always compute to $\metanat{0}$, even though it reads $\Mcn$.
The $\NnoreadSYMB$ type can be turned into a $\Nnoreadmod{\_}$
modality as follows: let $\Nnoreadmod{T}\DEF\Nisect{T}{\NnoreadSYMB}$.
We then obtain that $\Nread{\Mcn}$ is not a member of
$\Nnoreadmod{\NUPRLnat}$, while it is a member of $\NUPRLnat$, and
$\Nlet{x}{\Nread{\Mcn}}{\metanat{0}}$ is a member of both $\NUPRLnat$
and $\Nnoreadmod{\NUPRLnat}$.
More generally, $\Nnoreadmod{T}$ is a subtype of $T$ in the sense
that the inhabitants of $\Nnoreadmod{T}$ also inhabit $T$.

While $\Nnoreadmod{T}$ restricts the effects that $T$'s inhabitants
can have, it can still be inhabited by effectful computations (we saw
that $\Nlet{x}{\Nread{\Mcn}}{\metanat{0}}$ inhabits
$\Nnoreadmod{\NUPRLnat}$). Therefore, to capture pure expressions,
\Btt provides the term $\Npure$, which is the type of ``pure'' terms,
i.e.~terms that do no contain choice names. This type can be turned
into a modality as follows: $\Npuremod{T}\DEF\Nisect{T}{\Npure}$.
Therefore, $\Npuremod{\NUPRLnat}$ is the type of pure natural numbers,
i.e., natural numbers that do not contain choice names. Hence,
$\Npuremod{\NUPRLnat}$ is a subtype of $\Nnoreadmod{\NUPRLnat}$. For
example, $\Nlet{x}{\Nread{\Mcn}}{\metanat{0}}$ inhabits
$\Nnoreadmod{\NUPRLnat}$, while it does not inhabit
$\Npuremod{\NUPRLnat}$.

\Btt also includes the notion of a \emph{restriction}, which
allows assuming that the choices made for a given choice name all satisfy a pre-defined constraint. Here again we simplify the concept for choices without history tracking. 
\begin{defi}[\llink{110}~Restrictions]
\label{def:restrictions}
A restriction $\Mr\in\Mres$ is a pair
$\mytuple{\META{res},\META{d}}$ consisting of a function
$\META{res}\in{\Mfun{\Mctype}{\Mprop}}$ and
a default choice $\META{d}\in\Mctype$  such that
$(\Mapp{\META{res}}{\META{d}})$ holds.
Given such a pair $\Mr$, we write $\Mdres{\Mr}$ for $\META{d}$.
\end{defi}

Intuitively, $\META{res}$ specifies a restriction on the choices that
can be made at any point in time and $\META{d}$ provides a default choice
that meets this restriction (e.g., for reference cells, this default choice is used to
initialize a cell).
For example, the restriction
$\mytuple{{\Mlam{\Mc}{\Mctot{\Mc}\in\nat}},0}$
requires choices to be
numbers and provides $0$ as a default value.
To reason about restrictions, we require the existence of
a ``compatibility'' predicate as follows.
\begin{defi}[\llink{113}~\PCompatibility]
\label{def:compatibility}
We say that $\Mctype$~is \emph{\Pcompatible} 
if there exists a predicate 
$\McompatibleSYMB\in\Mfun{\Mcname}{\Mfun{\Mworld}{\Mfun{\Mres}{\Mprop}}}$,
intended to guarantee that restrictions are satisfied, and which is
preserved by $\MextSYMB$:
$\Mallp{\Mparam{\Mcn}{\Mcname}\Mparam{\Mw_1,\Mw_2}{\Mworld}\Mparam{\Mr}{\Mres}}
{\Mfun{\Mext{\Mw_1}{\Mw_2}}{\Mfun{\Mcompatible{\Mcn}{\Mw_1}{\Mr}}{\Mcompatible{\Mcn}{\Mw_2}{\Mr}}}}$.
\end{defi}

\Btt further requires the ability to create new choice names as follows.
\begin{defi}[\llink{116}~\PExtendability]
\label{def:new-choices}
We say that $\Mctype$~is \emph{\Pextendable} if there exists a function
$\MnewChoiceSYMB\in\Mfun{\Mworld}{\Mcname}$ (\llink{119}),
where $\MnewChoice{\Mw}$ is
intended to return a new choice name not  present
in~$\Mw$,
and a function
$\MstartNewChoiceSYMB\in\Mfun{\Mworld}{\Mfun{\Mres}{\Mworld}}$ (\llink{122}),
where $\MstartNewChoice{\Mw}{\Mr}$ is intended to return an
extension of $\Mw$ with the new choice
name $\MnewChoice{\Mw}$ with restriction $\Mr$,
satisfying the following properties:
\begin{itemize}
\item Starting a new choice extends the current world:
  $\Mallp{\Mparam{\Mw}{\Mworld}\Mparam{\Mr}{\Mres}}
  {\Mext{\Mw}{\MstartNewChoice{\Mw}{\Mr}}}$
\item Initially, the only possible choice is the default value of the
  given restriction, i.e.:\\
  $\Mallp{\Mparam{\Mr}{\Mres}\Mparam{\Mw}{\Mworld}\Mparam{\Mc}{\Mctype}}
  {\Mfun{\Mgetchoice{\MstartNewChoice{\Mw}{\Mr}}{\MnewChoice{\Mw}}=\Mc}{\Mc=\Mdres{\Mr}}}$
\item A choice is initially compatible with its restriction:\\
  $\Mallp{\Mparam{\Mw}{\Mworld}\Mparam{\Mr}{\Mres}}
  {\Mcompatible{\MnewChoice{\Mw}}{\MstartNewChoice{\Mw}{\Mr}}{\Mr}}$
\end{itemize}
\end{defi}

\Btt provides a corresponding object computation that internalizes the
metatheoretical $\MnewChoiceSYMB$ function, namely $\Nfresh{x}{t}$.
Intuitively, it selects a ``fresh'' choice name $\Mcn$ and instantiate
the variable $x$ with $\Mcn$.
Formally, it computes as follows:
$\begin{array}{lll}
\Nfresh{x}{t}
&
\NstepwwSYMB{\Mw}{\MstartNewChoice{\Mw}{\Mrc}}
&
\NUPRLsuba{t}{x}{\MnewChoice{\Mw}}
\end{array}$
(as presented in \Cref{fig:operational-semantics}).
here $\Mrc$ is the restriction $\mytuple{\Mlam{c}{(c\in\nat)},0}$,
which constrains the choices to be numbers, with default value
$0$.
Other restrictions could be supported, for example by adding different
$\NfreshSYMB$ symbols to the language and by selecting during
computation the appropriate restriction based on the $\NfreshSYMB$
operator at hand. This is however left for future work as we
especially focus here on the choices presented in
\Cref{ex:ref}, i.e., natural numbers.

\begin{remark}[Freshness]
The fresh operator used in~\cite{Rahli+Bickford:cpp:2016} computes
$\Nfresh{x}{a}$ by reducing $a$ to $b$, and then returning
$\Nfresh{x}{b}$, thereby never generating new fresh names.
As opposed to that fresh operator, which was based on nominal sets,
the one introduced in this paper cannot put back the ``fresh''
constructor at each step of the small step derivation, otherwise a
multi-step computation would not be able to use a choice name to keep
track of the modulus of continuity of a function across multiple
computation steps by recording it in the current world.
One consequence of this is that this fresh operator cannot guarantee
that it generates a truly ``fresh'' name that does not occur anywhere
else (therefore, it does not satisfy the nominal axioms). For example
$\NUPRLapp{(\Nfresh{x}{x})}{\Mcn}$ might generate the name $\Mcn$
because it does not occur in the local expression
$\Nfresh{x}{x}$.%
\end{remark}

Lastly, \Btt requires the ability to update a choice as follows.

\begin{defi}[\llink{125}~\PMutability]
\label{def:choosing}
We say that $\Mctype$~is \emph{\Pmutable} if there exists a function
$\MchooseSYMB\in\Mfun{\Mworld}{\Mfun{\Mcname}{\Mfun{\Mctype}{\Mworld}}}$
such that if $\Mw\in\Mworld$, $\Mcn\in\Mcname$, $\Mc\in\Mctype$, then
$\Mext{\Mw}{\Mchoose{\Mw}{\Mcn}{\Mc}}$.
\end{defi}

\Btt provides a corresponding object computation that internalizes the
metatheoretical $\MchooseSYMB$ function, namely $\Nchoose{\Mcn}{t}$.
Choosing a $\Mcn$-choice $t$ using
$\Nchoose{\Mcn}{t}$ results in a corresponding update of the current
world, namely $\Mchoose{\Mw}{\Mcn}{t}$. The computation returns
$\Naxiom$, which is reminiscent of reference updates in OCaml for
example, which are of type \texttt{unit}.
As mentioned in \Cref{def:choices}, we require
$\Mctype\subseteq\NTterm$ so that choices can be included in
computations.
In addition, because
$\MchooseSYMB\in\Mfun{\Mworld}{\Mfun{\Mcname}{\Mfun{\Mctype}{\Mworld}}}$,
for $\Mchoose{\Mw}{\Mcn}{t}$ to be well-defined for $t\in\NTterm$, we
require a coercion from $\NTterm$ to $\Mctype$ so that $t$ can be
turned into a choice, and $\MchooseSYMB$ can be applied to that
choice. This coercion is left implicit for readability. We further
require that applying this coercion to a choice $\Mc$ returns $\Mc$,
which is used to validate the assumption $\assc$ discussed in
\Cref{sec:assumptions}.

The type $\NnowriteSYMB$ is inhabited by expressions that when
computing to a value from a world $\Mw_1$ to a world $\Mw_2$, satisfy
$\Mw_2=\Mw_1$. Intuitively, this captures expressions that ``do not
write'' in the sense that they can only make limited use of the
$\NchooseSYMB$ operator as the resulting world at the end of the
computation should be the world the computation started in.
For example, given a choice name $\Mcn$, whose choices are natural
numbers, $\Nchoose{\Mcn}{(\Nread{\Mcn}+\metanat{1})}$ does not
inhabit $\NnowriteSYMB$ as the computation ends in a world different
from the initial world, where $\Mcn$ is incremented by one,
while
$\Nlet{x}{\Nread{\Mcn}}{(\Nsq{(\Nchoose{\Mcn}{(x+1)})}{(\Nchoose{\Mcn}{x})})}$
inhabits $\NnowriteSYMB$, even though it uses $\NchooseSYMB$, as
it always ends in the same world as the initial world since $\Mcn$ is
reset to its initial value.
The $\NnowriteSYMB$ type can be turned into a $\Nnowritemod{\_}$
modality as follows: let
$\Nnowritemod{T}\DEF\Nisect{T}{\NnowriteSYMB}$.
We also write $\Nnoreadwritemod{T}$ for $\Nnowritemod{\Nnoreadmod{T}}$.
We then obtain that $\Nchoose{\Mcn}{(\Nread{\Mcn}+\metanat{1})}$ is
not a member of $\Nnowritemod{\NUPRLtrue}$, while it is a member of
$\NUPRLtrue$, and
$\Nlet{x}{\Nread{\Mcn}}{\Nsq{(\Nchoose{\Mcn}{(x+1)})}{\Nchoose{\Mcn}{x}}}$
is a member of both $\NUPRLtrue$ and $\Nnowritemod{\NUPRLtrue}$.
More generally, both $\Nnowritemod{T}$ and $\Nnoreadmod{T}$ are
subtypes of $T$, and $\Npuremod{T}$ is a subtype of both
$\Nnowritemod{T}$ and $\Nnoreadmod{T}$.

From this point on, we will only discuss choices $\Mctype$~that are
{\Pcompatible}, {\Pextendable} and {\Pmutable}.
While the abstract notion of choice operators has many concrete instances,
this paper focuses on one concrete instance, namely mutable references.

\begin{example}[\flink{worldInstanceRef.lagda}~References]
\label{ex:ref}
Reference cells, which are values that allow a program to indirectly
access a particular object, are choice operators since they can
point to different objects over their lifetime.
Formally, we define references to numbers, $\REF$,  as follows:
\begin{description}
\item[\Hlink{def:choices}{Non-trivial Choices}]
Let $\Mcname\DEF\nat$ and $\Mctype\DEF\nat$, which is
\Pnontrivial, e.g., take $\Mca\DEF\metanat{0}$ and
$\Mcb\DEF\metanat{1}$.

\item[\Hlink{def:worlds}{Worlds}]
Worlds are lists of cells, where a cell is a quadruple of (1)~a
choice name, (2)~a restriction, (3)~a choice, and (4)~a Boolean
indicating whether the cell is mutable.
$\MextSYMB$ is the reflexive transitive closure of two operations
that allow (i)~creating a new reference cell, and (ii)~updating an
existing reference cell.
We define $\Mgetchoice{\Mw}{\Mcn}$ so that it simply accesses the content of the $\Mcn$
cell in $\Mw$.

\item[\Hlink{def:compatibility}{Compatible}]
$\Mcompatible{\Mcn}{\Mw}{\Mr}$ states that a reference cell named
$\Mcn$ with restriction $\Mr$ was created in the world $\Mw$ (using
  an operation of type (i) described above), and that the current
  value of the cell satisfies $\Mr$.

\item[\Hlink{def:new-choices}{Extendable}]
$\MnewChoice{\Mw}$ returns a reference name not occurring in $\Mw$;
 and $\MstartNewChoice{\Mw}{\Mr}$ adds a new reference cell to
$\Mw$ with
 name $\MnewChoice{\Mw}$ and restriction~$\Mr$ (using an operation of type (i)  mentioned above).

 \item[\Hlink{def:choosing}{Mutable}]
$\Mchoose{\Mw}{\Mcn}{\Mc}$ updates the reference $\Mcn$ with the
   choice $\Mc$ if $\Mcn$ occurs in $\Mw$, and otherwise returns $\Mw$
   (using an operation of type (ii)  mentioned above).
\end{description}
A coercion from $\NTterm$ to $\Mctype$ can then turn $\metanat{n}$
into $n$ and all other terms to $0$, which satisfies the requirement
that choices are mapped to the same choices.
\end{example}

Formally,
$\NUPRLcomputestopp{\Mw_1}{\Mw_2}{a}{b}$ is the reflexive and
transitive closure of $\NstepSYMB$, i.e., it holds if $a$ in world
$\Mw_1$ computes to $b$ in world $\Mw_2$ in $0$ or more steps.
Thanks to the properties of $\MstartNewChoiceSYMB$ presented in
\Cref{def:new-choices}, and the properties of $\MchooseSYMB$
in \Cref{def:choosing},  computations respect
$\MextSYMB$:
\begin{lem}[\llink{128}~Computations respect $\MextSYMB$]
\label{lem:comps-sat-ext}
If $\NUPRLcomputestopp{\Mw_1}{\Mw_2}{a}{b}$ then
$\Mext{\Mw_1}{\Mw_2}$.
\end{lem}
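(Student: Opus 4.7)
The plan is a straightforward induction on the length of the reduction $\NUPRLcomputestopp{\Mw_1}{\Mw_2}{a}{b}$, using the fact that $\MextSYMB$ is reflexive and transitive (\Cref{def:worlds}) to stitch the per-step extensions into a global extension. First I would establish the one-step version as a lemma: for any single-step reduction $\Nstepwwa{a}{b}{\Mw_1}{\Mw_2}$, one has $\Mext{\Mw_1}{\Mw_2}$. With that in hand, the multi-step case follows by a trivial induction: the zero-step case is reflexivity of $\MextSYMB$, and the inductive step composes one application of the single-step lemma with the inductive hypothesis via transitivity of $\MextSYMB$.

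The single-step lemma itself proceeds by case analysis on the reduction rules displayed in the lower half of \Cref{fig:operational-semantics}. For all of the ``pure'' reductions (beta for $\NUPRLlam{x}{t}$, $\NUPRLfixSYMB$, $\Nlet{x}{v}{t}$, $\NsuccSYMB$, $\NnatrecSYMB$, pair destruction, sum destruction, and $\Nread{\Mcn}$), the rule leaves the world untouched ($\Mw_2 = \Mw_1$), so reflexivity of $\MextSYMB$ suffices. The two genuinely effectful base cases are exactly where the abstract axioms of \Cref{sec:choice} are needed: for $\Nchoose{\Mcn}{t}$, the target world is $\Mchoose{\Mw_1}{\Mcn}{t}$, and \Pmutability (\Cref{def:choosing}) gives $\Mext{\Mw_1}{\Mchoose{\Mw_1}{\Mcn}{t}}$ directly; for $\Nfresh{x}{t}$, the target world is $\MstartNewChoice{\Mw_1}{\Mrc}$, and the first clause of \PExtendability (\Cref{def:new-choices}) gives $\Mext{\Mw_1}{\MstartNewChoice{\Mw_1}{\Mrc}}$.

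The remaining cases are the (elided) congruence rules that let one step inside a subterm, such as ``if $\Nstepwwa{t_1}{t_2}{\Mw_1}{\Mw_2}$ then $\Nstepwwa{\NUPRLapppar{t_1}{u}}{\NUPRLapppar{t_2}{u}}{\Mw_1}{\Mw_2}$''. These are handled uniformly by structural induction on the derivation of the single-step relation: each congruence rule preserves the pair of worlds, so its conclusion inherits $\Mext{\Mw_1}{\Mw_2}$ from its premise via the inductive hypothesis.

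The only mild obstacle is a bookkeeping one rather than a conceptual one: one must make sure the case analysis on the small-step rules is exhaustive, including every congruence closure that \Cref{fig:operational-semantics} leaves implicit. In the Agda formalization this amounts to a pattern match on the constructor of the step relation, with the two interesting cases discharged by invoking the axioms of \Cref{def:choosing,def:new-choices} and all other cases discharged by reflexivity or the inductive hypothesis.
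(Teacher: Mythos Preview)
Your proposal is correct and matches the paper's approach: the paper itself offers no detailed proof, merely noting that the lemma follows from the properties of $\MstartNewChoiceSYMB$ in \Cref{def:new-choices} and of $\MchooseSYMB$ in \Cref{def:choosing}, which are precisely the two nontrivial cases you identify. Your write-up simply spells out the induction and case analysis that the paper leaves implicit.
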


\subsection{Forcing Interpretation}
\label{sec:forcing}

\begin{figure*}[!t]
  \begin{small}
    \begin{center}
\begin{description}[leftmargin=*]
\item[Numbers]~\\
\begin{itemize}
\item $\Meqtypes{\Mw}{\NUPRLnat}{\NUPRLnat}\iff\metatrue$
\item $\Mintype{\Mw}{t}{t'}{\NUPRLnat}
\iff
\Mbarmodw
{\Mw}
{\Mw'}
{\Mexp{\Mparam{n}{\nat}}
  {\NUPRLmetaand
    {\NUPRLcomputestop{\Mw'}{t}{\metanat{n}}}
    {\NUPRLcomputestop{\Mw'}{t'}{\metanat{n}}}
  }
}$
\end{itemize}

\vspace{0.1cm}
\item[Products]~\\
\begin{itemize}
\item $\begin{array}[t]{l}\Meqtypes{\Mw}{\Nproduct{x}{A_1}{B_1}}{\Nproduct{x}{A_2}{B_2}}
 \iff\Mfam{\Mw}{A_1}{A_2}{B_1}{B_2}\end{array}$
\item $\begin{array}{l}\Mintype{\Mw}{f}{g}{\Nproduct{x}{A}{B}}
\iff
\Mbarmodw{\Mw}
         {\Mw'}
         {\Mallp{\Mparam{a_1,a_2}{\NTterm}}
           {\Mfun{\Mintype{\Mw'}{a_1}{a_2}{A}}{\Mintype{\Mw'}{\NUPRLapppar{f}{a_1}}{\NUPRLapppar{g}{a_2}}{\NUPRLsuba{B}{x}{a_1}}}}}
\end{array}$
\end{itemize}

\vspace{0.1cm}
\item[Sums]~\\
\begin{itemize}
\item $\begin{array}[t]{l}\Meqtypes{\Mw}{\Nsum{x}{A_1}{B_1}}{\Nsum{x}{A_2}{B_2}}
\iff\Mfam{\Mw}{A_1}{A_2}{B_1}{B_2}\end{array}$
\item $\Mintype{\Mw}{p_1}{p_2}{\Nsum{x}{A}{B}}
\iff
\Mbarmodw{\Mw}
        {\Mw'}
        {\Mexp{\Mparam{a_1,a_2,b_1,b_2}{\NTterm}}
          {\Mandd{\Mintype{\Mw'}{a_1}{a_2}{A}}
            {\Mintype{\Mw'}{b_1}{b_2}{\NUPRLsuba{B}{x}{a_1}}}
            {\NUPRLcomputestop{\Mw'}{p_1}{\NUPRLpair{a_1}{b_1}}}
            {\NUPRLcomputestop{\Mw'}{p_2}{\NUPRLpair{a_2}{b_2}}}}}$
\end{itemize}

\vspace{0.1cm}
\item[Sets]~\\
\begin{itemize}
\item $\begin{array}[t]{l}\Meqtypes{\Mw}{\NUPRLset{x}{A_1}{B_1}}{\NUPRLset{x}{A_2}{B_2}}
\iff\Mfam{\Mw}{A_1}{A_2}{B_1}{B_2}\end{array}$
\item $\Mintype{\Mw}{a_1}{a_2}{\NUPRLset{x}{A}{B}}
\iff
\Mbarmodw{\Mw}
         {\Mw'}
         {\Mexp{\Mparam{b_1,b_2}{\NTterm}}{\Mand{\Mintype{\Mw'}{a_1}{a_2}{A}}{\Mintype{\Mw'}{b_1}{b_2}{\NUPRLsuba{B}{x}{a_1}}}}}$
\end{itemize}

\vspace{0.1cm}
\item[Disjoint unions]~\\
\begin{itemize}
\item $\begin{array}[t]{l}\Meqtypes{\Mw}{\NUPRLunion{A_1}{B_1}}{\NUPRLunion{A_2}{B_2}}
\iff
\Mand{\Meqtypes{\Mw}{A_1}{A_2}}{\Meqtypes{\Mw}{B_1}{B_2}}\end{array}$
\item
  $\Mintype{\Mw}{a_1}{a_2}{\NUPRLunion{A}{B}}
  \iff
  \Mbarmodw{\Mw}
           {\Mw'}
           {\Mexp{\Mparam{u,v}{\NTterm}}
             {\Mor{(\Mandc{\NUPRLcomputestop{\Mw'}{a_1}{\NUPRLinl{u}}}
                 {\NUPRLcomputestop{\Mw'}{a_2}{\NUPRLinl{v}}}
                 {\Mintype{\Mw'}{u}{v}{A}})}
               {(\Mandc{\NUPRLcomputestop{\Mw'}{a_1}{\NUPRLinr{u}}}
                 {\NUPRLcomputestop{\Mw'}{a_2}{\NUPRLinr{v}}}
                 {\Mintype{\Mw'}{u}{v}{B}})}}}$
\end{itemize}

\vspace{0.1cm}
\item[Equalities]~\\
\begin{itemize}
\item $\begin{array}[t]{l}\Meqtypes{\Mw}{\NUPRLequalityb{A}{a_1}{b_1}}{\NUPRLequalityb{B}{a_2}{b_2}}
\iff
\Mandc{\Meqtypes{\Mw}{A}{B}}
{\Mallww{\Mw}{\Mw'}{\Mintype{\Mw'}{a_1}{a_2}{A}}}
{\Mallww{\Mw}{\Mw'}{\Mintype{\Mw'}{b_1}{b_2}{B}}}\end{array}$
\item
  $\Mintype{\Mw}{a_1}{a_2}{\NUPRLequalityb{A}{a}{b}}
  \iff
  \Mbarmodw{\Mw}
         {\Mw'}
         {\Mintype{\Mw'}{a}{b}{A}}$
\end{itemize}

\vspace{0.1cm}
\item[Subsingletons]~\\
\begin{itemize}
\item $\Meqtypes{\Mw}{\Nqsquash{A}}{\Nqsquash{B}}
\iff
\Meqtypes{\Mw}{A}{B}$
\item $\Mintype{\Mw}{a}{b}{\Nqsquash{A}}
\iff
\Mbarmodw{\Mw}
         {\Mw'}
         {\Mand{\Mintype{\Mw'}{a}{a}{A}}{\Mintype{\Mw'}{b}{b}{A}}}$
\end{itemize}

\vspace{0.1cm}
\item[Binary intersections]~\\
\begin{itemize}
\item $\begin{array}[t]{l}\Meqtypes{\Mw}{\Nisect{A_1}{B_1}}{\Nisect{A_2}{B_2}}
\iff
\Mand{\Meqtypes{\Mw}{A_1}{A_2}}{\Meqtypes{\Mw}{B_1}{B_2}}\end{array}$
\item
  $\Mintype{\Mw}{a_1}{a_2}{\Nisect{A}{B}}
  \iff
  \Mbarmodw{\Mw}
           {\Mw'}
           {\Mand{\Mintype{\Mw'}{a_1}{a_2}{A}}
             {\Mintype{\Mw'}{a_1}{a_2}{B}}}$
\end{itemize}

\vspace{0.1cm}
\item[No-read types]~\\
\begin{itemize}
\item $\Meqtypes{\Mw}{\NnoreadSYMB}{\NnoreadSYMB}
\iff
\metatrue$
\item $\Mintype{\Mw}{a}{b}{\NnoreadSYMB}
\iff
\Mbarmodw
    {\Mw}{\Mw'}
    {\Mallp{\Mparam{v}{\NTval}}
      {\Mand
        {(\Mfun{\NUPRLcomputestop{\Mw'}{a}{v}}{\NUPRLcomputestoall{\Mw'}{a}{v}})}
        {(\Mfun{\NUPRLcomputestop{\Mw'}{b}{v}}{\NUPRLcomputestoall{\Mw'}{b}{v}})}}}$
\end{itemize}

\vspace{0.1cm}
\item[No-write types]~\\
\begin{itemize}
\item $\Meqtypes{\Mw}{\NnowriteSYMB}{\NnowriteSYMB}
\iff
\metatrue$
\item $\Mintype{\Mw}{a}{b}{\NnowriteSYMB}
\iff
\Mbarmodw
    {\Mw}{\Mw'}
    {\Mallp{\Mparam{v}{\NTval}}
      {\Mand
        {(\Mfun{\NUPRLcomputestop{\Mw'}{a}{v}}{\NUPRLcomputestopp{\Mw'}{\Mw'}{a}{v}})}
        {(\Mfun{\NUPRLcomputestop{\Mw'}{b}{v}}{\NUPRLcomputestopp{\Mw'}{\Mw'}{b}{v}})}}}$
\end{itemize}

\vspace{0.1cm}
\item[Purity]~\\
\begin{itemize}
\item $\Meqtypes{\Mw}{\Npure}{\Npure}
\iff
\metatrue$
\item $\Mintype{\Mw}{a_1}{a_2}{\Npure}
\iff
\Mand
{(\Mexp{\Mparam{t}{\NTterm}}{\Mand{\NUPRLcomputestoallb{\Mw}{a_1}{t}}{\Nnonames{t}}})}
{(\Mexp{\Mparam{t}{\NTterm}}{\Mand{\NUPRLcomputestoallb{\Mw}{a_2}{t}}{\Nnonames{t}}})}$
\end{itemize}

\vspace{0.1cm}
\item[Modality closure]~\\
\begin{itemize}
\item $\Meqtypes{\Mw}{T_1}{T_2}
\iff\Mbarmodw{\Mw}
             {\Mw'}
             {\Mexp
               {\Mparam{T'_1,T'_2}{\NTterm}}
               {\NUPRLmetaandc
                 {\NUPRLcomputestoall{\Mw'}{T_1}{T'_1}}
                 {\NUPRLcomputestoall{\Mw'}{T_2}{T'_2}}
                 {\Meqtypes{\Mw'}{T'_1}{T'_2}}}}$
\item $\Mintype{\Mw}{t_1}{t_2}{T}
\iff
\Mbarmodw{\Mw}
          {\Mw'}
            {\Mexp
              {\Mparam{T'}{\NTterm}}
              {\NUPRLmetaand
                {\NUPRLcomputestoall{\Mw'}{T}{T'}}
                {\Mintype{\Mw'}{t_1}{t_2}{T'}}}}$
\end{itemize}
\end{description}
\end{center}
\end{small}
\caption{Forcing Interpretation}
\label{fig:forcing}
\end{figure*}

\Btt's semantics, presented in \Cref{fig:forcing}, while similar to
the one presented
in~\cite{Cohen+Rahli:fscd:2022,Cohen+Rahli:csl:2023}, differs in one
major way, namely types are here impure by default.
It is interpreted via a forcing interpretation in which the forcing
conditions are worlds.
This interpretation is defined using induction-recursion as follows:
(1)~the inductive relation $\Meqtypes{\Mw}{T_1}{T_2}$
expresses type equality in the world $\Mw$; (2)~the recursive function
$\Mintype{\Mw}{t_1}{t_2}{T}$ expresses equality in a type.
We further use the following abstractions:
$$\begin{array}{l}
\begin{array}{lll}
\Mistype{\Mw}{T} & \DEF & \Meqtypes{\Mw}{T}{T}
\\
\Mmemtype{\Mw}{t}{T} & \DEF & \Mintype{\Mw}{t}{t}{T}
\end{array}\hspace{0.5in}
\begin{array}{lll}
\Minhtype{\Mw}{T} & \DEF & \Mexp{\Mparam{t}{\NTterm}}{\Mmemtype{\Mw}{t}{T}}
\\
\NUPRLcomputestoallb{\Mw}{a}{b} & \DEF & \Mallww{\Mw}{\Mw'}{\NUPRLcomputestopp{\Mw'}{\Mw'}{a}{b}}
\end{array}\vspace{0.05in}
\\
\begin{array}{l}
\Mfam{\Mw}{A_1}{A_2}{B_1}{B_2}
\DEF \\ \hspace*{0.1in}
\Mand{\Meqtypes{\Mw}{A_1}{A_2}}
{\Mallwwp{\Mw}{\Mw'}{\Mallp{\Mparam{a_1,a_2}{\NTterm}}
    {\Mfun{\Mintype{\Mw'}{a_1}{a_2}{A_1}}{\Meqtypes{\Mw'}{\NUPRLsuba{B_1}{x}{a_1}}{\NUPRLsuba{B_2}{x}{a_2}}}}}}
\end{array}
\end{array}$$
Note that $\NUPRLcomputestoallb{\Mw}{a}{b}$ captures the fact that the
computation does not change the initial world (this is used in
\Cref{thm:TS}).
\Cref{fig:forcing} defines in particular the semantics of $\Npure$,
which is inhabited by name-free terms, where $\Nnonames{t}$ is defined
recursively over~$t$ and returns false iff~$t$ contains a choice
name~$\Mcn$ or a fresh operator of the form $\Nfresh{x}{t}$.
This forcing interpretation is parameterized by a family of abstract
modalities $\MbarmodSYMB$, which we sometimes refer to simply as a
modality, which is a function that takes a world $\Mw$ to its modality
$\MbarmodM{\Mw}\in\Mfun{\Mwpred{\Mw}}{\Mprop}$.
We often write $\Mbarmodw{\Mw}{\Mw'}{P}$ for $\Mbarmod{\Mw}{\Mlam{\Mw'}{P}}$.
As in~\cite{Cohen+Rahli:fscd:2022}, to guarantee that this
interpretation yields a standard type system in the sense
of~\Cref{thm:TS}, we require that the modalities satisfy certain
properties reminiscent of standard modal axiom
schemata~\cite{Cresswell+Hughes:1996},
which we repeat here for ease of read:
\begin{defi}[\llink{133}~Equality modality]
\label{def:modalities}
The modality $\MbarmodSYMB$
is called an \emph{equality modality} if it satisfies the following
properties:
\begin{small}
\begin{itemize}
\label{def:mod-props}
\item $\modaa$ (monotonicity):
  $\Mallp{\Mparam{\Mw}{\Mworld}\Mparam{P}{\Mwpred{\Mw}}}
  {\Mallp{\Mextp{\Mw'}{\Mw}}
    {\Mfun
      {\Mbarmod{\Mw}{P}}
      {\Mbarmod{\Mw'}{\Mwplift{\Mw'}{P}}}}}$.
\item $\modab$ (distribution):
  $\Mallb{\Mw}{\Mworld}{P,Q}{\Mwpred{\Mw}}
    {\Mfun
      {\Mbarmodw{\Mw}{\Mw'}{\Mfun{\Mapp{P}{\Mw'}}{\Mapp{Q}{\Mw'}}}}
      {\Mfun{\Mbarmod{\Mw}{P}}{\Mbarmod{\Mw}{Q}}}}$
\item $\modad$ (density):
  $\Mallb{\Mw}{\Mworld}{P}{\Mwpred{\Mw}}
  {\Mfun
    {\Mbarmodw{\Mw}{\Mw'}{\Mbarmod{\Mw'}{\Mwplift{\Mw'}{P}}}}
    {\Mbarmod{\Mw}{P}}}$
\item $\modae$ (weakening):
  $\Mallb{\Mw}{\Mworld}{P}{\Mwpred{\Mw}}
  {\Mfun
    {\Mallw{\Mw}{P}}
    {\Mbarmod{\Mw}{P}}}$
\item $\modaf$ (reflexivity):
  $\Mallb{\Mw}{\Mworld}{P}{\Mprop}
  {\Mfun
    {\Mbarmodw{\Mw}{\Mw'}{P}}
    {P}}$
\end{itemize}
\end{small}
\end{defi}

As mentioned in~\cite{Cohen+Rahli:fscd:2022}, modalities can be
derived from covering relations $\McovSYMB$, where
$\Mcov{\Mw}{\Mopen}$ captures that $\Mopen\in\Mfun{\Mworld}{\Mprop}$
``covers'' the world $\Mw$.
Covering relations are required to satisfy suitable intersection,
union, top, non-emptiness, and subset
properties~\cite[Def.22]{Cohen+Rahli:fscd:2022} to be able to derive
modalities from them.
We present below three examples of coverings that satisfy these
properties, namely Kripke, Beth, and Open coverings, from which
modalities can be derived (the result presented in
\Cref{sec:continuity-proof} holds in particular for the resulting
modalities):
\begin{example}
\label{ex:coverings}
\label{def:chains}
The Kripke covering (\llink{136}) is defined as follows, i.e., $\Mcovn{K}{\Mw}{\Mopen}$
whenever $\Mopen$ contains all the extensions of $\Mw$:
$$\Mcovn{K}{\Mw}{\Mopen}\DEF\Mallww{\Mw}{\Mw'}{\Mapppar{\Mopen}{\Mw'}}$$

The Beth covering (\llink{139}) is defined as follows, i.e., $\Mcovn{B}{\Mw}{\Mopen}$
whenever $\Mopen$ contains at least one world from each ``branch''
(captured by $\MchainSYMB$ below) starting from $\Mw$:
$$\Mcovn{B}{\Mw}{\Mopen}\DEF\Mall{c}{\Mchain{\Mw}}{\Mexp{\Mparam{n}{\nat}}{\Mexp{\Mext{\Mw'}{(\Mapp{c}{n})}}{\Mapppar{\Mopen}{\Mw'}}}}$$
where $\Mchain{\Mw}$ is the \set of sequences of worlds
in $\Mfun{\nat}{\Mworld}$ such that $\Mch\in\Mchain{\Mw}$ iff
(1)~$\Mext{\Mw}{\Mapp{\Mch}{0}}$,
(2)~for all $i\in\nat$, $\Mext{\Mapp{\Mch}{i}}{\Mapp{\Mch}{(i+1)}}$;
and (3)~$c$ is \emph{progressing}, which is formally defined
in~\cite[Def.25]{Cohen+Rahli:fscd:2022}, and informally captures that
there exists two worlds $\Mext{\Mw_1}{\Mw_2}$ along $c$ where $\Mw_2$
is not $\Mw_1$, and this infinitely often.

The Open covering (\llink{142}) is defined as follows, i.e., $\Mcovn{O}{\Mw}{\Mopen}$
whenever $\Mopen$ contains a world that extends each world that
extends $\Mw$:
$$\Mcovn{O}{\Mw}{\Mopen}\DEF\Mallww{\Mw}{\Mw_1}{\Mexww{\Mw_1}{\Mw_2}{\Mapppar{\Mopen}{\Mw_2}}}$$
\end{example}

The Beth and Open coverings are in particular suitable
to capture aspects of choice sequences, which as mentioned above, can
be seen as reference cells that include the history of all the values
ever stored in the cells.
As discussed in~\cite{Cohen+Rahli:fscd:2022}, while the Beth covering
is incompatible with classical reasoning the Open covering allows
validating axioms such as the Law of Excluded Middle.

\begin{thm}
\label{thm:TS}
Given a computation system with choices $\Mctype$ and an equality modality
${\MbarmodSYMB}$, \Btt is a standard type system in the sense that its
forcing interpretation induced by $\MbarmodSYMB$ satisfy the following
properties (where free variables are universally quantified):
\BM$$\begin{footnotesize}\begin{array}{l@{\hspace{0.16in}}l@{\hspace{0.16in}}l}
  \mbox{transitivity:}
  &
  \Mfun
      {\Meqtypes{\Mw}{T_1}{T_2}}
      {\Mfun
        {\Meqtypes{\Mw}{T_2}{T_3}}
        {\Meqtypes{\Mw}{T_1}{T_3}}
      }
  &
  \Mfun
      {\Mintype{\Mw}{t_1}{t_2}{T}}
      {\Mfun
        {\Mintype{\Mw}{t_2}{t_3}{T}}
        {\Mintype{\Mw}{t_1}{t_3}{T}}
      }\\
  \mbox{symmetry:}
  &
  \Mfun
      {\Meqtypes{\Mw}{T_1}{T_2}}
      {\Meqtypes{\Mw}{T_2}{T_1}}
  &
  \Mfun
      {\Mintype{\Mw}{t_1}{t_2}{T}}
      {\Mintype{\Mw}{t_2}{t_1}{T}}
  \\
  \mbox{computation:}
  &
  \Mfun
      {\Meqtypes{\Mw}{T}{T}}
      {\Mfun
        {\NUPRLcomputestoallb{\Mw}{T}{T'}}
        {\Meqtypes{\Mw}{T}{T'}}
      }
  &
  \Mfun
      {\Mintype{\Mw}{t}{t}{T}}
      {\Mfun
        {\NUPRLcomputestoallb{\Mw}{t}{t'}}
        {\Mintype{\Mw}{t}{t'}{T}}
      }
  \\
  \mbox{monotonicity:}
  &
\Mfun
{\Meqtypes{\Mw}{T_1}{T_2}}
{\Mfun
  {\Mext{\Mw}{\Mw'}}
  {\Meqtypes{\Mw'}{T_1}{T_2}}
}
  &
\Mfun
{\Mintype{\Mw}{t_1}{t_2}{T}}
{\Mfun
  {\Mext{\Mw}{\Mw'}}
  {\Mintype{\Mw'}{t_1}{t_2}{T}}
}
\\
  \mbox{locality:}
  &
\Mfun
{\Mbarmodw{\Mw}{\Mw'}{\Meqtypes{\Mw'}{T_1}{T_2}}}
{\Meqtypes{\Mw}{T_1}{T_2}}
&
\Mfun
{\Mbarmodw{\Mw}{\Mw'}{\Mintype{\Mw'}{t_1}{t_2}{T}}}
{\Mintype{\Mw}{t_1}{t_2}{T}}
\\
  \mbox{consistency:}
&
\multicolumn{2}{l}{\neg\Mmemtype{\Mw}{t}{\NUPRLfalse}}
\end{array}\end{footnotesize}$$\EM
\end{thm}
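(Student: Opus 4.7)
The proof will proceed by induction on the inductive-recursive definition of $\Meqtypes{\Mw}{T_1}{T_2}$ (with $\Mintype{\Mw}{t_1}{t_2}{T}$ defined mutually), handling each type-constructor clause of \Cref{fig:forcing} in turn. A crucial observation is that every clause but the modality-closure one is built as a $\MbarmodSYMB$-statement, so the closure properties reduce to showing that (i) the underlying predicate on worlds already enjoys the desired property when the $\MbarmodSYMB$ is stripped, and (ii) $\MbarmodSYMB$ propagates that property via one of the axioms $\modaa$--$\modaf$ of \Cref{def:mod-props}. In particular, locality is an immediate consequence of $\modad$ (density) applied to the modality-closure clause, while monotonicity follows from $\modaa$; both properties are preserved by all the structural clauses because they are preserved by $\wedge$, $\vee$, $\exists$, $\forall$ and by the point-wise lifting used in $\Mfam{\Mw}{A_1}{A_2}{B_1}{B_2}$.

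For transitivity and symmetry I would first prove the usual auxiliary facts: $\Meqtypes{\Mw}{T_1}{T_2}$ implies $\Meqtypes{\Mw}{T_2}{T_1}$ and $\Meqtypes{\Mw}{T_1}{T_1}$, and similarly for $\Mintype{\Mw}{-}{-}{T}$; the cases for $\Nproduct{x}{A}{B}$, $\Nsum{x}{A}{B}$, $\NUPRLset{x}{A}{B}$ and $\NUPRLequalityb{A}{a}{b}$ all reduce, after unfolding $\Mfam{\Mw}{-}{-}{-}{-}$, to the inductive hypothesis on $A$ together with the hypothesis on $B$ instantiated at arbitrary equal members of $A$. The modality-closure clause transports these through $\MbarmodSYMB$ using $\modab$ (distribution) combined with the induction hypothesis applied pointwise in each world of the covering.

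The computation closure property reduces to showing that the forcing conditions are invariant under $\NUPRLcomputestoallbSYMB{\Mw}$. This is essentially built into the modality-closure clause, which allows a type to compute to another before being structurally interpreted; concretely, if $\Meqtypes{\Mw}{T_1}{T_2}$ and $\NUPRLcomputestoallb{\Mw}{T_1}{T'_1}$, one re-packs the interpretation of $T_1$ through the modality using $\modae$, noting that computation respects $\MextSYMB$ (\Cref{lem:comps-sat-ext}) so $T'_1$ also computes to the same canonical head at every world in the cover. Determinism of the call-by-name reduction in \Cref{fig:operational-semantics} guarantees that the computed heads agree. Consistency will follow from the $\NUPRLnat$ clause combined with $\modaf$ (reflexivity): an inhabitant of $\NUPRLfalse \DEF \NUPRLequalityb{\NUPRLnat}{\metanat{0}}{\metanat{1}}$ would force $\Mbarmodw{\Mw}{\Mw'}{\Mexp{n}{\NUPRLmetaand{\NUPRLcomputestop{\Mw'}{\metanat{0}}{\metanat{n}}}{\NUPRLcomputestop{\Mw'}{\metanat{1}}{\metanat{n}}}}}$, and $\modaf$ contradicts this through determinism, since $\metanat{0}$ and $\metanat{1}$ reduce only to themselves.

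The principal obstacle is managing the induction-recursion together with the modality-closure clause, which makes both $\Meqtypes{\Mw}{-}{-}$ and $\Mintype{\Mw}{-}{-}{-}$ non-trivially closed under computation: the usual tactic is to fix a ``canonical'' inductive presentation in which each type constructor is preceded by a modality-closure step, so that induction can be performed on the trace of constructors chosen to build the equality proof rather than on the syntactic form of $T$. A second delicate point is the newly introduced $\Npure$, $\NnoreadSYMB$, and $\NnowriteSYMB$ clauses: for these the verification of monotonicity in particular requires knowing that $\Nnonames{\,\cdot\,}$ and the all-worlds computation predicates $\NUPRLcomputestoallbSYMB{\Mw}$ are themselves preserved under $\MextSYMB$, which follows from \Cref{lem:comps-sat-ext}, but must be checked when re-deriving computations from an extended world.
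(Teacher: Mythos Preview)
Your proposal is correct and follows essentially the same approach as the paper: induction on the inductive-recursive definition, with each closure property discharged by the corresponding modality axiom ($\modaa$ for monotonicity, $\modad$ for locality, $\modaf$ for consistency, and $\modab$ together with $\modae$ for symmetry and transitivity). The paper's own proof is only a short sketch pointing to these same axiom-to-property correspondences, so your expansion is appropriate; the one small refinement is that for symmetry/transitivity you need $\modae$ explicitly to lift the pointwise implication into the modality before applying $\modab$, which the paper mentions and you leave implicit.
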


\begin{proof}
The proof relies on the properties of the equality modality. For example:
$\modaa$ is used to prove monotonicity (\llink{145}) when $\Meqtypes{\Mw}{T_1}{T_2}$
is derived by closing under $\MbarmodM{\Mw}$;
$\modab$ and $\modae$ are used, e.g., to prove the symmetry
(\llink{150}) and transitivity (\llink{155}) of
$\Mintype{\Mw}{t}{t'}{\NUPRLnat}$;
$\modad$ is used to prove locality (\llink{161}); and
$\modaf$ is used to prove consistency (\llink{166}).
\end{proof}

As indicated in \Cref{thm:TS}, and as opposed to the counterpart of
the theorem in~\cite{Cohen+Rahli:fscd:2022},
$\Meqtypes{\Mw}{T_1}{T_2}$ and $\Mintype{\Mw}{t_1}{t_2}{T}$ are no
longer closed under all computations. For example, when
$T\DEF\NUPRLnat$, if $\NUPRLcomputestoall{\Mw}{t}{t'}$ and
$\Mintype{\Mw}{t}{\metanat{n}}{\NUPRLnat}$, does not necessarily give
us that $\Mintype{\Mw}{t'}{\metanat{n}}{\NUPRLnat}$. An example is:
$$t\DEF(\Nsq{\Nchoose{\Mcn}{\metanat{1}}}{\Niflt{\Nread{\Mcn}}{\metanat{1}}{\metanat{0}}{\metanat{1}}})$$
which reduces to
$t'\DEF(\Niflt{\Nread{\Mcn}}{\metanat{1}}{\metanat{0}}{\metanat{1}})$
and also to $\metanat{1}$ in all worlds, i.e.,
$\NUPRLcomputestoall{\Mw}{t}{t'}$ and
$\NUPRLcomputestoall{\Mw}{t}{\metanat{n}}$ for all $\Mw$, and
therefore $\Mintype{\Mw}{t}{\metanat{1}}{\NUPRLnat}$. However $t'$ does
not reduce to~$\metanat{1}$ in all worlds, and therefore
$\Mintype{\Mw}{t'}{\metanat{1}}{\NUPRLnat}$ does not hold, because
$\Mcn$ could be initialized differently in different worlds, for
example with~$\metanat{0}$, in which case $t'$ would reduce to
$\metanat{0}$.
This example can be adapted to show that $\Meqtypes{\Mw}{T_1}{T_2}$ is
also not closed under all computations:
$$T\DEF(\Nsq{\Nchoose{\Mcn}{\metanat{1}}}{\Niflt{\Nread{\Mcn}}{\metanat{1}}{\NUPRLtrue}{\NUPRLfalse}})$$
which reduces to
$T'\DEF(\Niflt{\Nread{\Mcn}}{\metanat{1}}{\NUPRLtrue}{\NUPRLfalse})$
and also to $\NUPRLfalse$ in all worlds, i.e.,
$\NUPRLcomputestoall{\Mw}{T}{T'}$ and
$\NUPRLcomputestoall{\Mw}{T}{\NUPRLfalse}$ for all $\Mw$, and
therefore $\Meqtypes{\Mw}{T}{\NUPRLvoid}$. However, $T'$ does not
reduce to $\NUPRLfalse$ in all worlds, because $\Mcn$ could be
initialized differently in different worlds, and therefore
$\Meqtypes{\Mw}{T'}{\NUPRLvoid}$ does not hold.

However, the following holds by transitivity of
$\NUPRLcomputestoSYMB$:
$$\begin{array}{l}
\Mfun
{\NUPRLcomputestoall{\Mw}{t'}{t}}
{\Mfun
  {\Mintype{\Mw}{t}{t}{\NUPRLnat}}
  {\Mintype{\Mw}{t}{t'}{\NUPRLnat}}
}
\\
\Mfun
{\NUPRLcomputestoall{\Mw}{T'}{T}}
{\Mfun
  {\Meqtypes{\Mw}{T}{T}}
  {\Meqtypes{\Mw}{T}{T'}}
}
\end{array}$$

To summarize, \Btt is closed under the following computations:
\begin{itemize}
\item \llink{169}
  $\Mfun
  {\NUPRLcomputestoall{\Mw}{U}{U'}}
  {\Mfun
    {\NUPRLcomputestoall{\Mw}{T}{T'}}
    {\Mfun
      {\Meqtypes{\Mw}{T'}{U'}}
      {\Meqtypes{\Mw}{T}{U}}
    }
  }$.
\item \llink{176}
  $\Mfun
  {\NUPRLcomputestoallb{\Mw}{U}{U'}}
  {\Mfun
    {\NUPRLcomputestoallb{\Mw}{T}{T'}}
    {\Mfun
      {\Meqtypes{\Mw}{T}{U}}
      {\Meqtypes{\Mw}{T'}{U'}}
    }
  }$,
  where the restriction to $\NUPRLcomputestoallbSYMB{\Mw}$ is due to
  the counterexample provided above
\item \llink{183}
  $\Mfun
  {\NUPRLcomputestoallb{\Mw}{a}{a'}}
  {\Mfun
    {\NUPRLcomputestoallb{\Mw}{b}{b'}}
    {\Mfun
      {\Mintype{\Mw}{a}{b}{A}}
      {\Mintype{\Mw}{a'}{b'}{A}}
    }
  }$,
  where the restriction to $\NUPRLcomputestoallbSYMB{\Mw}$ is due to
  the counterexample provided above.
  In particular, as indicated in \Cref{thm:TS}, this semantics is
  closed under $\beta$-reduction, as $\beta$-reduction does not modify
  the current world, i.e.,
  $\NUPRLcomputestoallb{\Mw}{\NUPRLapp{(\NUPRLlam{x}{t_1})}{t_2}}{\NUPRLsuba{t_1}{x}{t_2}}$,
  for all world $\Mw$.
\item \llink{190}
  $\Mfun
  {\NUPRLcomputestoallb{\Mw}{a}{a'}}
  {\Mfun
    {\NUPRLcomputestoallb{\Mw}{b}{b'}}
    {\Mfun
      {\Mintype{\Mw}{a'}{b'}{A}}
      {\Mintype{\Mw}{a}{b}{A}}
    }
  }$.
\end{itemize}

\subsection{Different Levels of Effects}
\label{sec:different-kinds-of-effects}

As mentioned above, \Btt provides types that allow capturing different
levels and kinds of effects, which we summarize here using natural
numbers as an example:
\begin{itemize}
\item \emph{Read \& write}: The type $\NUPRLnat$ according to
  \Cref{fig:forcing} is the type of potentially ``fully'' effectful
  numbers that can both read and write, as they can modify the world
  they compute against (e.g.,
  $\Nsq{(\Nchoose{\Mcn}{\metanat{1}})}{\Mcn}$), and in addition can
  compute to different values in successive worlds, and therefore
  return different values depending on the values they read (e.g.,
  $\Nread{\Mcn}$).
\item \emph{Write-only}: The type $\Nnoreadmod{\NUPRLnat}$ according to
  \Cref{fig:forcing} is the type of potentially effectful numbers that
  can write but effectively cannot read (or in a limited way) because
  they are constrained to compute to the same number in all extensions
  of the current world, which therefore limits their use of reading,
  as for example, reading a reference cells is likely to influence the
  outcome of the computation (e.g.,
  $\Nsq{(\Nchoose{\Mcn}{\metanat{1}})}{\metanat{0}}$, which writes but
  does not read).
  Note that here and below, the read and write constraints are only
  indicative as for example $\Nsq{\Nread{\Mcn}}{\metanat{0}}$ reads
  but still belongs to $\Nnoreadmod{\NUPRLnat}$, and is considered
  write-only on the basis that it is observationally equivalent to
  $\metanat{0}$.%
\item \emph{Read-only}: $\Nnowritemod{\NUPRLnat}$ is the type of potentially
  effectful numbers that can read but effectively cannot write as
  $\NUPRLnat$ is the type of numbers that can both read and write
  (e.g., $\Nread{\Mcn}$, which reads but does not write).
\item \emph{No read or write}: The type $\Nnoreadwritemod{\NUPRLnat}$ is the
  type of potentially effectful numbers that can only use effects in
  a limited way and effectively cannot read or write because the
  $\Nnowritemod{\_}$ operator requires computations to end in the same
  worlds they started with, effectively preventing writes, while
  $\Nnoreadmod{\_}$ requires computations to compute to the same value
  in successive worlds, effectively preventing reads (e.g.,
  $\metanat{0}$, but also $\Nsq{\Nread{\Mcn}}{\metanat{0}}$, which
  reads in a limited way that does not affect the computation, or
  $\Nsq{(\Nchoose{\Mcn}{\Nread{\Mcn}})}{\metanat{0}}$, which writes in
  a limited way that does not affect the computation).
\item \emph{Pure}: $\Npuremod{\NUPRLnat}$ is the type of pure natural
  numbers, that do not contain choices, which is a syntactic
  restriction (e.g., $\metanat{0}$ but not
  $\Nsq{\Nread{\Mcn}}{\metanat{0}}$ as it contains a choice name
  $\Mcn$ even though this name does not affect the computation).
\end{itemize}

We therefore obtain the following inclusions, where we write
$T\subseteq{U}$ for $T$ is a subtype of $U$, i.e., all equal members
of $T$ are equal members of $U$:
$\Npuremod{\NUPRLnat}\subseteq\Nnowritemod{\NUPRLnat}\subseteq\NUPRLnat$ and
$\Npuremod{\NUPRLnat}\subseteq\Nnoreadmod{\NUPRLnat}\subseteq\NUPRLnat$.

\subsection{Inference Rules}

In \Btt, sequents are of the form
$\Nsequentext
{H}
{T}
{t}$,
where $H$ is a list of hypotheses $h_1,\dots,h_n$, and a hypothesis
$h$ is of the form $\NUPRLhyp{x}{A}$, where the variable $x$ stands
for the name of the hypothesis and $A$ its type.
Such a sequent denotes that, assuming $h_1,\dots,h_n$, the term $t$ is
a member of the type $T$, and that therefore $T$ is a type. The
term~$t$ in this context is called the \emph{realizer}
of~$T$.
Realizers are sometimes omitted when irrelevant to the discussion, in
particular when $T$ is an equality type, which can be realized by any
term according to their semantics in \Cref{fig:forcing}.
A rule is a pair of a conclusion sequent $S$ and a list of premise
sequents, $S_{1},\cdots, S_{n}$ (written as usual using a fraction
notation, with the premises on top).
\Cref{sec:inf-rules-appx} provides a sample of \Btt's key inference
rules, while \Cref{sec:principles} provides examples of rules that
only hold for some instances of \Btt, and \Cref{sec:assumptions}
presents the continuity rule.

\section{Principles (In)Compatible with \texorpdfstring{\Btt}{BoxTT}}
\label{sec:principles}

\Btt enables the study of various effectful type theories and
their associated theories and in particular, their (in)compatibility
with classical reasoning.
In fact, in~\cite{Cohen+Rahli:fscd:2022} we have identified instances
of \Btt that are agnostic with respect to classical reasoning, in that
they are consistent with the Law of Excluded Middle (LEM)~(\flink{lem.lagda}),
and other instances that are incompatible with classical reasoning, in
that they allow validating the \emph{negation} of LEM~(\flink{not\_lem.lagda}),
the Limited Principle of Omniscience
(LPO)~\cite[p.9]{bishop1967foundations}~(\flink{not\_lpo.lagda}),
and Markov's Principle (MP)~(\flink{not\_mp.lagda}).%

In particular, we have shown that instantiating $\Mctype$ with either
references of choice sequences and $\MbarmodSYMB$ with a Beth modality,
derived from the Beth covering presented in \Cref{ex:coverings}, leads
to theories that invalidate LEM, i.e., such that the following holds:
$$\Mallp{\Mparam{\Mw}{\Mworld}}
{\neg\Minhtype{\Mw}{\Nproduct{P}{\NUPRLuniverse{i}}{\Nsquash{\NUPRLunion{P}{\NUPRLnot{P}}}}}}$$
The following inference rule is therefore consistent with those
instances of \Btt:
$$\infer[]
{\Nsequent{\NUPRL{H}}{\NUPRLnot{\Nproduct{P}{\NUPRLuniverse{i}}{\Nsquash{\NUPRLunion{P}{\NUPRLnot{P}}}}}}}
{}$$
However, instantiating $\Mctype$ with either references of choice
sequences and $\MbarmodSYMB$ with an Open modality, derived from the
Open covering presented in \Cref{ex:coverings}, leads to theories
that are compatible with LEM, i.e., such that the following holds
(using classical reasoning in the metatheory to validate this axiom):
$$\Mallp{\Mparam{\Mw}{\Mworld}}
{\Minhtype{\Mw}{\Nproduct{P}{\NUPRLuniverse{i}}{\Nsquash{\NUPRLunion{P}{\NUPRLnot{P}}}}}}$$
The following inference rule is therefore consistent with those
instances of \Btt:
$$\infer[]
{\Nsequent{\NUPRL{H}}{\Nproduct{P}{\NUPRLuniverse{i}}{\Nsquash{\NUPRLunion{P}{\NUPRLnot{P}}}}}}
{}$$

While applying the $\Nnowritemod{\_}$ and $\Nnoreadmod{\_}$ modalities
to $\NUPRLunionSYMB$ does not change the above results, this is not
the case about MP and LPO.
Here we demonstrate how the statements and proofs presented
in~\cite{Cohen+Rahli:fscd:2022} can be translated to the current
modified theory.
In particular, the statments are translated so that they use
$\NUPRLnatnrw$ and $\Nboolnrw$, where the $\Nnoreadmod{\_}$ modality
was built in the semantics of the theory used
in~\cite{Cohen+Rahli:fscd:2022}.
The use of the $\Nnowritemod{\_}$ modality is further discussed below
since there was no object computation to extend a world
in~\cite{Cohen+Rahli:fscd:2022}.
We demonstrate this with the translation of MP:
$$\Mallp{\Mparam{\Mw}{\Mworld}}
{\neg\Minhtype
  {\Mw}
  {\Nproduct
    {f}
    {\NUPRLfun{\NUPRLnatnrw}{\Nboolnrw}}
    {\NUPRLfun
      {\NUPRLnot{\NUPRLnot{(\Nsum{n}{\NUPRLnatnrw}{\Nassert{\NUPRLapp{f}{n}}})})}}
      {\NUPRLsquash{\Nsum{n}{\NUPRLnatnrw}{\Nassert{\NUPRLapp{f}{n}}}}}}}}$$

The proof of the validity of the above statement, i.e., the negation
of MP, relies crucially on the fact that \Btt's computation system
includes stateful computations that can evolve non-deterministically
over time. In particular, as discussed above, \Btt not only supports
choices in the form of reference cells, but also choice sequences.
As opposed to reference cells which hold a mutable choice, a choice
sequence is an ever growing sequence of immutable choices. To access
the $n$th choice, the $\MgetchoiceSYMB$ function needs to be of type
$\Mfun{\Mworld}{\Mfun{\Mcname}{\Mfun{\nat}{\Mctype}}}$; the
$\NreadSYMB$ operator needs to be updated accordingly:
$\Nread{\NUPRLapppar{\Mcn}{n}}$; as well as its operational semantics
${\Nread{\NUPRLapppar{\Mcn}{n}}}\NstepwwSYMB{\Mw}{\Mw}\Mctot{\Mgetchoiceb{\Mw}{\Mcn}{n}}$.

This validity proof was possible in~\cite{Cohen+Rahli:fscd:2022}
because there was then no way for computations to extend the world
they compute in.
In that proof, $f$ is instantiated with
$\NUPRLlam{n}{\Nread{\NUPRLapppar{\Mcn}{n}}}$ for some choice sequence
name $\Mcn$, which requires choices to be Booleans. We can prove that
$\Mcn$ inhabits $\NUPRLfun{\NUPRLnatnrw}{\Nboolnrw}$, and in
particular we can use the no-read modality $\Nnoreadmod{\_}$, because
choices made by choice sequences are immutable, and so for a given
$n\in\NUPRLnatnrw$, the term $\Nread{\NUPRLapppar{\Mcn}{n}}$ will
always compute to the same Boolean.
We then prove that even though
$\NUPRLnot{\NUPRLnot{\Nsum{n}{\NUPRLnatnrw}{\Nassert{\NUPRLapp{f}{n}}}}}$
holds because in any world it is always possible to find an extension
where $\Mcn$ makes the $\Nbtrue$ choice,
$\NUPRLsquash{\Nsum{n}{\NUPRLnatnrw}{\Nassert{\NUPRLapp{f}{n}}}}$ does
not hold because there is no way to prove that $\Mcn$ will ever make
the choice $\Nbtrue$.
It then follows that the following inference rule is consistent with
\Btt where $\Mctype$ is instantiated with choices sequences, and
$\MbarmodSYMB$ with the Beth modality.
$$\infer[]
{\Nsequent{\NUPRL{H}}{\NUPRLnot{\Nproduct
    {f}
    {\NUPRLfun{\NUPRLnatnrw}{\Nboolnrw}}
    {\NUPRLfun
      {\NUPRLnot{\NUPRLnot{(\Nsum{n}{\NUPRLnatnrw}{\Nassert{\NUPRLapp{f}{n}}})})}}
      {\NUPRLsquash{\Nsum{n}{\NUPRLnatnrw}{\Nassert{\NUPRLapp{f}{n}}}}}}}}}
{}$$

Let us go back to the use of the $\Nnowritemod{\_}$ modality.
Because computations can now update the world they compute in, it
could be that a $n\in\NUPRLnat$ computes to a $\metanat{k}$ while
updating the world by making the choice $\Nbtrue$.
Therefore, while we can prove the validity of the negation of MP as
formulated above, we cannot validate the following version using the
same method:
$$\Mallp{\Mparam{\Mw}{\Mworld}}
{\neg\Minhtype{\Mw}{\Nproduct{f}{\NUPRLfun{\NUPRLnatnr}{\Nboolnr}}
    {\NUPRLfun{\NUPRLnot{\NUPRLnot{(\Nsum{n}{\NUPRLnatnr}{\Nassert{\NUPRLapp{f}{n}}})})}}
    {\NUPRLsquash{\Nsum{n}{\NUPRLnatnr}{\Nassert{\NUPRLapp{f}{n}}}}}}}}$$
Instantiating $f$ with $\NUPRLlam{n}{\Nread{\NUPRLapppar{\Mcn}{n}}}$
again, we can still prove that
$\NUPRLnot{\NUPRLnot{\Nsum{n}{\NUPRLnatnr}{\Nassert{\NUPRLapp{f}{n}}}}}$
holds because once again we can exhibit a world where the choice
$\Nbtrue$ is made.
However, we cannot prove anymore that
$\NUPRLsquash{\Nsum{n}{\NUPRLnatnr}{\Nassert{\NUPRLapp{f}{n}}}}$,
i.e.,
$\NUPRLsquash{\Nsum{n}{\NUPRLnatnr}{\Nassert{\Nread{\NUPRLapppar{\Mcn}{n}}}}}$,
does not hold because we can instantiate $n$ with
$\Nsq{(\Nchoose{\Mcn}{\Nbtrue})}{\metanat{x}}$, where the choice made
with $\Nchoose{\Mcn}{\Nbtrue}$ happens to be the $x$th choice.
Using the $\Nnowritemod{\_}$ modality prevents the use of
$\Nchoose{\Mcn}{\Nbtrue}$.

\section{Proof of Continuity}
\label{sec:continuity-proof}

Having covered \Btt, its support for effects, and its semantics in
\Cref{sec:background}, we now state the version of Brouwer's
continuity principle that we validate in this paper, along with its
effectful realizer. For this we first introduce the following
notation:
$\Npureproduct{a}{A}{B}\DEF\Nproduct{a}{\Npuremod{A}}{B}$,
which quantifies over pure elements of type $A$.
In addition, in the rest of this paper we write $\NUPRLnatr$ for
$\NUPRLnatnr$ for readability, and $\Nbaire$ and $\Nbaire_n$ stand for
$\NUPRLfun{\NUPRLnatr}{\NUPRLnatr}$ and
$\NUPRLfun{\NUPRLset{x}{\NUPRLnatr}{x<n}}{\NUPRLnatr}$, respectively.

\begin{thm}[\llink{197}~Continuity Principle]
\label{def:continuity+realizer}
The following continuity principle, referred to as $\contp$, is valid w.r.t.\ the semantics
presented in \Cref{sec:forcing} (further assuming the properties $\assa$, $\assb$,
and $\assc$, presented in \Cref{sec:assumptions}):
\begin{equation}
\label{equ:continuity-with-purity}
\Npureproduct
{F}
{\NUPRLfun{\Nbaire}{\NUPRLnatr}}
{\Npureproduct
  {\Mba}
  {\Nbaire}
  {\Nqsquash
    {\Nsum
      {n}
      {\NUPRLnatr}
      {\Npureproduct
        {\Mbb}
        {\Nbaire}
        {\Nimplies
          {(\NUPRLequality{\Nbaire_n}{\Mba}{\Mbb})}
          {(\NUPRLequality{\NUPRLnatr}{\NUPRLapppar{F}{\Mba}}{\NUPRLapppar{F}{\Mbb}})}
        }
      }
    }
  }
}
\end{equation}
and is inhabited by
\begin{equation}
\label{equ:mod-comp}
\NUPRLlam{F}{\NUPRLlam{\Mba}{\NUPRLpair{\Nmod{F}{\Mba}}{\NUPRLlam{\Mbb}{\NUPRLlam{e}{\Naxiom}}}}}
\end{equation}
where $\Nmod{F}{\alpha}$ is the modulus of continuity of the function
$F\in\NUPRLfun{\Nbaire}{\NUPRLnatr}$ at  $\alpha\in\Nbaire$ and is
computed by the following expression:
$$\begin{array}{lll}
\Nmod{F}{\alpha} & \DEF & \Nfresh{x}
{(\Nsq
  {(\Nchoose{x}{\metanat{0}})}
  {\Nsq
    {\NUPRLapppar{F}{\Nupd{x}{\alpha}}}
    {\Nsucc{\Nread{x}}}
  })
}\\
\Nupd{\Mcn}{\alpha}
& \DEF & \NUPRLlam
    {x}
    {(\Nlet
      {y}
      {x}
      {(\Nsq
        {(\Niflt{\Nread{\Mcn}}{y}{\Nchoose{\Mcn}{y}}{\Naxiom})}
        {\NUPRLapppar{\alpha}{y}})})}\end{array}$$
More precisely, the following is true for any world $\Mw$:
$$\Mmemtype{\Mw}{\NUPRLlam{F}{\NUPRLlam{\Mba}{\NUPRLpair{\Nmod{F}{\Mba}}{\NUPRLlam{\Mbb}{\NUPRLlam{e}{\Naxiom}}}}}}{\contp}$$
\end{thm}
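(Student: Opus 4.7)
The plan is to unfold the semantics of the continuity type layer by layer, reducing the statement to a concrete claim about the behavior of the realizer in worlds extended by a fresh choice name, and then to exploit the purity of $F$, $\alpha$, and $\beta$ to show that the execution of $F$ on $\Nupd{x}{\alpha}$ cannot be distinguished from its execution on $\Nupd{x}{\beta}$ when $\alpha$ and $\beta$ agree sufficiently far.

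First I would peel off the outer $\NpureproductSYMB$ quantifiers: fix a world $\Mw$ and assume pure members $F \in \NUPRLfun{\Nbaire}{\NUPRLnatr}$ and $\alpha \in \Nbaire$ at some extension $\Mw'$ of $\Mw$. Using the modality closure rules and the semantics of $\Nqsquash{\_}$, this reduces the goal to showing that in every extension of $\Mw'$ there is a witness $n \in \NUPRLnatr$ and a proof that for every pure $\beta$ with $\NUPRLequality{\Nbaire_n}{\alpha}{\beta}$, the values $\NUPRLapppar{F}{\alpha}$ and $\NUPRLapppar{F}{\beta}$ are equal in $\NUPRLnatr$. The witness will be the value to which $\Nmod{F}{\alpha}$ reduces. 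The next step is to compute this reduction: starting from $\Mw'$, the $\Nfresh{x}{\cdot}$ step produces a world $\Mw'' = \MstartNewChoice{\Mw'}{\Mrc}$ with fresh name $\Mcn = \MnewChoice{\Mw'}$ and restriction $\Mrc$; then $\Nchoose{\Mcn}{\metanat{0}}$ initializes the cell; then $\NUPRLapppar{F}{\Nupd{\Mcn}{\alpha}}$ runs, producing some final world $\Mw_f$ with $\Mcn$ holding some value $n$; and finally $\Nsucc{\Nread{\Mcn}}$ reduces to $\metanat{n+1}$.

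The critical observation, which I expect assumptions $\assa$, $\assb$, and $\assc$ are designed to support, is that since $F$ and $\alpha$ are name-free (by purity), the only way $\Mcn$ can be touched during $\NUPRLapppar{F}{\Nupd{\Mcn}{\alpha}}$ is through the explicit calls made by $\Nupd{\Mcn}{\alpha}$ itself. Each such call at argument $k$ tests $\Niflt{\Nread{\Mcn}}{k}{\Nchoose{\Mcn}{k}}{\Naxiom}$ and then returns $\NUPRLapppar{\alpha}{k}$, so the final content $n$ of $\Mcn$ is exactly the largest argument to which $\alpha$ was applied during the evaluation. Hence the modulus $n+1$ bounds the initial segment of $\alpha$ actually inspected by $F$. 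Given any pure $\beta$ agreeing with $\alpha$ on $\{0,\dots,n\}$, I would argue by a simulation/bisimulation-style lemma: running $\NUPRLapppar{F}{\Nupd{\Mcn}{\beta}}$ in the same setup tracks step-by-step the computation with $\alpha$, because at every call to $\Nupd{\Mcn}{\alpha}$ on an argument $k \le n$ the two sequences return identical values, and by induction no larger argument is ever produced. Consequently $\NUPRLapppar{F}{\alpha}$ and $\NUPRLapppar{F}{\beta}$ produce the same numeric value, yielding the required equality in $\NUPRLnatr$.

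The hard part will be making the above simulation argument formal inside \Btt's world-indexed forcing semantics. The realizer mixes world-extension effects (from $\NfreshSYMB$ and $\NchooseSYMB$) with computations on $F$ and $\alpha$ that are typed in a world that does not yet contain $\Mcn$, so I would need a careful renaming/swapping lemma to lift the membership of $F$ and $\alpha$ from $\Mw'$ to the fresh-extended world $\Mw''$ without loss, and to relate the two runs (on $\alpha$ and on $\beta$) which traverse different but structurally parallel sequences of worlds. The assumptions $\assa$, $\assb$, $\assc$ should provide exactly the freshness, non-interference, and coercion properties needed for this; the remainder is then an induction on the reduction of $F$ applied to $\Nupd{\Mcn}{\alpha}$, tracking that the content of $\Mcn$ never exceeds $n$ and that reads and writes via $\Nupd{\Mcn}{\cdot}$ on $\alpha$ and $\beta$ yield identical observable effects on arguments below $n+1$. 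Once this simulation is established, closing under the equality modality $\MbarmodSYMB$ and rewrapping with the $\Nqsquash{\_}$ semantics from \Cref{fig:forcing} completes the membership claim.
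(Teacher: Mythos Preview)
Your proposal captures the right intuition but underestimates one of the three main pieces of the argument and takes a somewhat different route on another.

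The genuine gap is in establishing that $\Nmod{F}{\alpha} \in \NUPRLnatr$, i.e., that the modulus inhabits the \emph{no-read} naturals. You compute the modulus once at a fixed extension and obtain some $\metanat{n+1}$, but membership in $\NUPRLnatr$ demands that the same numeral be reached from \emph{every} extension of the current world. The difficulty is that the fresh name produced by $\NfreshSYMB$ depends on the starting world: at one extension you get $\Mcn$, at another $\Mcn'$, and the two runs $\NUPRLapppar{F}{\Nupd{\Mcn}{\alpha}}$ and $\NUPRLapppar{F}{\Nupd{\Mcn'}{\alpha}}$ take place in worlds that need not be $\MextSYMB$-related at all. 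The paper handles this with a dedicated syntactic two-name simulation $\Msimdiff{t_1}{t_2}{\Mcn}{\Mcn'}{\alpha}$, relating terms that differ only by replacing $\Nupd{\Mcn}{\alpha}$ with $\Nupd{\Mcn'}{\alpha}$, and shows it is preserved by computation while keeping $\Mgetchoice{\cdot}{\Mcn}$ and $\Mgetchoice{\cdot}{\Mcn'}$ in lockstep (this is where $\assc$ enters). Your ``renaming/swapping lemma'' gestures in this direction, but your text aims it at lifting $F$ and $\alpha$ into the fresh-extended world rather than at relating two runs with distinct fresh names in unrelated worlds. This step is also where $\assa$ is actually consumed: it extracts a concrete witness world from $\MbarmodSYMB$ so that one has a specific terminating computation of $\NUPRLapppar{F}{\Nupd{\Mcn}{\alpha}}$ to simulate against.

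For the final equality $\NUPRLapppar{F}{\alpha} = \NUPRLapppar{F}{\beta}$, you propose simulating $\NUPRLapppar{F}{\Nupd{\Mcn}{\alpha}}$ against $\NUPRLapppar{F}{\Nupd{\Mcn}{\beta}}$. The paper instead first uses the semantic equalities $\alpha = \Nupd{\Mcn}{\alpha}$ and $\beta = \Nforce{\beta}$ in $\Nbaire$ (both hold because the codomain $\NUPRLnatr$ is no-read), reducing the goal to a simulation between $\NUPRLapppar{F}{\Nupd{\Mcn}{\alpha}}$ and $\NUPRLapppar{F}{\Nforce{\beta}}$, where $\Nforce{\beta} = \NUPRLlam{x}{\Nlet{y}{x}{\NUPRLapppar{\beta}{y}}}$ forces its argument but performs no writes. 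The asymmetry means only one side touches the world, which makes the simulation invariant simpler than tracking two interleaved writers. Your symmetric variant is plausible, but in either version you must separately establish that the value stored in $\Mcn$ never exceeds its final value throughout the run; the paper isolates this as a standalone lemma exploiting the monotone update in $\NupdSYMB$, whereas your ``by induction no larger argument is ever produced'' asserts exactly this without supplying the argument.
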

The rest of this section describes the proof of this theorem.
First, we intuitively explain how
$\Nmod{F}{\alpha}$ computes the modulus of continuity of a function
$F$ at a point $\alpha$. This is done using the following steps:
\begin{enumerate}
\item selecting, using $\NfreshSYMB$, a fresh choice name $\Mcn$ (the
  variable $x$ gets replaced with the freshly generated name $\Mcn$
  when computing $\NmodSYMB$), with the appropriate restriction (here
  a restriction that requires choices to be numbers as mentioned in
  \Cref{sec:syn-sem-fresh});
\item setting $\Mcn$ to $0$ using $\Nchoose{x}{\metanat{0}}$ (where
  $x$ is $\Mcn$ when this expression computes);
\item applying $F$ to a modified version of $\alpha$,
  namely $\Nupd{\Mcn}{\alpha}$, which computes as $\alpha$, except
  that in addition it increases $\Mcn$'s value every time $\alpha$ is
  applied to a number larger than the last chosen one;
\item returning the last chosen number using
  $\Nread{x}$ (again $x$ is $\Mcn$ when this
  expression computes), increased by one in order to return a number
  higher than any number $F$ applies $\alpha$ to.
\end{enumerate}

Let us illustrate how $\NmodSYMB$ computes through the following
example:
\begin{example}
Given the following expressions:
$$\begin{array}{l}
F\DEF\NUPRLlam{\Mba}{\NUPRLapppar{\Mba}{\NUPRLapppar{\Mba}{\metanat{2}}}}\\
\Mba\DEF\NUPRLlam{n}{\Nsucc{n}}
\end{array}$$
the expression $\Nmod{F}{\Mba}$ computes as follows.
First, it generates a fresh name $\Mcn$, which is used to record the
values to which $\Mba$ is applied to.
Then, $\Mcn$ is initialized with $\metanat{0}$.
It then computes $\NUPRLapppar{F}{\Nupd{\Mcn}{\Mba}}$, i.e.,
$t_1\DEF\NUPRLapppar{\Nupd{\Mcn}{\Mba}}{\NUPRLapppar{\Nupd{\Mcn}{\Mba}}{\metanat{2}}}$.
Due to $\NupdSYMB$'s definition, the argument
$t_2\DEF\NUPRLapppar{\Nupd{\Mcn}{\Mba}}{\metanat{2}}$ is evaluated,
which in turn results in the evaluation of $\metanat{2}$.
Since~$\metanat{2}$ is a value, $\Mcn$'s value, which is $\metanat{0}$,
is compared to $\metanat{2}$, and since $2>0$, $\Mcn$ is updated with
$\metanat{2}$. The computation of $t_2$ proceeds by applying $\Mba$ to
$\metanat{2}$, which results in $\metanat{3}$.
Going back to the computation of $t_1$, once the argument $t_2$ has
been evaluated to $\metanat{3}$, $\Mcn$'s value, which is now
$\metanat{2}$ is updated with $\metanat{3}$ since $3>2$. The
computation of $t_1$ then proceeds by applying $\Mba$ to
$\metanat{3}$, which is the result returned by
$\NUPRLapppar{F}{\Nupd{\Mcn}{\Mba}}$.
Finally, $\NmodSYMB$ reads $\Mcn$'s value, which is now
$\metanat{3}$ and returns~$\metanat{4}$, which is higher
than the two values $\Mba$ is applied to, namely $\metanat{2}$ and
$\metanat{3}$.
\end{example}

We divide the proof of the validity of the continuity principle, i.e.,
that it is inhabited by the expression presented in
\Cref{equ:mod-comp}, into the following three components, where
$F\in\NUPRLfun{\Nbaire}{\NUPRLnatr}$ and $\Mba\in\Nbaire$:
\begin{itemize}
\item Proving that the modulus is a number, i.e., $\Nmod{F}{\Mba}\in\NUPRLnatr$;
\item Proving that $\Nmod{F}{\Mba}$ returns the highest number that $\Mba$
  is applied to in the computation it performs;
\item Given $\Mbb\in\Nbaire$, proving that $\NUPRLapppar{F}{\Mba}$ and
  $\NUPRLapppar{F}{\Mbb}$ return the same number if $\Mba$ and $\Mbb$
  agree up to $\Nmod{F}{\Mba}$.
\end{itemize}

\subsection{Purity}
\label{sec:purity}

The proof presented below relies, among other things, on the fact that
$\NUPRLequality{\Nbaire}{\Mba}{\Nupd{\Mcn}{\Mba}}$ given a fresh
choice name $\Mcn$, where $\Nupd{\Mcn}{\Mba}$ is used to ``probe'' $F$
in $\Nmod{F}{\Mba}$.
This equality is however not true for
$\Mba\in\NUPRLfun{\NUPRLnat}{\NUPRLnat}$. Consider for example the
function
$\Mba\DEF\NUPRLlam{n}{\Niflt{\Nread{\gamma}}{\metanat{1}}{\metanat{0}}{\metanat{1}}}$,
where $\gamma$ is a choice name. To prove
$\NUPRLequality{\NUPRLfun{\NUPRLnat}{\NUPRLnat}}{\Mba}{\Nupd{\Mcn}{\Mba}}$
we have to prove that
$\NUPRLequality{\NUPRLnat}{\NUPRLapppar{\Mba}{n}}{\NUPRLapppar{\Nupd{\Mcn}{\Mba}}{n}}$
for all $n\in\NUPRLnat$.
Given $n\DEF\Nsq{(\Nchoose{\gamma}{\metanat{1}})}{\metanat{1}}$, the term
$\NUPRLapppar{\Mba}{n}$ computes to $\metanat{0}$ in a world where
$\gamma$ is $\metanat{0}$ because $n$ is not evaluated, while
$\NUPRLapppar{\Nupd{\Mcn}{\Mba}}{n}$ first evaluates $n$, setting
$\gamma$ to $\metanat{1}$, and therefore computes to~$\metanat{1}$ in
any world.
To avoid this, instead of using $\NUPRLnat$, we use $\NUPRLnatr$,
thereby limiting the effects expressions can have.

According to $\NUPRLnatr$'s semantics, which is as follows:
$$\Mintype{\Mw}{t}{t'}{\NUPRLnatr}
\iff
\Mbarmodw
{\Mw}
{\Mw'}
{\Mexp{\Mparam{n}{\nat}}
  {\NUPRLmetaand
    {\NUPRLcomputestoall{\Mw'}{t}{\metanat{n}}}
    {\NUPRLcomputestoall{\Mw'}{t'}{\metanat{n}}}
  }
}$$
to prove that $\Nmod{F}{\Mba}\in\NUPRLnatr$ w.r.t.\ a world $\Mw$, we
have to prove that it computes to the same number in all extensions of
$\Mw$. However, this will not be the case if $F$ or $\alpha$ have side
effects. For example, if $F$ is
$\NUPRLlam{f}{\Nsq{\NUPRLapppar{f}{!{\Mcn_0}}}{\metanat{0}}}$, for
some choice name~$\Mcn_0$, then it could happen that~$f$ gets applied
to $\metanat{0}$ in some world $\Mw_1$ if $!{\Mcn_0}$
returns~$\metanat{0}$, and to~$\metanat{1}$ in some world
$\Mextp{\Mw_2}{\Mw_1}$ if $!{\Mcn_0}$ returns~$\metanat{1}$. As
$\Nmod{F}{\Mba}$ returns the highest number that $F$ applies its
argument to, then $\Nmod{F}{\Mba}$ would in this instance return
different numbers in different extensions, and would therefore not
inhabit $\NUPRLnatr$.

Therefore, to validate a version of continuity which requires the
modulus of continuity to be ``time-invariant'' as in
\Cref{equ:continuity-with-purity}, one can require that both $F$ and
$\Mba$ are pure (i.e., name-free) terms.
Thanks to $\NpureproductSYMB$, we get to assume that both $F$ and
$\Mba$ are in $\Npure$ and therefore are name-free.
Note that it would not be enough to use the following pattern:
$\Nproduct{F}{\NUPRLfun{\Nbaire}{\NUPRLnatr}}{\NUPRLfun{(\NUPRLequality{\Npure}{F}{F})}{\dots}}$,
because then for the continuity principle to even be a type, we would
have to prove that $F$ is name-free to prove that
$\NUPRLequality{\Npure}{F}{F}$ is a type, only knowing that
$F\in\NUPRLfun{\Nbaire}{\NUPRLnatr}$, due to the semantics of equality
types, which is not true in general, and which is why we instead use
$\Nisect{\_}{\Npure}$.

Let us now discuss a potential solution to avoid such a purity
requirement, as well as some difficulties it involves, which we leave
investigating to future work.
One could try to validate instead the following version of the
continuity axiom, which mixes the uses of $\NUPRLnatr$ and
$\NUPRLnat$, where
$\Nqbaire{n}=\NUPRLfun{\NUPRLset{x}{\NUPRLnatr}{\Nqlt{x}{n}}}{\NUPRLnatr}$,
assuming the
existence of some type $\Nqlt{x}{n}$ that can relate an
$x\in\NUPRLnatr$ with an $n\in\NUPRLnat$:
$$\Nproduct
{F}
{\NUPRLfun{\Nbaire}{\NUPRLnatr}}
{\Nproduct
  {\alpha}
  {\Nbaire}
  {\Nqsquash
    {\Nsum
      {n}
      {\NUPRLnat}
      {\Nproduct
        {\beta}
        {\Nbaire}
        {\Nimplies
          {(\NUPRLequality{\Nqbaire{n}}{\alpha}{\beta})}
          {(\NUPRLequality{\NUPRLnatr}{\NUPRLapppar{F}{\alpha}}{\NUPRLapppar{F}{\beta}})}
        }
      }
    }
  }
}$$

A first difficulty with this is the type $\Nqlt{x}{n}$, which to prove
that it holds in some world~$\Mw$ would require proving that~$x$ is
equal to all possible values that~$n$ can take in extensions of~$\Mw$.
Another related difficulty is that it is at present unclear whether
this principle can be validated constructively. More precisely, proving its
validity would require:
\begin{enumerate}
\item Proving that $\Nmod{F}{\Mba}\in\NUPRLnat$, which is now straightforward.
\item Next, we have to prove that
$\Nproduct
{\beta}
{\Nbaire}
{\Nimplies
  {(\NUPRLequality{\Nqbaire{\Nmod{F}{\Mba}}}{\alpha}{\beta})}
  {(\NUPRLequality{\NUPRLnatr}{\NUPRLapppar{F}{\alpha}}{\NUPRLapppar{F}{\beta}})}
}$,
i.e., given $\beta\in\Nbaire$ such that
$\NUPRLequality{\Nqbaire{\Nmod{F}{\Mba}}}{\alpha}{\beta}$, we have to prove
$\NUPRLequality{\NUPRLnatr}{\NUPRLapppar{F}{\alpha}}{\NUPRLapppar{F}{\beta}}$.
The assumption $\NUPRLequality{\Nqbaire{\Nmod{F}{\Mba}}}{\alpha}{\beta}$ tells us
that given $k\in\NUPRLnatr$ such that $\Nqlt{k}{\Nmod{F}{\Mba}}$,
then $\NUPRLequality{\NUPRLnatr}{\NUPRLapppar{\alpha}{k}}{\NUPRLapppar{\beta}{k}}$.
As mentioned above, for $\Nqlt{k}{\Nmod{F}{\Mba}}$ to be true, it must
be that $k$ is less than $\Nmod{F}{\Mba}$ in all extensions of the
current world. However, without the purity constraint,
$\Nmod{F}{\Mba}$ can compute to different numbers in different
extensions.
\end{enumerate}

Going back to our goal
$\NUPRLequality{\NUPRLnatr}{\NUPRLapppar{F}{\alpha}}{\NUPRLapppar{F}{\beta}}$,
given the semantics of $\NUPRLnatr$ presented above, to prove this it
is enough to assume that $\NUPRLapppar{F}{\Nupd{\Mcn}{\alpha}}$
computes to some number $\metanat{m}$ in some world~$\Mw$, and to
prove that $\NUPRLapppar{F}{\beta}$ also computes to $\metanat{m}$ in
$\Mw$.
We can then inspect the computation
$\NUPRLcomputestopp{\Mw}{\Mw_1}{\NUPRLapppar{F}{\Nupd{\Mcn}{\alpha}}}{\metanat{k}}$,
where $\Mcn$ is the name generated by $\Nmod{F}{\Mba}$,
and show that it can be converted into a computation
$\NUPRLcomputestopp{\Mw}{\Mw_2}{\NUPRLapppar{F}{\Mbb}}{\metanat{k}}$, by
replacing $\NUPRLapppar{\alpha}{\metanat{i}}$ with
$\NUPRLapppar{\Mbb}{\metanat{i}}$, whenever we encounter such an
expression.
To do this, we need to know that $\NUPRLapppar{\alpha}{\metanat{i}}$
and $\NUPRLapppar{\beta}{\metanat{i}}$ compute to the same number
using $\NUPRLequality{\Nqbaire{\Nmod{F}{\Mba}}}{\alpha}{\beta}$. However, we only
know that $\metanat{i}$ is less than $\Nmod{F}{\Mba}$ in~$\Mw$, which is
not enough to use this assumption, as $\metanat{i}$ might be greater
than $\Nmod{F}{\Mba}$ in some other world $\Mw'$.
We can address this issue using classical logic to prove that there
exists a $\Mextp{\Mw'}{\Mw}$ such that for all $\Mextp{\Mw''}{\Mw}$,
the smallest number that $\alpha$ is applied to in the computation of
$\Nmod{F}{\Mba}$ w.r.t.~$\Mw'$ is less than the number that
$\Nmod{F}{\Mba}$ computes to w.r.t.~$\Mw''$.
In the argument sketched above we can then use $\Mw'$ instead of $\Mw$.

\subsection{Assumptions}
\label{sec:assumptions}

Before we prove that the continuity principle is inhabited, we will
summarize here the assumptions we will be making to prove this result,
where $\Mr$ is a restriction that requires choices to be numbers:
$$\begin{array}{l@{\hspace{0.1in}}l}
(\assa~\llink{202}) & \Mallb{\Mw}{\Mworld}{P}{\Mwpred{\Mw}}
{\Mfun
  {\Mbarmod{\Mw}{P}}
  {\Mexw{\Mw}{P}}
}
\\
(\assb~\llink{205}) & \Mallp
{\Mparam{\Mcn}{\Mcname}\Mparam{\Mw}{\Mworld}\Mparam{n}{\nat}}
{\Mfun
  {\Mcompatible{\Mcn}{\Mw}{\Mr}}
  {\Mallww{\Mchoose{\Mw}{\Mcn}{\metanat{n}}}{\Mw'}{\Mexp{\Mparam{k}{\nat}}{\Mctot{\Mgetchoice{\Mw'}{\Mcn}}=\metanat{k}}}}
}
\\
(\assc~\llink{208}) & \Mallp
{\Mparam{\Mcn}{\Mcname}\Mparam{\Mw}{\Mworld}\Mparam{k}{\nat}}
{\Mfun
  {\Mcompatible{\Mcn}{\Mw}{\Mr}}
  {\Mgetchoice{\Mchoose{\Mw}{\Mcn}{\metanat{k}}}{\Mcn}=\metanat{k}}
}
\end{array}$$

$\assa$ requires the modality $\MbarmodSYMB$ to be non-empty in the
sense that for $\Mbarmod{\Mw}{P}$ to be true, it has to be true for at
least one extension of $\Mw$. This is true about all non-empty
covering relations (i.e., that satisfy the property that if
$\Mcov{\Mw}{\Mopen}$ then $\Mexww{\Mw}{\Mw'}{\Mapppar{\Mopen}{\Mw'}}$~\llink{211}),
and therefore about the Kripke, Beth, and Open modalities which are
derived from such coverings~\cite[Sec.6.2]{Cohen+Rahli:fscd:2022}.
$\assb$ requires that the ``last'' choice of a
$r$-compatible choice name $\Mcn$ is indeed a number.
$\assc$ guarantees that retrieving a choice that was just
made will return that choice.

The last two assumptions are true about $\REF$, the formalization of
references to numbers presented in \Cref{ex:ref}
(\Btt instantiated with a Kriple modality and
references satisfies these properties~\flink{contInstanceKripkeRef.lagda}).
In addition both are true about another kind of stateful computations,
namely a variant of the formalization of choice sequences
presented in~\cite[Ex.5]{Cohen+Rahli:fscd:2022}, where new choices are
pre-pended as opposed to being appended
in~\cite{Cohen+Rahli:fscd:2022} (\Btt instantiated with a Kriple modality and choice
sequences satisfies these properties~\flink{contInstanceKripkeCS.lagda}).

It then follows from \Cref{def:continuity+realizer} that the following
inference rule is consistent with the instances of \Btt mentioned above:
$$\infer[]
{\Nsequentext
  {\NUPRL{H}}
  {\contp}
  {\NUPRLlam{F}{\NUPRLlam{\Mba}{\NUPRLpair{\Nmod{F}{\Mba}}{\NUPRLlam{\Mbb}{\NUPRLlam{e}{\Naxiom}}}}}}
}
{}$$

\subsection{The Modulus is a Number}
\label{sec:mod-is-a-num}

In this section we prove that $\Nmod{F}{\Mba}\in\NUPRLnatr$. More
precisely, we prove the following:
\begin{thm}[\llink{216}~The Modulus is a Number]
\label{thm:modulus-is-a-number}
If $\Nnonames{F}$, $\Nnonames{\alpha}$,
$\Mmemtype{\Mw}{F}{\Nfun{\Nbaire}{\NUPRLnatr}}$, and
$\Mmemtype{\Mw}{\Mba}{\Nbaire}$, for some world $\Mw$, then
\begin{equation}
\label{equ:concl-mod-nat-a}
\Mbarmodw
{\Mw}
{\Mw'}
{\Mexp{\Mparam{n}{\nat}}{\NUPRLcomputestoall{\Mw'}{\Nmod{F}{\Mba}}{\metanat{n}}}}
\end{equation}
\end{thm}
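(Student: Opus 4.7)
The plan is to unfold the modality in \eqref{equ:concl-mod-nat-a} and follow the operational reduction of $\Nmod{F}{\Mba}$ step by step, using the hypothesis $F\in\NUPRLfun{\Nbaire}{\NUPRLnatr}$ to extract the witnessing number from the evaluation of $\NUPRLapp{F}{\Nupd{\Mcn}{\Mba}}$. The assumption $\assa$ will let us pass from the abstract modality wrapping $\NUPRLnatr$ to an actual extension; $\assc$ pins down the initial value of $\Mcn$ right after $\Nchoose{\Mcn}{\metanat{0}}$; and $\assb$ guarantees that reading $\Mcn$ later on returns a concrete number.

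Concretely, after reducing the goal to producing a witness $n$ in some $\Mext{\Mw}{\Mw'}$ supplied by the modality, I would fix an arbitrary $\Mext{\Mw'}{\Mw''}$ and run the reduction of $\Nmod{F}{\Mba}$ in $\Mw''$. The $\NfreshSYMB$ step produces $\Mcn\DEF\MnewChoice{\Mw''}$ and extends the world to $\Mw_0\DEF\MstartNewChoice{\Mw''}{\Mrc}$, where $\Mcn$ is $\Mrc$-compatible by \Cref{def:new-choices}. The $\Nchoose{\Mcn}{\metanat{0}}$ step further extends to $\Mw_1\DEF\Mchoose{\Mw_0}{\Mcn}{\metanat{0}}$, and by \Cref{def:compatibility} plus $\assc$, $\Mcn$ is $\Mrc$-compatible in $\Mw_1$ with value $\metanat{0}$. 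A short calculation shows $\Nupd{\Mcn}{\Mba}\in\Nbaire$ at $\Mw_1$: on input $m\in\NUPRLnatr$, $\NUPRLapp{\Nupd{\Mcn}{\Mba}}{m}$ and $\NUPRLapp{\Mba}{m}$ reduce to the same number (the former additionally possibly writing to $\Mcn$), and $\Mba\in\Nbaire$. Applying the hypothesis on $F$ yields $\NUPRLapp{F}{\Nupd{\Mcn}{\Mba}}\in\NUPRLnatr$ at $\Mw_1$; unfolding this and invoking $\assa$ produces a specific $\Mext{\Mw_1}{\Mw_2}$ together with a number $k\in\nat$ such that $\NUPRLcomputestoall{\Mw_2}{\NUPRLapp{F}{\Nupd{\Mcn}{\Mba}}}{\metanat{k}}$. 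In the terminating world $\Mw_3$, the continuation $\Nsucc{\Nread{\Mcn}}$ reduces to $\metanat{j+1}$, where $\metanat{j}$ is the final $\Mcn$-value in $\Mw_3$ (a number by $\assb$, since $\Mrc$-compatibility of $\Mcn$ persists up to $\Mw_3$). We then take $n\DEF j+1$.

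The main obstacle is uniformity: the conclusion requires a single $n$ valid in all extensions of the chosen $\Mw'$, yet different $\Mext{\Mw'}{\Mw''}$ generate different fresh names $\MnewChoice{\Mw''}$ and pass through different intermediate worlds, so a priori the resulting $j$ could vary with $\Mw''$. This is where the purity hypotheses $\Nnonames{F}$ and $\Nnonames{\Mba}$ become essential: since neither $F$ nor $\Mba$ mentions any choice name, their reductions can interact with $\Mcn$ only through the $\NupdSYMB$ wrapper, and the sequence of $\NchooseSYMB$ actions on $\Mcn$ is dictated entirely by the syntactic structure of $F$ and $\Mba$. The technically hardest step, which I expect to be the heart of the proof, is establishing a suitable renaming/weakening invariance lemma for reductions of name-free terms against worlds that agree on the active name, so that the trace of $\Mcn$-updates produced during the evaluation of $\NUPRLapp{F}{\Nupd{\Mcn}{\Mba}}$ can be transported between any two extensions $\Mext{\Mw'}{\Mw''}$; once this is in hand, the same final $j$ is produced in every such $\Mw''$ and the argument closes.
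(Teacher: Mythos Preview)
Your proposal is correct and follows essentially the same route as the paper: the paper structures the argument as an explicit split into (A) finding the witness $n$ by running the computation once at the world supplied by $\modae$, and (B) showing the same $n$ arises in every extension, and your anticipated ``renaming/weakening invariance lemma for reductions of name-free terms against worlds that agree on the active name'' is exactly what the paper makes precise via the similarity relation $\Msimdiff{t_1}{t_2}{\Mcn_1}{\Mcn_2}{\alpha}$ and its preservation under computation. One small point you elide is that even before the uniformity step a transfer is needed---$\assa$ hands you the computation of $\NUPRLapppar{F}{\Nupd{\Mcn}{\Mba}}$ only at some extension $\Mw_2\sqsupseteq\Mw_1$, not at $\Mw_1$ itself---but the paper dispatches this with the same name-freeness reasoning, so your sketch already contains the right ingredient.
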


To prove the above,
we will make use of the fact that
$\Mmemtype{\Mw}{\Nupd{\Mcn}{\alpha}}{\Nbaire}$
and therefore also
$\Mmemtype{\Mw}{\NUPRLapppar{F}{\Nupd{\Mcn}{\alpha}}}{\NUPRLnatr}$, i.e., by
the semantics of $\NUPRLnatr$ presented in \Cref{sec:purity}, we have for some fresh name $\Mcn$:
\begin{equation}
\label{equ:concl-mod-nat-b}
\Mbarmodw
{\Mw}
{\Mw'}
{\Mexp{\Mparam{n}{\nat}}{\NUPRLcomputestoall{\Mw'}{\NUPRLapppar{F}{\Nupd{\Mcn}{\alpha}}}{\metanat{n}}}}.
\end{equation}
But for this we first need to start
computing $\Nmod{F}{\Mba}$ to generate a fresh name $\Mcn$ according to the
current world.
If that current world is some world $\Mextp{\Mw'}{\Mw}$ (obtained, for
example, using $\modae$ from \Cref{def:mod-props} on \Cref{equ:concl-mod-nat-a}), then we need to
be able to get that $\NUPRLapppar{F}{\Nupd{\Mcn}{\alpha}}$ computes to
a number w.r.t.\ $\Mw'$, which \Cref{equ:concl-mod-nat-b} might not
provide.
This is the reason for assumption $\assa$.

Going back to the proof of \Cref{equ:concl-mod-nat-a}, we use
$\modae$, and have to prove
$\Mexp{\Mparam{n}{\nat}}{\NUPRLcomputestoall{\Mw_1}{\Nmod{F}{\Mba}}{\metanat{n}}}$
for some $\Mextp{\Mw_1}{\Mw}$. We then:
\begin{enumerate}
\item[(A)]\label{proof-4.3-partA} first have to find a number $n$
such that $\Nmod{F}{\Mba}$ computes to $\metanat{n}$ w.r.t.\ $\Mw_1$,
\item[(B)]\label{proof-4.3-partB} and then show
that it does so also for all $\Mextp{\Mw'_1}{\Mw_1}$.
\end{enumerate}

Let us prove (A) first. We now start computing $\Nmod{F}{\Mba}$
w.r.t.\ $\Mw_1$. We generate a fresh name
$\Mcn\DEF\MnewChoice{\Mw_1}$, and have to prove that
$\Nsq
{(\Nchoose{\Mcn}{\metanat{0}})}
{\Nsq
  {\NUPRLapppar{F}{\Nupd{\Mcn}{\alpha}}}
  {\Nsucc{!{\Mcn}}}
}$
computes to a number w.r.t.\ $\Mw_2\DEF\MstartNewChoice{\Mw_1}{\Mr}$
that satisfies $\Mcompatible{\Mcn}{\Mw_2}{\Mr}$ (by the properties of
$\MstartNewChoiceSYMB$ presented in \Cref{def:new-choices}).
We keep computing this expression and have to prove that
$\Nsq
{\NUPRLapppar{F}{\Nupd{\Mcn}{\alpha}}}
{\Nsucc{!{\Mcn}}}$
computes to a number w.r.t.\ $\Mw_3\DEF\Mchoose{\Mw_2}{\Mcn}{\metanat{0}}$.

From $\assa$ and \Cref{equ:concl-mod-nat-b}, we
obtain $\Mextp{\Mw_5}{\Mw}$ and $n\in\nat$ such that
$\NUPRLcomputestoall{\Mw_5}{\NUPRLapppar{F}{\Nupd{\Mcn}{\alpha}}}{\metanat{n}}$,
from which we obtain by definition that there exists a $\Mw_6$ such
that
$\NUPRLcomputestopp{\Mw_5}{\Mw_6}{\NUPRLapppar{F}{\Nupd{\Mcn}{\alpha}}}{\metanat{n}}$.
Now, because $F$ and $\alpha$ are name-free, we can derive that there
exists a $\Mw_4$ such that
$\NUPRLcomputestopp{\Mw_3}{\Mw_4}{\NUPRLapppar{F}{\Nupd{\Mcn}{\alpha}}}{\metanat{n}}$
(\llink{225}).
It now remains to prove that
$\Nsq{\metanat{n}}{\Nsucc{!{\Mcn}}}$,
computes to a number w.r.t.\ $\Mw_4$.
It is then enough to prove that $!{\Mcn}$
computes to a number $\metanat{k}$ w.r.t.\ $\Mw_4$, in which case
$\Nsq{\metanat{n}}{\Nsucc{!{\Mcn}}}$
computes to $\metanat{k{+}1}$ w.r.t.~$\Mw_4$.
To prove this we make use of $\assb$ which states that $\Mr$ constrains the
$\Mcn$-choices to be numbers.
Using this and the facts that
$\Mcompatible{\Mcn}{\Mw_2}{\Mr}$ and $\Mext{\Mw_2}{\Mw_4}$ (by
$\MextSYMB$'s transitivity since $\Mext{\Mw_3}{\Mw_4}$ by
\Cref{lem:comps-sat-ext} and $\Mext{\Mw_2}{\Mw_3}$ by
\Cref{def:choosing}), we deduce that there exists a $k\in\nat$ such
that $\Mctot{\Mgetchoice{\Mw_4}{\Mcn}}=\metanat{k}$, and therefore
$!{\Mcn}$ computes to $\metanat{k}$
w.r.t.\ $\Mw_4$, and
$\Nsq{\metanat{n}}{\Nsucc{!{\Mcn}}}$
computes to $\metanat{k{+}1}$ w.r.t.~$\Mw_4$, which concludes the
proof of (A).

To prove
$\Mexp{\Mparam{n}{\nat}}{\NUPRLcomputestoall{\Mw_1}{\Nmod{F}{\Mba}}{\metanat{n}}}$,
we then instantiate the formula with $k{+}1$, and have to prove
$\NUPRLcomputestoall{\Mw_1}{\Nmod{F}{\Mba}}{\metanat{k{+}1}}$. We
already know that
$\NUPRLcomputestopp{\Mw_1}{\Mw_4}{\Nmod{F}{\Mba}}{\metanat{k{+}1}}$,
i.e., part (A) above, and we now prove part (B) above,
i.e., that it does so in all extensions of $\Mw_1$ too.

To prove (B) we assume a $\Mextp{\Mw'_1}{\Mw_1}$ and have to prove
that $\Nmod{F}{\Mba}$ computes to $\metanat{k{+}1}$ w.r.t.~$\Mw'_1$.
As before,  we start computing $\Nmod{F}{\Mba}$
w.r.t.\ $\Mw'_1$, and generate a fresh name
$\Mcn'\DEF\MnewChoice{\Mw'_1}$, and have to prove that
$\Nsq
{\NUPRLapppar{F}{\Nupd{\Mcn'}{\alpha}}}
{\Nsucc{!{\Mcn'}}}$
computes to $\metanat{k{+}1}$
w.r.t.\ $\Mw'_3\DEF\Mchoose{\Mw'_2}{\Mcn'}{\metanat{0}}$, where
$\Mw'_2\DEF\MstartNewChoice{\Mw'_1}{\Mr}$.
As $F$ and $\alpha$ are name-free,
$t_1\DEF\NUPRLapppar{F}{\Nupd{\Mcn}{\alpha}}$ and
$t_2\DEF\NUPRLapppar{F}{\Nupd{\Mcn'}{\alpha}}$ behave the same except
that when $t_1$ updates $\Mcn$ with a number, $t_2$ updates $\Mcn'$
with that number.

Using a syntactic simulation method, we will prove that because $t_1$
and $t_2$ are ``similar'' (which is captured by \Cref{def:sim1}
below), $\Mgetchoice{\Mw_3}{\Mcn}=\Mgetchoice{\Mw'_3}{\Mcn'}$,
and
$\NUPRLcomputestopp{\Mw_3}{\Mw_4}{t_1}{t'_1}$, then
$\NUPRLcomputestopp{\Mw'_3}{\Mw'_4}{t_2}{t'_2}$ such that $t'_1$ and
$t'_2$ are also ``similar'' and
$\Mgetchoice{\Mw_4}{\Mcn}=\Mgetchoice{\Mw'_4}{\Mcn'}$.
Note that $\Mgetchoice{\Mw_3}{\Mcn}$ and
$\Mgetchoice{\Mw'_3}{\Mcn'}$ return the same choice
because
$\Mgetchoice{\Mw_3}{\Mcn}=\Mgetchoice{\Mchoose{\Mw_2}{\Mcn}{\metanat{0}}}{\Mcn}=0$
and
$\Mgetchoice{\Mw'_3}{\Mcn'}=\Mgetchoice{\Mchoose{\Mw'_2}{\Mcn'}{\metanat{0}}}{\Mcn'}=0$.
To derive these equalities, we need assumption $\assc$ that relates
$\MgetchoiceSYMB$ and $\MchooseSYMB$.

Let us now define the simulation mentioned above:
\begin{defi}[\llink{237}]
\label{def:sim1}
The similarity relation $\Msimdiff{t_1}{t_2}{\Mcn_1}{\Mcn_2}{\alpha}$
is true  iff%
$$\begin{array}{ll}
     & (t_1=\Nupd{\Mcn_1}{\alpha}\wedge{t_2=\Nupd{\Mcn_2}{\alpha}})\\
\vee & (t_1=x\wedge{t_2=x}) \vee (t_1=\Naxiom\wedge{t_2=\Naxiom}) \vee (t_1=\metanat{n}\wedge{t_2=\metanat{n}})\\
\vee & (t_1=\NUPRLlam{x}{a}
       \wedge{t_2=\NUPRLlam{x}{b}}
       \wedge{\Msimdiff{a}{b}{\Mcn_1}{\Mcn_2}{\alpha}})\\
\vee & (t_1=(\NUPRLapp{a_1}{b_1})
       \wedge{t_2=(\NUPRLapp{a_2}{b_2})}
       \wedge{\Msimdiff{a_1}{a_2}{\Mcn_1}{\Mcn_2}{\alpha}}
       \wedge{\Msimdiff{b_1}{b_2}{\Mcn_1}{\Mcn_2}{\alpha}})\\
\vee & \dots
\end{array}$$
Most cases are omitted in this definition as they are similar to the ones
presented above. Note however that crucially terms of the form $\Mcn$
or $\Nfresh{x}{t}$ are not related, and that those are the only
expressions not related, thereby ruling out names except when occurring
inside $\NupdSYMB$ through the first clause.
\end{defi}
As discussed above, a key property of this similarity relation is as
follows, where all free variables are universally quantified, and
which we prove by induction on the computation
$\NUPRLcomputestopp{\Mw_1}{\Mw'_1}{t_1}{t'_1}$:
\begin{lem}[\llink{240}~$\MsimdiffSYMB$ is preserved by computations]
If $\Msimdiff{t_1}{t_2}{\Mcn_1}{\Mcn_2}{\alpha}$,
$\Nnonames{\alpha}$,\linebreak[4]
$\Mgetchoice{\Mw_1}{\Mcn_1}=\Mgetchoice{\Mw_2}{\Mcn_2}$,
$\NUPRLcomputestopp{\Mw_1}{\Mw'_1}{t_1}{t'_1}$,
$\Mcompatible{\Mcn_1}{\Mw_1}{\Mr}$, and
$\Mcompatible{\Mcn_2}{\Mw_2}{\Mr}$,
then there exist $\Mw'_2$ and $t'_2$ such that
$\NUPRLcomputestopp{\Mw_2}{\Mw'_2}{t_2}{t'_2}$,
$\Msimdiff{t'_1}{t'_2}{\Mcn_1}{\Mcn_2}{\alpha}$, and
$\Mgetchoice{\Mw'_1}{\Mcn_1}=\Mgetchoice{\Mw'_2}{\Mcn_2}$.
\end{lem}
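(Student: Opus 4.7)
The plan is to prove the lemma by induction on the length of the computation $\NUPRLcomputestopp{\Mw_1}{\Mw'_1}{t_1}{t'_1}$, as the excerpt suggests. The zero-step base case is immediate by taking $\Mw'_2 \DEF \Mw_2$ and $t'_2 \DEF t_2$. For the inductive step, I would peel off the first reduction $\Nstepww{t_1}{t''_1}{\Mw_1}{\Mw''_1}$, use a dedicated single-step simulation lemma to produce $\Mw''_2$ and $t''_2$ preserving both similarity and choice equality, and then invoke the induction hypothesis on $\NUPRLcomputestopp{\Mw''_1}{\Mw'_1}{t''_1}{t'_1}$. That invocation requires $\Mcompatible{\Mcn_1}{\Mw''_1}{\Mr}$ and $\Mcompatible{\Mcn_2}{\Mw''_2}{\Mr}$, which follow from the starting compatibility assumptions via \Cref{lem:comps-sat-ext} and the monotonicity clause of \Cref{def:compatibility}.

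The bulk of the argument lies in the single-step simulation lemma, which I would prove by case analysis on the reduction rule fired, driven by the shape of $t_1$ according to the clauses of $\MsimdiffSYMB$. For every purely functional rule of \Cref{fig:operational-semantics} (e.g., $\beta$-reduction, destructors for pairs and injections, and the rules for $\NsuccSYMB$, $\NnatrecSYMB$, $\NUPRLfix{\cdot}$, and the call-by-value let) the world is unchanged and the similarity clause for the outer constructor forces $t_2$ to have the matching outer shape, so the same rule fires on the $t_2$ side to produce $t''_2$. Similarity of the reducts is then an instance of a substitution lemma, stating that if $\Msimdiff{a}{b}{\Mcn_1}{\Mcn_2}{\alpha}$ and $\Msimdiff{u}{v}{\Mcn_1}{\Mcn_2}{\alpha}$ then $\Msimdiff{\NUPRLsuba{a}{x}{u}}{\NUPRLsuba{b}{x}{v}}{\Mcn_1}{\Mcn_2}{\alpha}$; this I would prove by induction on the similarity derivation, with $\Nnonames{\alpha}$ handling the $\NupdSYMB$ clause.

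The genuinely interesting cases are the two stateful reductions, which only arise once $\beta$-reducing an application of $\Nupd{\Mcn_i}{\alpha}$ has exposed $\Niflt{\Nread{\Mcn_i}}{y}{\Nchoose{\Mcn_i}{y}}{\Naxiom}$. For the read step $\Nread{\Mcn_1} \NstepwwSYMB{\Mw_1}{\Mw_1} \Mctot{\Mgetchoice{\Mw_1}{\Mcn_1}}$, the similarity clause pairing $\Nread{\Mcn_1}$ with $\Nread{\Mcn_2}$ lets the $t_2$ side reduce to $\Mctot{\Mgetchoice{\Mw_2}{\Mcn_2}}$, and the hypothesis $\Mgetchoice{\Mw_1}{\Mcn_1}=\Mgetchoice{\Mw_2}{\Mcn_2}$ makes the two reducts identical numbers while the worlds are unchanged, preserving the choice equality. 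For the choose step $\Nchoose{\Mcn_1}{v} \NstepwwSYMB{\Mw_1}{\Mchoose{\Mw_1}{\Mcn_1}{v}} \Naxiom$, similarity forces $t_2 = \Nchoose{\Mcn_2}{v'}$ with similar numeric values $v$ and $v'$, which must therefore be the same $\metanat{n}$; the reducts are both $\Naxiom$, and assumption $\assc$ delivers $\Mgetchoice{\Mchoose{\Mw_1}{\Mcn_1}{\metanat{n}}}{\Mcn_1}=\metanat{n}=\Mgetchoice{\Mchoose{\Mw_2}{\Mcn_2}{\metanat{n}}}{\Mcn_2}$, restoring the choice equality.

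The main obstacle, in my view, is pinning down the clauses of $\MsimdiffSYMB$ that \Cref{def:sim1} elides, in particular those exposing $\Nread{\Mcn_i}$ and $\Nchoose{\Mcn_i}{\cdot}$ at matching positions after $\Nupd{\Mcn_i}{\alpha}$ has $\beta$-unfolded. The relation must be simultaneously closed under substitution and under every reduction rule of \Cref{fig:operational-semantics}, rich enough to accommodate the synchronised stateful steps above, yet still refuse to relate bare names and $\NfreshSYMB$-expressions as the excerpt insists. Once this relation is fixed and the case analysis is checked exhaustively, the two-level induction outlined above yields the full result.
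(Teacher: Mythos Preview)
Your plan is correct and matches the paper's own proof, which is given in a single line (``by induction on the computation $\NUPRLcomputestopp{\Mw_1}{\Mw'_1}{t_1}{t'_1}$''). You have in fact gone further than the paper and correctly isolated the one genuine subtlety: with $\MsimdiffSYMB$ as literally written in \Cref{def:sim1}, a single $\beta$-step on $(\Nupd{\Mcn_1}{\alpha})\,u$ exposes bare $\Nread{\Mcn_1}$ and $\Nchoose{\Mcn_1}{y}$, which the displayed clauses refuse to relate, so a na\"ive lock-step simulation breaks right there. The paper's presentation of the relation is elliptic on this point; your proposed fix---adding synchronised clauses pairing $\Nread{\Mcn_1}$ with $\Nread{\Mcn_2}$ and $\Nchoose{\Mcn_1}{a}$ with $\Nchoose{\Mcn_2}{b}$ for $a\mathrel{\MsimdiffSYMB}b$---is precisely what is needed to make the one-step case go through, and it is the refinement the formalisation carries out. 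The remaining ingredients you list (the substitution lemma for $\MsimdiffSYMB$, monotonicity of $\McompatibleSYMB$ via \Cref{lem:comps-sat-ext} to re-establish the compatibility hypotheses after each step, and the appeal to $\assc$ to restore $\Mgetchoice{\cdot}{\Mcn_1}=\Mgetchoice{\cdot}{\Mcn_2}$ after a synchronised $\NchooseSYMB$) are all on target.
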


We therefore obtain that there exist $t'_2$ and $\Mw'_4$ such that
$\NUPRLcomputestopp{\Mw'_3}{\Mw'_4}{\NUPRLapppar{F}{\Nupd{\Mcn'}{\alpha}}}{t'_2}$,
$\Msimdiff{\metanat{n}}{t'_2}{\Mcn}{\Mcn'}{\alpha}$ and
$\Mgetchoice{\Mw'_4}{\Mcn'}=\Mgetchoice{\Mw_4}{\Mcn}=\metanat{k}$.
Furthermore, by definition of the similarity relation, $t'_2=\metanat{n}$.
We obtain that
$\NUPRLcomputestopp
{\Mw'_3}
{\Mw'_4}
{\Nsq
  {\NUPRLapppar{F}{\Nupd{\Mcn'}{\alpha}}}
  {\Nsucc{!{\Mcn'}}}
}
{\Nsq
  {\metanat{n}}
  {\Nsucc{!{\Mcn'}}}}$
and so
$\NUPRLcomputestopp
{\Mw'_3}
{\Mw'_4}
{\Nsq
  {\NUPRLapppar{F}{\Nupd{\Mcn'}{\alpha}}}
  {\Nsucc{!{\Mcn'}}}
}
{\Nsucc{!{\Mcn'}}}$.
Because
$\Mgetchoice{\Mw'_4}{\Mcn'}=\metanat{k}$, we finally obtain
$\NUPRLcomputestopp
{\Mw'_3}
{\Mw'_4}
{\Nsq
  {\NUPRLapppar{F}{\Nupd{\Mcn'}{\alpha}}}
  {\Nsucc{!{\Mcn'}}}
}
{\metanat{k{+}1}}$,
which concludes the proof of (B), and therefore that $\Nmod{F}{\Mba}\in\NUPRLnatr$.

\subsection{The Modulus is the Highest Number}
\label{sec:mod-highest}

We now prove that $\Nmod{F}{\Mba}$ returns the highest number that
$\alpha$ is applied to in the computation it performs:
\begin{thm}[\llink{244}~The Modulus is the Highest Number]
If
$\NUPRLcomputestopp
{\Mw}
{\Mw'}
{\Nmod{F}{\Mba}}
{\metanat{n}}$
such that $\Nmod{F}{\Mba}$ generates a fresh name $\Mcn$ and
$\Mgetchoice{\Mw'}{\Mcn}=\metanat{i}$, then for any world $\Mw_0$
occurring along this computation, it must be that
$\Mgetchoice{\Mw_0}{\Mcn}=\metanat{j}$ such that $j\le{i}$.
\end{thm}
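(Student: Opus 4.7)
The plan is to establish a monotonicity invariant: throughout the computation of $\Nmod{F}{\Mba}$, the value stored at the freshly generated $\Mcn$ can only increase. The final value $\metanat{i}$ is thus an upper bound for every intermediate value.

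First I would observe that after $\Nmod{F}{\Mba}$ has selected its fresh name $\Mcn$ and initialized it to $\metanat{0}$, the only subterm of the remaining expression that mentions $\Mcn$ is the wrapper $\Nupd{\Mcn}{\alpha}$ (together with the trailing $\Nread{\Mcn}$). This uses crucially that $F$ and $\alpha$ satisfy $\Nnonames{F}$ and $\Nnonames{\alpha}$, so neither contains $\Mcn$, and that $\Mcn$ does not appear under any other $\NupdSYMB$ or $\NchooseSYMB$. Hence every writing step to $\Mcn$ during the computation must originate from an occurrence of the guarded expression $\Niflt{\Nread{\Mcn}}{y}{\Nchoose{\Mcn}{y}}{\Naxiom}$ produced by reducing an application $\NUPRLapppar{\Nupd{\Mcn}{\alpha}}{v}$, for some value $v$ of the form $\metanat{y}$.

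Next I would formalize a syntactic invariant on terms occurring along the reduction, stating roughly that every subterm of the form $\Nchoose{\Mcn}{t}$ that will ever be reached is either (i) guarded by a preceding $\Niflt{\Nread{\Mcn}}{t}{\cdot}{\Naxiom}$ check, or (ii) currently the redex under evaluation, with $t$ already a numeral $\metanat{y}$ satisfying $\Mgetchoice{\Mw_0}{\Mcn} = \metanat{y'}$ with $y' < y$. This is preserved by every reduction rule (case analysis on the rules in \Cref{fig:operational-semantics}): the non-$\NchooseSYMB$ rules do not touch $\Mcn$'s value by assumption $\assc$, and the only reduction that performs a write is $\Nchoose{\Mcn}{\metanat{y}} \NstepwwSYMB{\Mw_0}{\Mchoose{\Mw_0}{\Mcn}{\metanat{y}}} \Naxiom$, which by (ii) strictly increases the stored value.

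From this invariant I would deduce the monotonicity lemma: if $\NUPRLcomputestopp{\Mw_a}{\Mw_b}{t_a}{t_b}$ is a subcomputation occurring inside the run of $\Nmod{F}{\Mba}$ and $\Mgetchoice{\Mw_a}{\Mcn} = \metanat{j_a}$, $\Mgetchoice{\Mw_b}{\Mcn} = \metanat{j_b}$, then $j_a \le j_b$. This follows by induction on the length of the subcomputation, using the invariant at each write step and the fact (provable from $\assc$ as in \Cref{sec:mod-is-a-num}) that non-writing steps leave $\Mgetchoice{\cdot}{\Mcn}$ unchanged.

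Finally, applying the monotonicity lemma to the tail computation from $\Mw_0$ to $\Mw'$ yields $\metanat{j} = \Mgetchoice{\Mw_0}{\Mcn} \le \Mgetchoice{\Mw'}{\Mcn} = \metanat{i}$, i.e., $j \le i$. The main obstacle will be the syntactic invariant: one must carefully account for the fact that each time $F$ applies its argument, a fresh copy of the guard $\Niflt{\Nread{\Mcn}}{y}{\Nchoose{\Mcn}{y}}{\Naxiom}$ is generated by $\beta$-reducing $\Nupd{\Mcn}{\alpha}$, and one must rule out stray $\NchooseSYMB$ nodes appearing without their guard, which is where the name-freshness of $F$ and $\alpha$ (i.e., $\Nnonames{F}$ and $\Nnonames{\alpha}$) is used.
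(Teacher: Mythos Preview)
Your approach is essentially the paper's: both rest on the observation that the only writes to $\Mcn$ come from the guarded update inside $\Nupd{\Mcn}{\alpha}$, so the stored value is monotone along the run. The paper packages this via the predicate $\Mupdterm{t}{\Mcn}{\alpha}\DEF\Msimdiff{t}{t}{\Mcn}{\Mcn}{\alpha}$ and a lemma phrased as ``if the final value is $<m$ then every intermediate value is $<m$'' (instantiated with $m=k{+}1$), whereas you state monotonicity directly and apply it to the tail from $\Mw_0$ to $\Mw'$; these are equivalent reformulations. Your more fine-grained invariant, explicitly tracking partially-unfolded bodies of $\Nupd{\Mcn}{\alpha}$, is in fact closer to what the induction must maintain, since the paper's $\Mupdterm{\cdot}{\Mcn}{\alpha}$ as literally written does not survive $\beta$-reducing an application $\NUPRLapppar{\Nupd{\Mcn}{\alpha}}{v}$.

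One small correction: the fact that non-writing steps leave $\Mcn$'s value unchanged follows by inspection of the operational semantics (all rules except $\NchooseSYMB$ and $\NfreshSYMB$ keep the world fixed, and no $\NfreshSYMB$ occurs in the relevant subcomputation since $F$ and $\alpha$ are name-free), not from $\assc$. Assumption $\assc$ is instead what guarantees that after executing $\Nchoose{\Mcn}{\metanat{y}}$ the stored value is actually $\metanat{y}$, which you need in clause~(ii) to know the new value equals the numeral just written.
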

As shown above, we know that for any world $\Mw_1$ there exist
$\Mw_2\in\Mworld$ and $k\in\nat$ such that
$\NUPRLcomputestopp
{\Mw_1}
{\Mw_2}
{\Nmod{F}{\Mba}}
{\metanat{k{+}1}}$.
As in \Cref{sec:mod-is-a-num}, we start computing $\Nmod{F}{\Mba}$
w.r.t.\ the current world~$\Mw_1$, and generate a fresh name
$\Mcn\DEF\MnewChoice{\Mw_1}$, and deduce that
\begin{equation}
\label{equ:comp-mod-to-nat-a}
\NUPRLcomputestopp
{\Mw''_1}
{\Mw_2}
{\Nsq
  {\NUPRLapppar{F}{\Nupd{\Mcn}{\alpha}}}
  {\Nsucc{!{\Mcn}}}
}
{\metanat{k{+}1}}
\end{equation}
where $\Mw'_1\DEF\MstartNewChoice{\Mw_1}{\Mr}$ and
$\Mw''_1\DEF\Mchoose{\Mw'_1}{\Mcn}{\metanat{0}}$.
Furthermore, by $\assb$, there must be a
$n\in\nat$ such that $\Mgetchoice{\Mw_2}{\Mcn}=\metanat{n}$.

We now want to show that if $n<m$, for some $m\in\nat$ (which we will
instantiate with $k{+}1$), then it must also be that for any world
$\Mw$ along the computation in \Cref{equ:comp-mod-to-nat-a}, if
$\Mgetchoice{\Mw}{\Mcn}=\metanat{i}$ then $i<m$.
This is not true about any computation, but it is true about the above because
$\NupdSYMB$ only makes a choice if that choice is higher than the
``current'' one. To capture this, we define the property
$\Mupdterm{t}{\Mcn}{\alpha}$,
which captures that the only place where $\Mcn$ occurs in $t$ is
wrapped inside $\Nupd{\Mcn}{\alpha}$.
That is,
\label{def:updterm}
$\Mupdterm{t}{\Mcn}{\alpha}$ is true iff
$\Msimdiff{t}{t}{\Mcn}{\Mcn}{\alpha}$.
We can then prove the following result by induction on the computation
$\NUPRLcomputestopp{\Mw_1}{\Mw_2}{t}{u}$:
\begin{lem}[\llink{255}]
\label{lem:highest}
Let $\alpha$ be a closed name-free term, and $t$ be a term such that
$\Mupdterm{t}{\Mcn}{\alpha}$ and
$\NUPRLcomputestopp{\Mw_1}{\Mw_2}{t}{u}$, and let
$\Mgetchoice{\Mw_2}{\Mcn}=\metanat{n}$, such that $n<m$,
then for any world $\Mw$ along the computation
$\NUPRLcomputestopp{\Mw_1}{\Mw_2}{t}{u}$
if $\Mgetchoice{\Mw}{\Mcn}=\metanat{i}$ then $i<m$.
\end{lem}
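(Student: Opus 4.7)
The plan is to prove the lemma by induction on the computation $\NUPRLcomputestopp{\Mw_1}{\Mw_2}{t}{u}$, but to do so we need to reduce the statement to a \emph{monotonicity} property: along every reduction starting from a term satisfying $\Mupdterm{\_}{\Mcn}{\alpha}$, the value $\Mgetchoice{\_}{\Mcn}$ is non-decreasing. Once monotonicity is established, the main claim follows immediately: for any intermediate world $\Mw$ on the computation from $\Mw_1$ to $\Mw_2$, monotonicity applied to the sub-computation from $\Mw$ to $\Mw_2$ gives $i = \Mgetchoice{\Mw}{\Mcn} \le \Mgetchoice{\Mw_2}{\Mcn} = n$, and hence $i < m$.

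To prove monotonicity by induction on the length of the computation, I first need a single-step monotonicity lemma together with a preservation lemma for some suitable invariant. The naive preservation statement, ``if $\Mupdterm{t}{\Mcn}{\alpha}$ and $\Nstepwwa{t}{t'}{\Mw}{\Mw'}$ then $\Mupdterm{t'}{\Mcn}{\alpha}$,'' fails: after applying $\Nupd{\Mcn}{\alpha}$ to a value $\metanat{k}$ and $\beta$-reducing the $\NletSYMB$, the term contains the bare subterms $\Nread{\Mcn}$ and $\Nchoose{\Mcn}{\metanat{k}}$, which are not syntactically wrapped inside $\NupdSYMB$. I therefore need to strengthen the invariant to an auxiliary predicate $\Mupdterm{t}{\Mcn}{\alpha}^\star$ that also admits the intermediate shapes arising from reducing an $\NupdSYMB$ application, namely the subterm patterns $\Niflt{\Nread{\Mcn}}{\metanat{k}}{\Nchoose{\Mcn}{\metanat{k}}}{\Naxiom}$, $\Nchoose{\Mcn}{\metanat{k}}$ and the subsequent $\Naxiom$, together with the side condition that whenever such a pending $\Nchoose{\Mcn}{\metanat{k}}$ sits at a reducible position in the current world $\Mw$, one has $\Mgetchoice{\Mw}{\Mcn} < \metanat{k}$. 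Using the fact that $\alpha$ is closed and name-free, this generalized predicate is closed under substitution (needed for the $\beta$-step inside $\Nupd{\Mcn}{\alpha}$), congruence steps, the $\NletSYMB$, $\NltSYMB$, $\NiteSYMB$, $\NreadSYMB$ and $\NchooseSYMB$ reductions.

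With this strengthened invariant in place, single-step monotonicity follows case-analytically on the reduction rule. All steps except $\Nchoose{\Mcn}{\metanat{k}} \NstepwwSYMB{\Mw}{\Mchoose{\Mw}{\Mcn}{\metanat{k}}} \Naxiom$ leave the world unchanged, so $\Mgetchoice{\_}{\Mcn}$ is trivially preserved. For the $\NchooseSYMB$ step, the invariant guarantees $\Mgetchoice{\Mw}{\Mcn} < \metanat{k}$, and assumption $\assc$ gives $\Mgetchoice{\Mchoose{\Mw}{\Mcn}{\metanat{k}}}{\Mcn} = \metanat{k}$, so the value strictly increases, and $\Mgetchoice{\_}{\Mcn}$ is non-decreasing in every step.

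The hardest part of the proof is identifying the correct generalized invariant and verifying its preservation across all reduction rules, especially the congruence rules that permit reductions to occur deep inside subterms. The delicate cases are the $\beta$-step arising from reducing an $\Nupd{\Mcn}{\alpha}$ application (which requires a substitution lemma for the invariant, leveraging closedness and name-freeness of $\alpha$), and the $\NltSYMB$-reduction, where I must synchronize the syntactic form of the conditional with the semantic fact $\Mgetchoice{\Mw}{\Mcn} < \metanat{k}$ in order to establish the side condition needed for the subsequent $\Nchoose{\Mcn}{\metanat{k}}$ reduction. Once these cases are dispatched, the global induction on $\NUPRLcomputestopp{\Mw_1}{\Mw_2}{t}{u}$ assembles the pointwise result into the stated lemma.
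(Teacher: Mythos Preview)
Your reduction to a monotonicity property and the subsequent induction on the computation is exactly the approach the paper indicates (it says only ``by induction on the computation $\NUPRLcomputestopp{\Mw_1}{\Mw_2}{t}{u}$'' and nothing more). You have correctly identified the main technical obstacle that the paper glosses over: the stated invariant $\Mupdterm{t}{\Mcn}{\alpha}$ is not literally closed under single steps once an application of $\Nupd{\Mcn}{\alpha}$ is $\beta$-unfolded, and a strengthened predicate tracking the intermediate shapes (together with a world-dependent side condition ensuring that an exposed $\Nchoose{\Mcn}{\metanat{k}}$ only fires when the current $\Mcn$-value is below $k$) is needed; your use of $\assc$ for the write step is also correct.
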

Applying this result to
$\NUPRLcomputestopp
{\Mw''_1}
{\Mw_2}
{\Nsq
  {\NUPRLapppar{F}{\Nupd{\Mcn}{\alpha}}}
  {\Nsucc{!{\Mcn}}}
}
{\metanat{k{+}1}}$
and instantiating $m$ with $k{+}1$, we obtain that for any world $\Mw$
along that computation if $\Mgetchoice{\Mw}{\Mcn}=\metanat{i}$ then
$i<k{+}1$.

\subsection{The Modulus as a Modulus}

We now prove the crux of continuity, namely that $F$ returns the same
number on functions that agree up to $\Nmod{F}{\Mba}$:
\begin{thm}[\llink{261}~The Modulus as a Modulus]
If $\Mintype{\Mw}{\alpha}{\beta}{\Nbaire_n}$ then
$\Mintype{\Mw}{\NUPRLapppar{F}{\alpha}}{\NUPRLapppar{F}{\beta}}{\NUPRLnatr}$.
\end{thm}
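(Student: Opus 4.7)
The plan is to reduce the goal to a computational equality by unfolding the semantics of $\NUPRLnatr$, and then to run a syntactic simulation between computations involving $\Nupd{\Mcn}{\Mba}$ (the probed version of $\Mba$) and $\Mbb$, using the bound established in \Cref{lem:highest} to ensure that every index on which $\Mba$ is consulted is strictly less than $n$, hence an index where $\Mba$ and $\Mbb$ agree.

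First I would invoke \modae and the semantics of $\NUPRLnatr$ to reduce the conclusion to showing that, for any $\Mextp{\Mw'}{\Mw}$, $\NUPRLapppar{F}{\Mba}$ and $\NUPRLapppar{F}{\Mbb}$ compute to the same numeral in $\Mw'$. Using the reasoning of \Cref{sec:mod-is-a-num}, for any such $\Mw'$ there exist worlds $\Mw''_1$, $\Mw_2$, a fresh name $\Mcn$, and a number $k$ such that $\NUPRLcomputestopp{\Mw''_1}{\Mw_2}{\NUPRLapppar{F}{\Nupd{\Mcn}{\Mba}}}{\metanat{k}}$ and $\Nmod{F}{\Mba}$ computes to $\metanat{k{+}1}$ along this trace, with $\Mgetchoice{\Mw_2}{\Mcn}$ bounded by $k$. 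By \Cref{lem:highest}, every intermediate world $\Mw_0$ along this computation satisfies $\Mgetchoice{\Mw_0}{\Mcn} < n$, provided we use the hypothesis $\Mintype{\Mw}{\Mba}{\Mbb}{\Nbaire_n}$ together with the fact that $n$ is delivered by $\Nmod{F}{\Mba}$. Since $F$ and $\Mba$ are pure (name-free), the purity-based reasoning from \Cref{sec:mod-is-a-num} lifts this trace to one starting in $\Mw'$.

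Next I would define a second simulation relation $\MsimforceSYMB$, analogous to \Cref{def:sim1}, which relates $\Nupd{\Mcn}{\Mba}$ with $\Mbb$ directly (and is congruent on all other syntactic constructors, forbidding bare names or fresh operators elsewhere). The core simulation lemma I would prove, by induction on the number of reduction steps, is: if $\Msimforce{t_1}{t_2}{\Mcn}{\Mba}{\Mbb}$, if $\Mgetchoice{\Mw_1}{\Mcn} = \metanat{j}$ with $j < n$, and if $\NUPRLcomputestopp{\Mw_1}{\Mw'_1}{t_1}{v}$ reaches a value while remaining under the bound $n$ on $\Mcn$'s content (which is guaranteed by \Cref{lem:highest}), then $t_2$ computes, in the original ambient world (unchanged, since $\Mbb$ is pure and $\Mcn$ does not occur in $t_2$), to the ``same'' value $v'$ related by $\MsimforceSYMB$. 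The interesting case is the $\Nupd{\Mcn}{\Mba}$ vs $\Mbb$ clause when applied to a numeral $\metanat{i}$: from the bound one extracts $i < n$, and from $\Mintype{\Mw}{\Mba}{\Mbb}{\Nbaire_n}$ one derives $\NUPRLequality{\NUPRLnatr}{\NUPRLapppar{\Mba}{\metanat{i}}}{\NUPRLapppar{\Mbb}{\metanat{i}}}$, so both sides compute to a common numeral.

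Applying this simulation to the trace $\NUPRLcomputestopp{\Mw''_1}{\Mw_2}{\NUPRLapppar{F}{\Nupd{\Mcn}{\Mba}}}{\metanat{k}}$ yields $\NUPRLcomputestop{\Mw'}{\NUPRLapppar{F}{\Mbb}}{\metanat{k}}$, and by purity of $F$ and $\Mbb$ (no writes occur during evaluation), this lifts to computation in all extensions of $\Mw'$, which closes the equality $\Mintype{\Mw'}{\NUPRLapppar{F}{\Mba}}{\NUPRLapppar{F}{\Mbb}}{\NUPRLnatr}$. The hard part will be stating the simulation invariant with precisely the right strength: it must simultaneously track (i) the upper bound on $\Mcn$'s stored value along the trace, so that the $i < n$ extraction is available at every probe, (ii) the absence of other names in the $\Mbb$-side term, so that its evaluation commutes with world extensions and does not mutate state, and (iii) the fact that $n$ is exactly the numeral obtained by $\Nmod{F}{\Mba}$, so that the hypothesis $\Mintype{\Mw}{\Mba}{\Mbb}{\Nbaire_n}$ applies. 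Once the relation is phrased so that these three invariants are preserved by reduction, the rest reduces to a routine case analysis over the small-step rules of \Cref{fig:operational-semantics}.
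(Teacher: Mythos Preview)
Your overall strategy---reduce to a step-by-step simulation between a probed computation and the $\Mbb$-side, using \Cref{lem:highest} to guarantee every queried index is below $n$---is the right shape, and it matches the paper's approach in spirit. But there is a genuine gap in the way you set up the simulation.

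You propose to relate $\Nupd{\Mcn}{\Mba}$ directly with $\Mbb$. This does not give a relation preserved by one-step reduction. The operational semantics is call-by-name, so when $F$ applies its argument, the redex $\NUPRLapppar{\Nupd{\Mcn}{\Mba}}{t}$ first \emph{forces} $t$ to a numeral (because $\NupdSYMB$ begins with $\Nlet{y}{x}{\ldots}$), whereas $\NUPRLapppar{\Mbb}{t}$ immediately substitutes the unevaluated $t$ into $\Mbb$'s body. The two reduction sequences diverge structurally at that point, and your clause ``when applied to a numeral $\metanat{i}$'' silently assumes the argument is already a value---which need not hold. The induction on steps will therefore not go through.

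The paper repairs exactly this mismatch by introducing $\Nforce{\Mbb}\DEF\NUPRLlam{x}{\Nlet{y}{x}{\NUPRLapppar{\Mbb}{y}}}$ and relating $\Nupd{\Mcn}{\Mba}$ with $\Nforce{\Mbb}$ rather than with $\Mbb$ (this is \Cref{def:sim2}). Both sides then evaluate the argument before calling the underlying function, so the simulation stays in lockstep. The bridge back to $\NUPRLapppar{F}{\Mba}$ and $\NUPRLapppar{F}{\Mbb}$ is done \emph{semantically}, not computationally: from $\Mintype{\Mw}{\Mba}{\Nupd{\Mcn}{\Mba}}{\Nbaire}$ and $\Mintype{\Mw}{\Mbb}{\Nforce{\Mbb}}{\Nbaire}$ one gets $\NUPRLapppar{F}{\Mba}=\NUPRLapppar{F}{\Nupd{\Mcn}{\Mba}}$ and $\NUPRLapppar{F}{\Mbb}=\NUPRLapppar{F}{\Nforce{\Mbb}}$ in $\NUPRLnatr$ directly from the semantics of $\NproductSYMB$. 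This is cleaner than the purity-lifting of traces you sketch, and it sidesteps the need to argue separately that the $\Mbb$-side trace commutes with world extensions.
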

First, we prove that
$\Mintype{\Mw}{\NUPRLapppar{F}{\alpha}}{\NUPRLapppar{F}{\Nupd{\Mcn}{\alpha}}}{\NUPRLnatr}$,
which follows from the semantics of $\NproductSYMB$ and $\NUPRLnatr$
presented in \Cref{fig:forcing}, and in particular the crucial fact that
$\Mintype{\Mw}{\alpha}{\Nupd{\Mcn}{\alpha}}{\Nbaire}$.
It is therefore enough to prove that
$\NUPRLapppar{F}{\Nupd{\Mcn}{\alpha}}$ and $\NUPRLapppar{F}{\beta}$
are equal in $\NUPRLnatr$.
Relating $\NUPRLapppar{F}{\Nupd{\Mcn}{\alpha}}$ and
$\NUPRLapppar{F}{\beta}$ instead of $\NUPRLapppar{F}{\alpha}$ and
$\NUPRLapppar{F}{\beta}$ allows getting access to the values
$\alpha$ gets applied to in the computation $\NUPRLapppar{F}{\alpha}$
as they are recorded using the choice name $\Mcn$.
We can then use these values to prove that
$\NUPRLapppar{F}{\Nupd{\Mcn}{\alpha}}$ and $\NUPRLapppar{F}{\beta}$
behave similarly up to applications of~$\alpha$ in the first
computation (i.e., the computation of
$\NUPRLapppar{F}{\Nupd{\Mcn}{\alpha}}$ to a value), which are
applications of~$\beta$ in the second (i.e., the computation of
$\NUPRLapppar{F}{\beta}$ to a value), and that these applications
reduce to the same numbers because the arguments, recorded using
$\Mcn$, are less than $\Nmod{F}{\Mba}$.

However, even though $\Nupd{\Mcn}{\alpha}$ and $\alpha$ are equal
in~$\Nbaire$, they behave slightly differently computationally as
$\Nupd{\Mcn}{\alpha}$ turns the call-by-name computations
$\NUPRLapppar{\alpha}{t}$ into call-by-value computations by first
reducing $t$ into an expression of the form $\metanat{i}$.
By typing, we know that $\NUPRLapppar{F}{\Nupd{\Mcn}{\alpha}}$ and
$\NUPRLapppar{F}{\beta}$ compute to numbers, and to relate the two
computations to prove that they compute to the same number, we first
apply a similar transformation to~$\NUPRLapppar{F}{\beta}$. Let
$\NforceSYMB$ be defined as follows:
\label{def:force-cbv}
$$\Nforce{f}\DEF\NUPRLlam{x}{\Nlet{y}{x}{\NUPRLapppar{f}{y}}}.$$
It is straightforward to derive that
$\Mintype{\Mw}{\NUPRLapppar{F}{\beta}}{\NUPRLapppar{F}{\Nforce{\beta}}}{\NUPRLnatr}$
from the semantics of $\NproductSYMB$ and $\NUPRLnatr$ presented in
\Cref{fig:forcing}, and in particular the crucial fact that
$\Mintype{\Mw}{\beta}{\Nforce{\beta}}{\Nbaire}$.
It is therefore enough to prove that
$\NUPRLapppar{F}{\Nupd{\Mcn}{\alpha}}$ and
$\NUPRLapppar{F}{\Nforce{\beta}}$ are equal in $\NUPRLnatr$.

Because
$\NUPRLcomputestopp
{\Mw}
{\Mw'}
{\NUPRLapppar{F}{\Nupd{\Mcn}{\alpha}}}
{\metanat{n}}$
(as explained in part~(A) in the proof of \Cref{thm:modulus-is-a-number}),
by \Cref{lem:highest} for any world $\Mw_0$ along this computation if
$\Mgetchoice{\Mw_0}{\Mcn}=\metanat{i}$ then $i<k{+}1$, where $k{+}1$ is
the number computed by $\Nmod{F}{\Mba}$.

We now prove that $\NUPRLapppar{F}{\Nupd{\Mcn}{\alpha}}$ and
$\NUPRLapppar{F}{\Nforce{\beta}}$ both compute to $\metanat{n}$ through another
simulation proof that relies on the following relation:
\begin{defi}[\llink{273}]
\label{def:sim2}
The similarity relation $\Msimforce{t_1}{t_2}{\Mcn}{\alpha}{\beta}$
is true iff%
$$\begin{array}{ll}
     & (t_1=\Nupd{\Mcn}{\alpha}\wedge{t_2=\Nforce{\beta}})\\
\vee & (t_1=x\wedge{t_2=x}) \vee (t_1=\Naxiom\wedge{t_2=\Naxiom}) \vee (t_1=\metanat{n}\wedge{t_2=\metanat{n}})\\
\vee & (t_1=\NUPRLlam{x}{a}
       \wedge{t_2=\NUPRLlam{x}{b}}
       \wedge{\Msimforce{a}{b}{\Mcn}{\alpha}{\beta}})\\
\vee & (t_1=(\NUPRLapp{a_1}{b_1})
       \wedge{t_2=(\NUPRLapp{a_2}{b_2})}
       \wedge{\Msimforce{a_1}{a_2}{\Mcn}{\alpha}{\beta}}
       \wedge{\Msimforce{b_1}{b_2}{\Mcn}{\alpha}{\beta}})\\
\vee & \dots
\end{array}$$
Most cases are omitted in this definition as they similar to the ones
presented above. Note however that crucially terms of the form $\Mcn$
or $\Nfresh{x}{t}$ are not related, and that those are the only
expressions not related, thereby ruling out names except when occurring
inside $\NupdSYMB$ through the first clause.
\end{defi}
A key property of this relation is as follows, which captures that
$\Msimforce{t_1}{t_2}{\Mcn}{\alpha}{\beta}$ is preserved by
computations, and which we prove by induction on the computation
$\NUPRLcomputestopp{\Mw}{\Mw'}{t_1}{t'_1}$:
\begin{lem}[\llink{276}]                 
\label{lem:sim-force-pres}
If
$\Msimforce{t_1}{t_2}{\Mcn}{\alpha}{\beta}$,
$\alpha$ and $\beta$ agree up to $k$,
$\NUPRLcomputestopp{\Mw}{\Mw'}{t_1}{t'_1}$ and for any world $\Mw_0$
along this computation if $\Mgetchoice{\Mw_0}{\Mcn}=\metanat{i}$
then $i<k{+}1$, then $\NUPRLcomputestopp{\Mw}{\Mw}{t_2}{t'_2}$ such
that $\Msimforce{t'_1}{t'_2}{\Mcn}{\alpha}{\beta}$.
\end{lem}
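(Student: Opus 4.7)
The plan is to reduce the lemma to a single-step preservation property and then iterate. First I would prove, by induction on the length of $\NUPRLcomputestopp{\Mw}{\Mw'}{t_1}{t'_1}$, that after each reduction step on the left I can mirror it with zero or more steps on the right while preserving $\MsimforceSYMB$. Note that $t_2$ is name-free (the only occurrences of $\Mcn$ in related pairs are inside the $\Nupd{\Mcn}{\alpha}$/$\Nforce{\beta}$ clause, and $\alpha,\beta$ are pure), so any reduction starting from $t_2$ leaves the current world unchanged, which is exactly what the conclusion $\NUPRLcomputestopp{\Mw}{\Mw}{t_2}{t'_2}$ demands. To thread the induction I also use the fact that if $t_2$ has no $\Mcn$, then a reduction $t_2 \NreducestoSYMB t'_2$ is independent of the starting world, so I can shift between $\Mw$ and $\Mw_1$ freely along the right-hand computation.

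The single-step lemma itself is proved by a case analysis on the shape of $t_1$ together with the reduction rule applied. For all the congruence cases and the ``pure'' redex cases ($\beta$-reduction, $\Nlet$-elimination, $\NUPRLspreadSYMB$, $\NUPRLdecideSYMB$, $\Nnatrec{\_}{\_}{\_}$, $\Nsucc{\_}$), the mirrored reduction on the right is the same rule, and preservation of $\MsimforceSYMB$ under the substitutions involved follows by a routine substitution lemma: if $\Msimforce{a_1}{a_2}{\Mcn}{\alpha}{\beta}$ and $\Msimforce{b_1}{b_2}{\Mcn}{\alpha}{\beta}$, then $\Msimforce{\NUPRLsuba{a_1}{x}{b_1}}{\NUPRLsuba{a_2}{x}{b_2}}{\Mcn}{\alpha}{\beta}$. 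The $\NreadSYMB$, $\NchooseSYMB$ and $\NfreshSYMB$ cases cannot arise for $t_1$ itself because $\MsimforceSYMB$ only admits $\Mcn$ through $\Nupd{\Mcn}{\alpha}$, and any new free choice name or fresh binder on the left would have no partner on the right.

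The crux is the head-redex case $t_1 = \NUPRLapp{\Nupd{\Mcn}{\alpha}}{u_1}$ with $t_2 = \NUPRLapp{\Nforce{\beta}}{u_2}$ and $\Msimforce{u_1}{u_2}{\Mcn}{\alpha}{\beta}$. After $\beta$-reduction both sides become $\Nlet{y}{u_i}{\cdots}$; by induction on the reduction of $u_1$ (appealing to the IH of the outer induction) one obtains $u_2$ reducing to the same $\metanat{i}$, since numerals are only related to themselves. At that point the left-hand side reduces (possibly updating $\Mcn$) to $\NUPRLapppar{\alpha}{\metanat{i}}$ and the right to $\NUPRLapppar{\beta}{\metanat{i}}$. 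Using the hypothesis that $\Mgetchoice{\Mw_0}{\Mcn} < k{+}1$ in every world along the left computation, and inspecting the two arms of the $\Niflt{\_}{\_}{\_}{\_}$, I conclude $i < k{+}1$: either $\Mcn$ was just set to $i$ (so $i < k{+}1$ by the bound in that intermediate world), or $i$ was already $\le$ the current value of $\Mcn$ (which is $< k{+}1$). Since $\alpha,\beta$ are pure and agree up to $k$, $\NUPRLapppar{\alpha}{\metanat{i}}$ and $\NUPRLapppar{\beta}{\metanat{i}}$ are (syntactically) equal name-free terms, so they are trivially related by $\MsimforceSYMB$.

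The main obstacle is bookkeeping around the world-update asymmetry between the two computations: the left-hand reduction may modify $\Mw$ while the right-hand one must stay at $\Mw$. The cleanest way to handle this is to prove up front an auxiliary ``world-irrelevance'' lemma for name-free terms, so that reductions of $t_2$ can be freely relocated to whatever world the left-hand computation currently sits in, and to separate the single-step simulation lemma from the iteration so that the $\Mcn$-bound hypothesis, which is stated globally over the whole computation, gets transported correctly to each intermediate step when invoked in the head-redex case above.
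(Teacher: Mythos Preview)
Your overall plan (induction on the length of the left-hand computation, with a structural case analysis and special handling of the $\Nupd$/$\Nforce$ applications) matches what the paper does. There is, however, a concrete error in the crux case that breaks the argument as you have written it.

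You claim that ``since $\alpha,\beta$ are pure and agree up to $k$, $\NUPRLapppar{\alpha}{\metanat{i}}$ and $\NUPRLapppar{\beta}{\metanat{i}}$ are (syntactically) equal name-free terms, so they are trivially related by $\MsimforceSYMB$''. This is false: ``agree up to $k$'' is the \emph{semantic} hypothesis $\Mintype{\Mw}{\alpha}{\beta}{\Nbaire_n}$, not syntactic identity. In general $\alpha$ and $\beta$ are distinct closed pure terms, and for name-free terms $\MsimforceSYMB$ collapses to syntactic equality, so $\NUPRLapppar{\alpha}{\metanat{i}}\not\approx\NUPRLapppar{\beta}{\metanat{i}}$. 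Consequently you cannot simply resume the iteration at this point. Relatedly, your single-step preservation lemma is actually false at the $\Nupd$/$\Nforce$ $\beta$-step: after one left step you obtain $\Nlet{y}{u_1}{(\Nsq{(\Niflt{\Nread{\Mcn}}{y}{\Nchoose{\Mcn}{y}}{\Naxiom})}{\NUPRLapppar{\alpha}{y}})}$, and no term reachable from $\NUPRLapp{\Nforce{\beta}}{u_2}$ has a matching shape under $\MsimforceSYMB$.

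The fix is to treat the entire $\Nupd$/$\Nforce$ block as one macro-step: consume from the left computation not only the $\beta$-step, the evaluation of $u_1$ to $\metanat{i}$, and the $\Niflt$/$\NchooseSYMB$ logic, but also the full reduction of $\NUPRLapppar{\alpha}{\metanat{i}}$ to a numeral $\metanat{m}$ (which exists and is world-independent since $\alpha$ is pure and $\alpha\in\Nbaire$). On the right, run $\NUPRLapp{\Nforce{\beta}}{u_2}$ down to $\NUPRLapppar{\beta}{\metanat{i}}$ and then to the same $\metanat{m}$, using the semantic agreement at $i\le k$. Only now do you have $\metanat{m}\approx\metanat{m}$ and can invoke the outer induction hypothesis on the strictly shorter remaining left computation. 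This also clarifies why the lemma is used (and, in the formalisation, effectively stated) with $t'_1$ a value: if the left computation were allowed to stop in the middle of evaluating $\NUPRLapppar{\alpha}{\metanat{i}}$, there would be no $\MsimforceSYMB$-related point on the right to stop at.
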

Therefore, because
$\Msimforce{\NUPRLapppar{F}{\Nupd{\Mcn}{\alpha}}}{\NUPRLapppar{F}{\Nforce{\beta}}}{\Mcn}{\alpha}{\beta}$
(as $F$ is name-free) and $\NUPRLapppar{F}{\Nupd{\Mcn}{\alpha}}$
computes to $\metanat{n}$, using \Cref{lem:sim-force-pres} it must be
that $\NUPRLapppar{F}{\Nforce{\beta}}$ also computes to $\metanat{n}$,
which concludes our proof of \Cref{def:continuity+realizer}, i.e., of
the validity of the continuity principle for pure functions.

\section{Conclusion and Related Works}
\label{sec:related-work}

The \Btt family of type theories is a uniform, flexible framework for studying effectful type theories. 
In this paper we focused on identifying a  subset of the \Btt family that allows for the internal validation of the continuity principle. 
That is, we have shown how, for this subset of theories, the modulus of
continuity of functions can be computed using an
expression of the underlying computation system.
This internalization required stateful computations, and we discuss some of the challenges arising from such impure computations.
As mentioned in the introduction, and as recalled below, this is not the first proof of
continuity, however to the best of our knowledge, this is the first
proof of an ``internal'' validity proof of continuity that relies on
stateful computations.
Furthermore, the proof presented above relies on an ``internal''
notion of probing through the use of stateful computations internal to
the computation language of the type theory, while approaches such as~
\cite{Coquand+Jaber:2010,Coquand+Jaber:2012,Xu+Escardo:2013} rely on a
meta-theoretic (or ``external'') notion of probing.

Troelstra proved in~\cite[p.158]{Troelstra:1973} that every closed
term $t\in\Nfun{\Nbaire}{\nat}$ of N-HA$^{\omega}$ has a provable
modulus of continuity in N-HA$^{\omega}$---see also~\cite{Beeson:1985}
for similar proofs of the consistency of continuity with various
constructive theories.

Coquand and Jaber~\cite{Coquand+Jaber:2010,Coquand+Jaber:2012} proved
the \emph{uniform} continuity of a Martin-L\"of-like intensional type
theory using forcing.
Their method consists in adding a generic element $\mathtt{f}$ as a
constant to the language that stands for a Cohen real of type
$2^{\nat}$, and defining the forcing conditions as approximations of
$\mathtt{f}$. They then define a suitable \emph{computability}
predicate that expresses when a term is a computable term of some type
up to approximations given by the forcing conditions.
The key steps are to (1) first prove that $\mathtt{f}$ is computable
and then (2) prove that well-typed terms are computable, from which
they derive uniform continuity: the uniform modulus of continuity is
given by the approximations.

Similarly, Escard\'o and Xu~\cite{Xu+Escardo:2013} proved that the
definable functionals of G\"odel's
system~T~\cite{Girad+Taylor+Lafont:1989} are uniformly continuous on
the Cantor space $\Ncantor\DEF\Mfun{\nat}{\bool}$ (without assuming classical logic or the Fan
Theorem).
For that, they developed the C-Space category, which internalizes
continuity, and has a \emph{Fan functional} which computes the modulus
of uniform continuity of functions in $\NUPRLfun{\Ncantor}{\nat}$.
Relating C-Space and the standard set-theoretical model of system~T,
they show that all System~T functions on the Cantor space are
uniformly continuous. Furthermore, using this model, they show how to
extract computational content from proofs in HA$^{\omega}$ extended
with a uniform continuity axiom, which is realized by the Fan
functional.

In~\cite{Escardo:2013}, Escard\'o proves that all System~T
functions are continuous on the Baire space and uniformly continuous
on the Cantor space without using forcing.
Instead, he provides an alternative interpretation of system~T, where
a number is interpreted by a \emph{dialogue tree}, which captures the
computation of a function w.r.t.\ an oracle of type $\Nbaire$.
Escard\'o first proves that such computations are continuous, and then
by defining a suitable relation between the standard interpretation
and the alternative one, that relates the interpretations of all
system~T terms, derives that for all system~T functions on the Baire
space are continuous.
While in~\cite{Escardo:2013}, dialogue trees are constructed and leave
outside of System~T, in~\cite{Escardo:internal-dialogue:2013},
Escard\'o showed that it is possible to derive System~T definable
Church-encoded dialogue trees, albeit still deriving these trees
outside of System~T (through a metatheoretic induction on System~T
terms).

Chuangjie later developed in~\cite{Chuangjie:lmcs:2020} a related
syntactic approach, that does not rely on dialogue trees, which allows
recovering the continuity of System-T functionals, as well as the fact
that moduli of continuity are T-definable.

In~\cite{Rahli+Bickford:cpp:2016,Rahli+Bickford:mscs:2017}, the
authors proved that Brouwer's continuity principle is consistent with
Nuprl~\cite{Constable+al:1986,Allen+al:2006} by realizing the modulus
of continuity of functions on the Baire space also using Longley's
method~\cite{Longley:1999}, but using exceptions instead of
references. The realizer there is more complicated than the one
presented in this paper as it involves an effectful computation that
repeatedly checks whether a given number is at least as high as the
modulus of continuity, and increasing that number until the modulus of
continuity is reached.
We do not require such a loop, and can directly extract the modulus of
continuity of a function.

In~\cite{Baillon+Mahboubi+Pedrot:csl:2022} the authors prove that all
BTT~\cite{Pedrot+Tabareau:lics:2017} functions are continuous by
generalizing the method used in~\cite{Escardo:2013}. Their model is
built in three steps as follows: an axiom model/translation adds an
oracle to the theory at hand; a branching model/translation interprets
types as intensional D-algebras, i.e., as types equipped with pythias;
and an algebraic parametricity model/translation that relates the two
previous translations by relating the calls to the pythia to the
oracle. Their models allows deriving that all functions are
continuous, but does not allow ``internalizing'' the continuity
principle, which is the goal of this paper.

Another version of the continuity principle,  the Inductive Continuity
Principle, has also been explored recently~\cite{Ghani+Hancock+Pattinson:cmcs:2006,Ghani+Hancock+Pattinson:lmcs:2009,Ghani+Hancock+Pattinson:mfps:2009,Escardo:2013}.
This principle is sometimes referred to as the Strong Continuity
Principle since it implies the continuity principle discussed in this
paper (and other variants). Roughly speaking, it states that for any
function $F$ from the Baire space to numbers, there exists a
(dialogue) tree that contains the values of $F$ at its leaves and such
that the modulus of $F$ at each point of the Baire space is given by
the length of the corresponding branch in the tree.
In~\cite{CRRT2023indCont} we have identified a subset of the \Btt
family that allows for the internal validation of the Inductive
Continuity Principle via computations that construct such dialogue
trees internally to the theories using effectful computations.
To prove finiteness of the computed trees and
termination of the overall program we there had to resort to (meta-)classical reasoning, and it remains to be seen if this can be avoided.

\bibliographystyle{alphaurl}
\bibliography{bibFCS,biblio,biblio2}

\ifx\tAppA\tAppB%

\newpage
\appendix

\section{Inference Rules}
\label{sec:inf-rules-appx}

The following provides a sample of \Btt's key inference rules.
In what follows, we write $\NUPRLmember{A}{a}$ for
$\NUPRLequality{A}{a}{a}$.
While similar rules had been formally verified
in~\cite{Anand+Rahli:2014} for a precursor of \Btt, the formal
verification of the validity of these rules w.r.t.\ \Btt's semantics
is left for future work.

\intitleb{Products}
The following rules are the standard $\NproductSYMB$-elimination rule,
$\NproductSYMB$-introduction rule, type equality for $\NproductSYMB$
types, and $\lambda$-introduction rule, respectively.
$$\begin{scriptsize}\begin{array}{l}\infer[]
{\Nsequentext{\NUPRL{H},\NUPRLhyp{f}{\Nproduct{x}{A}{B}},\NUPRL{J}}{C}{\NUPRLsuba{e}{z}{\Naxiom}}}
{
  \Nsequent{\NUPRL{H},\NUPRLhyp{f}{\Nproduct{x}{A}{B}},\NUPRL{J}}{\NUPRLmember{A}{a}}
  &
  \Nsequentext{\NUPRL{H},\NUPRLhyp{f}{\Nproduct{x}{A}{B}},\NUPRL{J},\NUPRLhyp{z}{\NUPRLmember{\NUPRLsuba{B}{x}{a}}{\NUPRLapppar{f}{a}}}}{C}{e}
}
\hspace{0.2in}
\infer[]
{\Nsequentext{\NUPRL{H}}{\Nproduct{x}{A}{B}}{\NUPRLlam{z}{b}}}
{
  \Nsequentext{\NUPRL{H},\NUPRLhyp{z}{A}}{\NUPRLsuba{B}{x}{z}}{b}
  &
  \Nsequent{\NUPRL{H}}{\NUPRLmember{\NUPRLuniverse{i}}{A}}
}\\ \\
\infer[]
{\Nsequent{\NUPRL{H}}{\NUPRLequality{\NUPRLuniverse{i}}{\Nproduct{x_1}{A_1}{B_1}}{\Nproduct{x_2}{A_2}{B_2}}}}
{
  \Nsequent{\NUPRL{H}}{\NUPRLequality{\NUPRLuniverse{i}}{A_1}{A_2}}
  &
  \Nsequent{\NUPRL{H},\NUPRLhyp{y}{A_1}}{\NUPRLequality{\NUPRLuniverse{i}}{\NUPRLsuba{B_1}{x_1}{y}}{\NUPRLsuba{B_2}{x_2}{y}}}
}\hspace{0.2in}
\infer[]
{\Nsequent{\NUPRL{H}}{\NUPRLequality{\Nproduct{x}{A}{B}}{\NUPRLlam{x_1}{t_1}}{\NUPRLlam{x_2}{t_2}}}}
{
  \Nsequent{\NUPRL{H},\NUPRLhyp{z}{A}}{\NUPRLequality{\NUPRLsuba{B}{x}{z}}{\NUPRLsuba{t_1}{x_1}{z}}{\NUPRLsuba{t_2}{x_2}{z}}}
  &
  \Nsequent{\NUPRL{H}}{\NUPRLmember{\NUPRLuniverse{i}}{A}}
}\end{array}\end{scriptsize}$$
Note that the last rule requires to prove that $A$ is a type because
the conclusion requires to prove that $\Nproduct{x}{A}{B}$ is a type,
and the first hypothesis only states that $B$ is a type family
over~$A$, but does not ensures that $A$ is a type.
Furthermore, the following rules are the standard function extensionality and
$\beta$-computation rules, respectively:
$$\begin{scriptsize}\begin{array}{l}\infer[]
{\Nsequent{\NUPRL{H}}{\NUPRLequality{\Nproduct{x}{A}{B}}{f_1}{f_2}}}
{
  \Nsequent{\NUPRL{H},\NUPRLhyp{z}{A}}{\NUPRLequality{\NUPRLsuba{B}{x}{z}}{\NUPRLapppar{f_1}{z}}{\NUPRLapppar{f_2}{z}}}
  &
  \Nsequent{\NUPRL{H}}{\NUPRLmember{\NUPRLuniverse{i}}{A}}
}
\hspace{0.2in}
\infer[]
{\Nsequent{\NUPRL{H}}{\NUPRLequality{T}{\NUPRLapp{(\NUPRLlam{x}{t})}{s}}{u}}}
{\Nsequent{\NUPRL{H}}{\NUPRLequality{T}{\NUPRLsuba{t}{x}{s}}{u}}}\end{array}\end{scriptsize}$$

\intitleb{Sums}
The following rules are the standard $\NsumSYMB$-elimination rule,
$\NsumSYMB$-introduction rule, type equality for the $\NsumSYMB$ type,
pair-introduction, and spread-computation rules, respectively:
$$\begin{scriptsize}\begin{array}{l}\infer[]
{\Nsequentext{\NUPRL{H},\NUPRLhyp{p}{\Nnoreadwritemod{\Nsum{x}{A}{B}}},\NUPRL{J}}{C}{\NUPRLspread{p}{a}{b}{e}}}
{\Nsequentext{\NUPRL{H},\NUPRLhyp{p}{\Nnoreadwritemod{\Nsum{x}{A}{B}}},\NUPRLhyp{a}{A},\NUPRLhyp{b}{\NUPRLsuba{B}{x}{a}},\NUPRLsuba{\NUPRL{J}}{p}{\NUPRLpair{a}{b}}}{\NUPRLsuba{C}{p}{\NUPRLpair{a}{b}}}{e}}
\hspace{0.2in}
\infer[]
{\Nsequentext{\NUPRL{H}}{\Nsum{x}{A}{B}}{\NUPRLpair{a}{b}}}
{
  \Nsequent{\NUPRL{H}}{\NUPRLmember{A}{a}}
  &
  \Nsequent{\NUPRL{H}}{\NUPRLmember{\NUPRLsuba{B}{x}{a}}{b}}
  &
  \Nsequent{\NUPRL{H},\NUPRLhyp{z}{A}}{\NUPRLmember{\NUPRLuniverse{i}}{\NUPRLsuba{B}{x}{z}}}
}\\ \\
\infer[]
{\Nsequent{\NUPRL{H}}{\NUPRLequality{\NUPRLuniverse{i}}{\Nsum{x_1}{A_1}{B_1}}{\Nsum{x_2}{A_2}{B_2}}}}
{
  \Nsequent{\NUPRL{H}}{\NUPRLequality{\NUPRLuniverse{i}}{A_1}{A_2}}
  &
  \Nsequent{\NUPRL{H},\NUPRLhyp{y}{A_1}}{\NUPRLequality{\NUPRLuniverse{i}}{\NUPRLsuba{B_1}{x_1}{y}}{\NUPRLsuba{B_2}{x_2}{y}}}
}
\hspace{0.1in}
\infer[]
{\Nsequent{\NUPRL{H}}{\NUPRLequality{\Nsum{x}{A}{B}}{\NUPRLpair{a_1}{b_1}}{\NUPRLpair{a_2}{b_2}}}}
{
  \Nsequent{\NUPRL{H},\NUPRLhyp{z}{A}}{\NUPRLmember{\NUPRLuniverse{i}}{\NUPRLsuba{B}{x}{z}}}
  &
  \Nsequent{\NUPRL{H}}{\NUPRLequality{A}{a_1}{a_2}}
  &
  \Nsequent{\NUPRL{H}}{\NUPRLequality{\NUPRLsuba{B}{x}{a_1}}{b_1}{b_2}}
}
\\ \\
\infer[]
{\Nsequent{\NUPRL{H}}{\NUPRLequality{T}{\NUPRLspread{\NUPRLpair{s}{t}}{x}{y}{u}}{t_2}}}
{\Nsequent{\NUPRL{H}}{\NUPRLequality{T}{\NUPRLsubb{u}{x}{s}{y}{t}}{t_2}}}
\end{array}\end{scriptsize}$$

\intitleb{Disjoint Unions}
The following rules are the disjoint union-elimination, disjoint
union-introduction (left and right), type equality for disjoint
unions, injection-introduction (left and right), and decide-computation
(left and right) rules, respectively:
$$\begin{scriptsize}\begin{array}{l}\infer[]
{\Nsequentext{\NUPRL{H},\NUPRLhyp{d}{\Nnoreadwritemod{\NUPRLunion{A}{B}}},\NUPRL{J}}{C}{\NUPRLdecide{d}{x}{t}{y}{u}}}
{
  \begin{array}{l}
    \Nsequentext{\NUPRL{H},\NUPRLhyp{d}{\Nnoreadwritemod{\NUPRLunion{A}{B}}},\NUPRLhyp{x}{A},\NUPRLsuba{\NUPRL{J}}{d}{\NUPRLinl{x}}}{\NUPRLsuba{C}{d}{\NUPRLinl{x}}}{t}
    \hspace{0.2in}
    \Nsequentext{\NUPRL{H},\NUPRLhyp{d}{\Nnoreadwritemod{\NUPRLunion{A}{B}}},\NUPRLhyp{y}{B},\NUPRLsuba{\NUPRL{J}}{d}{\NUPRLinr{y}}}{\NUPRLsuba{C}{d}{\NUPRLinr{y}}}{u}
  \end{array}
}
\\ \\
\infer[]
{\Nsequentext{\NUPRL{H}}{\NUPRLunion{A}{B}}{\NUPRLinl{a}}}
{
  \Nsequentext{\NUPRL{H}}{A}{a}
  &
  \Nsequent{\NUPRL{H}}{\NUPRLmember{\NUPRLuniverse{i}}{B}}
}
\hspace{0.2in}
\infer[]
{\Nsequentext{\NUPRL{H}}{\NUPRLunion{A}{B}}{\NUPRLinr{b}}}
{
  \Nsequentext{\NUPRL{H}}{B}{b}
  &
  \Nsequent{\NUPRL{H}}{\NUPRLmember{\NUPRLuniverse{i}}{A}}
}
\hspace{0.2in}
\infer[]
{\Nsequent{\NUPRL{H}}{\NUPRLequality{\NUPRLuniverse{i}}{\NUPRLunion{A_1}{B_1}}{\NUPRLunion{A_2}{B_2}}}}
{
  \Nsequent{\NUPRL{H}}{\NUPRLequality{\NUPRLuniverse{i}}{A_1}{A_2}}
  &
  \Nsequent{\NUPRL{H}}{\NUPRLequality{\NUPRLuniverse{i}}{B_1}{B_2}}
}
\\ \\
\infer[]
{\Nsequent{\NUPRL{H}}{\NUPRLequality{\NUPRLunion{A}{B}}{\NUPRLinl{a_1}}{\NUPRLinl{a_2}}}}
{
  \Nsequent{\NUPRL{H}}{\NUPRLequality{A}{a_1}{a_2}}
  &
  \Nsequent{\NUPRL{H}}{\NUPRLmember{\NUPRLuniverse{i}}{B}}
}
\hspace{0.2in}
\infer[]
{\Nsequent{\NUPRL{H}}{\NUPRLequality{\NUPRLunion{A}{B}}{\NUPRLinr{b_1}}{\NUPRLinr{b_2}}}}
{
  \Nsequent{\NUPRL{H}}{\NUPRLequality{B}{b_1}{b_2}}
  &
  \Nsequent{\NUPRL{H}}{\NUPRLmember{\NUPRLuniverse{i}}{A}}
}
\\ \\
\infer[]
{\Nsequent{\NUPRL{H}}{\NUPRLequality{T}{(\NUPRLdecide{\NUPRLinl{s}}{x}{t}{y}{u})}{t_2}}}
{\Nsequent{\NUPRL{H}}{\NUPRLequality{T}{\NUPRLsuba{t}{x}{s}}{t_2}}}
\hspace{0.2in}
\infer[]
{\Nsequent{\NUPRL{H}}{\NUPRLequality{T}{(\NUPRLdecide{\NUPRLinr{s}}{x}{t}{y}{u})}{t_2}}}
{\Nsequent{\NUPRL{H}}{\NUPRLequality{T}{\NUPRLsuba{u}{y}{s}}{t_2}}}
\end{array}\end{scriptsize}$$

\intitleb{Equality}
The following rules are the standard equality-introduction rule,
equality-elimination rule,
hypothesis rule, symmetry and
transitivity rules, respectively.
$$\begin{scriptsize}\begin{array}{l}\infer[]
{\Nsequent{\NUPRL{H}}{\NUPRLequality{\NUPRLuniverse{i}}{\NUPRLequalityb{A}{a_1}{a_2}}{\NUPRLequalityb{B}{b_1}{b_2}}}}
{
  \Nsequent{\NUPRL{H}}{\NUPRLequality{\NUPRLuniverse{i}}{A}{B}}
  &
  \Nsequent{\NUPRL{H}}{\NUPRLequality{A}{a_1}{b_1}}
  &
  \Nsequent{\NUPRL{H}}{\NUPRLequality{A}{a_2}{b_2}}
}
\hspace{0.2in}
\infer[]
{\Nsequentext{\NUPRL{H},\NUPRLhyp{z}{\NUPRLequality{A}{a}{b}},\NUPRL{J}}{C}{e}}
{
  \Nsequentext{\NUPRL{H},\NUPRLhyp{z}{\NUPRLequality{A}{a}{b}},\NUPRLsuba{\NUPRL{J}}{z}{\Naxiom}}{\NUPRLsuba{C}{z}{\Naxiom}}{e}
}
\\ \\
\infer[]
{\Nsequent{\NUPRL{H},\NUPRLhyp{x}{A},\NUPRL{J}}{\NUPRLmember{A}{x}}}
{}
\hspace{0.2in}
\infer[]
{\Nsequent{\NUPRL{H}}{\NUPRLequality{T}{a}{b}}}
{\Nsequent{\NUPRL{H}}{\NUPRLequality{T}{b}{a}}}
\hspace{0.2in}
\infer[]
{\Nsequent{\NUPRL{H}}{\NUPRLequality{T}{a}{b}}}
{
  \Nsequent{\NUPRL{H}}{\NUPRLequality{T}{a}{c}}
  &
  \Nsequent{\NUPRL{H}}{\NUPRLequality{T}{c}{b}}
}\end{array}\end{scriptsize}$$

The following rules allow fixing the realizer of a sequent, and
rewriting with an equality in an hypothesis, respectively:
$$\begin{scriptsize}\begin{array}{l}\infer[]
{\Nsequent{\NUPRL{H}}{\NUPRLmember{T}{t}}}
{\Nsequentext{\NUPRL{H}}{T}{t}}
\hspace{0.2in}
\infer[]
{\Nsequentext{\NUPRL{H},\NUPRLhyp{x}{A},\NUPRL{J}}{C}{t}}
{
  \Nsequentext{\NUPRL{H},\NUPRLhyp{x}{B},\NUPRL{J}}{C}{t}
  &
  \Nsequent{\NUPRL{H}}{\NUPRLequality{\NUPRLuniverse{i}}{A}{B}}
}\end{array}\end{scriptsize}$$

\intitleb{Universes}
Let $i$ be a lower universe than $j$.
The following rules are the standard universe-introduction
rule and the universe cumulativity rule, respectively.
$$\begin{scriptsize}\infer[]
{\Nsequent{\NUPRL{H}}{\NUPRLequality{\NUPRLuniverse{j}}{\NUPRLuniverse{i}}{\NUPRLuniverse{i}}}}
{}
\qquad\qquad\qquad
\infer[]
{\Nsequent{\NUPRL{H}}{\NUPRLmember{\NUPRLuniverse{j}}{T}}}
{\Nsequent{\NUPRL{H}}{\NUPRLmember{\NUPRLuniverse{i}}{T}}}
\end{scriptsize}$$

\intitleb{Sets}
The following rule is the standard set-elimination rule:
$$\begin{scriptsize}\infer[]
{\Nsequentext{\NUPRL{H},\NUPRLhyp{z}{\NUPRLset{x}{A}{B}},\NUPRL{J}}{C}{\NUPRLsuba{e}{a}{z}}}
{\Nsequentext{\NUPRL{H},\NUPRLhyp{z}{\NUPRLset{x}{A}{B}},\NUPRLhyp{a}{A},\NUPRLhhyp{b}{\NUPRLsuba{B}{x}{a}},\NUPRLsuba{\NUPRL{J}}{z}{a}}{\NUPRLsuba{C}{z}{a}}{e}}\end{scriptsize}$$
Note that we have used a new construct in the above rule: the
\emph{hidden} hypothesis $\NUPRLhhyp{b}{\NUPRLsuba{B}{x}{a}}$. The
main feature of hidden hypotheses is that their names cannot occur in
realizers (which is why we ``box'' those hypotheses). Intuitively, this
is because the proof that $B$ is true is discarded in the proof that
the set type $\NUPRLset{x}{A}{B}$ is true and therefore cannot occur
in computations.
Hidden hypotheses can be unhidden using the following rule:
$$\begin{scriptsize}\infer[]
{\Nsequentext{\NUPRL{H},\NUPRLhhyp{x}{T},\NUPRL{J}}{\NUPRLequality{A}{a}{b}}{\Naxiom}}
{\Nsequentext{\NUPRL{H},\NUPRLhyp{x}{T},\NUPRL{J}}{\NUPRLequality{A}{a}{b}}{\Naxiom}}\end{scriptsize}$$
which is valid since the realizer is $\Naxiom$ and therefore does not
make use of $x$.

The following rules are the standard set-introduction rule, type
equality for the set type, and introduction rule for members of set
types, respectively.
$$\begin{scriptsize}\begin{array}{l}\infer[]
{\Nsequentext{\NUPRL{H}}{\NUPRLset{x}{A}{B}}{a}}
{
  \Nsequent{\NUPRL{H}}{\NUPRLmember{A}{a}}
  &
  \Nsequent{\NUPRL{H}}{\NUPRLsuba{B}{x}{a}}
  &
  \Nsequent{\NUPRL{H},\NUPRLhyp{z}{A}}{\NUPRLmember{\NUPRLuniverse{i}}{\NUPRLsuba{B}{x}{z}}}
}
\hspace{0.1in}
\infer[]
{\Nsequent{\NUPRL{H}}{\NUPRLequality{\NUPRLuniverse{i}}{\NUPRLset{x_1}{A_1}{B_1}}{\NUPRLset{x_2}{A_2}{B_2}}}}
{
  \Nsequent{\NUPRL{H}}{\NUPRLequality{\NUPRLuniverse{i}}{A_1}{A_2}}
  &
  \Nsequent{\NUPRL{H},\NUPRLhyp{y}{A_1}}{\NUPRLequality{\NUPRLuniverse{i}}{\NUPRLsuba{B_1}{x_1}{y}}{\NUPRLsuba{B_2}{x_2}{y}}}
}
\\ \\
\infer[]
{\Nsequent{\NUPRL{H}}{\NUPRLequality{\NUPRLset{x}{A}{B}}{a}{b}}}
{
  \Nsequent{\NUPRL{H},\NUPRLhyp{z}{A}}{\NUPRLmember{\NUPRLuniverse{i}}{\NUPRLsuba{B}{x}{z}}}
  &
  \Nsequent{\NUPRL{H}}{\NUPRLequality{A}{a}{b}}
  &
  \Nsequent{\NUPRL{H}}{\NUPRLsuba{B}{x}{a}}
}\end{array}\end{scriptsize}$$

\intitleb{Intersection}
The following rules are the intersection elimination and introduction
rules:
$$\begin{scriptsize}\begin{array}{l}
\infer[]
{\Nsequentext{\NUPRL{H},\NUPRLhyp{x}{\Nisect{A}{B}},\NUPRL{J}}{C}{\NUPRLsuba{\NUPRLsuba{t}{y}{x}}{z}{\Naxiom}}}
{
  \Nsequentext{\NUPRL{H},\NUPRLhyp{x}{\Nisect{A}{B}},\NUPRLhyp{y}{A},\NUPRLhyp{z}{\NUPRLequality{A}{y}{x}},\NUPRLsuba{\NUPRL{J}}{x}{y}}{\NUPRLsuba{C}{x}{y}}{t}
}
\hspace{0.2in}
\infer[]
{\Nsequentext{\NUPRL{H},\NUPRLhyp{x}{\Nisect{A}{B}},\NUPRL{J}}{C}{\NUPRLsuba{\NUPRLsuba{t}{y}{x}}{z}{\Naxiom}}}
{
  \Nsequentext{\NUPRL{H},\NUPRLhyp{x}{\Nisect{A}{B}},\NUPRLhyp{y}{B},\NUPRLhyp{z}{\NUPRLequality{B}{y}{x}},\NUPRLsuba{\NUPRL{J}}{x}{y}}{\NUPRLsuba{C}{x}{y}}{t}
}
\\ \\
\infer[]
{\Nsequentext{\NUPRL{H}}{\Nisect{A}{B}}{t}}
{
  \Nsequentext{\NUPRL{H}}{A}{t}
  &
  \Nsequentext{\NUPRL{H}}{b}{t}
}
\hspace{0.2in}
\infer[]
{\Nsequent{\NUPRL{H}}{\NUPRLequality{\NUPRLuniverse{i}}{\Nisect{A_1}{B_1}}{\Nisect{A_2}{B_2}}}}
{
  \Nsequent{\NUPRL{H}}{\NUPRLequality{\NUPRLuniverse{i}}{A_1}{A_2}}
  &
  \Nsequent{\NUPRL{H}}{\NUPRLequality{\NUPRLuniverse{i}}{B_1}{B_2}}
}
\hspace{0.2in}
\infer[]
{\Nsequent{\NUPRL{H}}{\NUPRLequality{\Nisect{A}{B}}{t_1}{t_2}}}
{
  \Nsequent{\NUPRL{H}}{\NUPRLequality{A}{t_1}{t_2}}
  &
  \Nsequent{\NUPRL{H}}{\NUPRLequality{B}{t_1}{t_2}}
}
\end{array}\end{scriptsize}$$

\intitleb{Subsingleton}
The following rules are the subsingleton elimination and introduction
rules:%
$$\begin{scriptsize}\begin{array}{l}\infer[]
{\Nsequent{\NUPRL{H},\NUPRLhyp{x}{\Nqsquash{A}},\NUPRL{J}}{\NUPRLequality{T}{t}{u}}}
{
  \Nsequent{\NUPRL{H},\NUPRLhyp{x}{\Nqsquash{A}},\NUPRL{J}}{\NUPRLmember{\NUPRLuniverse{i}}{T}}
  &
  \Nsequent{\NUPRL{H},\NUPRLhyp{x}{\Nqsquash{A}},\NUPRL{J},\NUPRLhyp{y}{A},\NUPRLhyp{z}{A}}{\NUPRLequality{T}{\NUPRLsuba{t}{x}{y}}{\NUPRLsuba{u}{x}{z}}}
}
\\ \\
\infer[]
{\Nsequentext{\NUPRL{H}}{\Nqsquash{A}}{t}}
{
  \Nsequentext{\NUPRL{H}}{A}{t}
}
\hspace{0.2in}
\infer[]
{\Nsequent{\NUPRL{H}}{\NUPRLequality{\NUPRLuniverse{i}}{\Nqsquash{A_1}}{\Nqsquash{A_2}}}}
{
  \Nsequent{\NUPRL{H}}{\NUPRLequality{\NUPRLuniverse{i}}{A_1}{A_2}}
}
\hspace{0.2in}
\infer[]
{\Nsequent{\NUPRL{H}}{\NUPRLequality{\Nqsquash{A}}{t_1}{t_2}}}
{
  \Nsequent{\NUPRL{H}}{\NUPRLmember{A}{t_1}}
  &
  \Nsequent{\NUPRL{H}}{\NUPRLmember{B}{t_2}}
}
\end{array}\end{scriptsize}$$

\intitleb{Natural Numbers}
The following rules are the rules for $\NUPRLnat$:
$$\begin{scriptsize}\begin{array}{l}\infer[]
{\Nsequentext{\NUPRL{H},\NUPRLhyp{x}{\Nnoreadwritemod{\NUPRLnat}},\NUPRL{J}}{C}{\Nnatrec{x}{b}{\NUPRLlam{y}{\NUPRLlam{r}{s}}}}}
{
  \Nsequentext{\NUPRL{H},\NUPRLhyp{x}{\Nnoreadwritemod{\NUPRLnat}},\NUPRL{J}}{\NUPRLsuba{C}{x}{\metanat{0}}}{b}
  &
  \Nsequentext{\NUPRL{H},\NUPRLhyp{x}{\Nnoreadwritemod{\NUPRLnat}},\NUPRL{J},\NUPRLhyp{y}{\Nnoreadwritemod{\NUPRLnat}},\NUPRLhyp{r}{\NUPRLsuba{C}{x}{y}}}{\NUPRLsuba{C}{x}{\Nsucc{y}}}{s}
}
\\ \\
\infer[]
{\Nsequentext{\NUPRL{H}}{\NUPRLnat}{\metanat{n}}}
{}
\hspace{0.2in}
\infer[]
{\Nsequent{\NUPRL{H}}{\NUPRLequality{\NUPRLuniverse{i}}{\NUPRLnat}{\NUPRLnat}}}
{}
\hspace{0.2in}
\infer[]
{\Nsequent{\NUPRL{H}}{\NUPRLequality{\NUPRLnat}{\metanat{n}}{\metanat{n}}}}
{}
\hspace{0.2in}
\infer[]
{\Nsequent{\NUPRL{H}}{\NUPRLequality{\NUPRLnat}{\Nsucc{t}}{\Nsucc{u}}}}
{
  \Nsequent{\NUPRL{H}}{\NUPRLequality{\NUPRLnat}{t}{u}}
}
\end{array}\end{scriptsize}$$

\intitleb{Effect Restrictions}
The following rules are the rules for $\NnowriteSYMB$, $\NnoreadSYMB$,
$\Npure$:%
$$\begin{scriptsize}\begin{array}{l}
\infer[]
{\Nsequentext{\NUPRL{H}}{\NnowriteSYMB}{t}}
{
  \Nnoreads{t}
}
\hspace{0.2in}
\infer[]
{\Nsequentext{\NUPRL{H}}{\NnoreadSYMB}{t}}
{
  \Nnowrites{t}
}
\hspace{0.2in}
\infer[]
{\Nsequentext{\NUPRL{H}}{\Npure}{t}}
{
  \Nnonames{t}
}
\\ \\
\infer[]
{\Nsequent{\NUPRL{H}}{\NUPRLequality{\NnowriteSYMB}{t_1}{t_2}}}
{
  \Nnowrites{t_1}
  &
  \Nnowrites{t_2}
}
\hspace{0.2in}
\infer[]
{\Nsequent{\NUPRL{H}}{\NUPRLequality{\NnoreadSYMB}{t_1}{t_2}}}
{
  \Nnoreads{t_1}
  &
  \Nnoreads{t_2}
}
\hspace{0.2in}
\infer[]
{\Nsequent{\NUPRL{H}}{\NUPRLequality{\Npure}{t_1}{t_2}}}
{
  \Nnonames{t_1}
  &
  \Nnonames{t_2}
}
\\ \\
\infer[]
{\Nsequent{\NUPRL{H}}{\NUPRLequality{\NUPRLuniverse{i}}{\NnowriteSYMB}{\NnowriteSYMB}}}
{}
\hspace{0.2in}
\infer[]
{\Nsequent{\NUPRL{H}}{\NUPRLequality{\NUPRLuniverse{i}}{\NnoreadSYMB}{\NnoreadSYMB}}}
{}
\hspace{0.2in}
\infer[]
{\Nsequent{\NUPRL{H}}{\NUPRLequality{\NUPRLuniverse{i}}{\Npure}{\Npure}}}
{}
\end{array}\end{scriptsize}$$
Note that all these rules require syntactic checks, where
$\Nnonames{t}$ states that no expression of the form~$\Mcn$ or
$\Nfresh{x}{t_1}$ occurs in $t$;
$\Nnowrites{t}$ states that no expression of the form
$\Nchoose{t_1}{t_2}$ or $\Nfresh{x}{t_1}$ occurs in $t$;
and $\Nnoreads{t}$ states that no expression of the form $\Nread{t_1}$
occurs in $t$.

\else%
\fi

\end{document}